\documentclass[pdflatex,sn-mathphys-num]{sn-jnl}% Math and Physical Sciences Numbered Reference Style
%%\documentclass[pdflatex,sn-mathphys-ay]{sn-jnl}% Math and Physical Sciences Author Year Reference Style
%%\documentclass[pdflatex,sn-aps]{sn-jnl}% American Physical Society (APS) Reference Style
%%\documentclass[pdflatex,sn-vancouver-num]{sn-jnl}% Vancouver Numbered Reference Style
%%\documentclass[pdflatex,sn-vancouver-ay]{sn-jnl}% Vancouver Author Year Reference Style
%%\documentclass[pdflatex,sn-apa]{sn-jnl}% APA Reference Style
%%\documentclass[pdflatex,sn-chicago]{sn-jnl}% Chicago-based Humanities Reference Style

%%%% Standard Packages
%%<additional latex packages if required can be included here>

\usepackage{graphicx}%
\usepackage{multirow}%
\usepackage{amsmath,amssymb,amsfonts}%
\usepackage{amsthm}%
\usepackage{mathrsfs}%
\usepackage[title]{appendix}%
\usepackage{xcolor}%
\usepackage{textcomp}%
\usepackage{manyfoot}%
\usepackage{booktabs}%
\usepackage{algorithm}%
\usepackage{algorithmicx}%
\usepackage{algpseudocode}%
\usepackage{listings}%
%%%%

%%%%%=============================================================================%%%%
%%%%  Remarks: This template is provided to aid authors with the preparation
%%%%  of original research articles intended for submission to journals published 
%%%%  by Springer Nature. The guidance has been prepared in partnership with 
%%%%  production teams to conform to Springer Nature technical requirements. 
%%%%  Editorial and presentation requirements differ among journal portfolios and 
%%%%  research disciplines. You may find sections in this template are irrelevant 
%%%%  to your work and are empowered to omit any such section if allowed by the 
%%%%  journal you intend to submit to. The submission guidelines and policies 
%%%%  of the journal take precedence. A detailed User Manual is available in the 
%%%%  template package for technical guidance.
%%%%%=============================================================================%%%%

%% as per the requirement new theorem styles can be included as shown below
\theoremstyle{thmstyleone}%
\newtheorem{theorem}{Theorem}%  meant for continuous numbers
%%\newtheorem{theorem}{Theorem}[section]% meant for sectionwise numbers
%% optional argument [theorem] produces theorem numbering sequence instead of independent numbers for Proposition
\newtheorem{proposition}[theorem]{Proposition}% 

\theoremstyle{thmstyletwo}%
\newtheorem{remark}{Remark}%

\theoremstyle{thmstylethree}%

\theoremstyle{thmstyletwo}
\newtheorem{lemma}{Lemma}

\raggedbottom
%%\unnumbered% uncomment this for unnumbered level heads

\begin{document}

\title[]{Geometry-dependent Ekman layer approximations
	on curved domains: $L^\infty$ convergence}

%%=============================================================%%
%% GivenName	-> \fnm{Joergen W.}
%% Particle	-> \spfx{van der} -> surname prefix
%% FamilyName	-> \sur{Ploeg}
%% Suffix	-> \sfx{IV}
%% \author*[1,2]{\fnm{Joergen W.} \spfx{van der} \sur{Ploeg} 
%%  \sfx{IV}}\email{iauthor@gmail.com}
%%=============================================================%%
\author[1]{\fnm{Yifei} \sur{Jia}}\email{jiayifei333@163.com}

\author*[2]{\fnm{Yi} \sur{Du}}\email{duyidy@jnu.edu.cn}

\author[1]{\fnm{Lihui} \sur{Guo}}\email{lihguo@126.com}

\affil*[1]{\orgdiv{College of Mathematics and System Science}, \orgname{Xinjiang University}, \orgaddress{\city{Urumqi}, \postcode{830017}, \country{P.R. China}}}

\affil[2]{\orgdiv{Department of Mathematics}, \orgname{Jinan University}, \orgaddress{\city{Guangzhou}, \postcode{510632}, \country{P.R. China}}}

%%==================================%%
%% Sample for unstructured abstract %%
%%==================================%%

\abstract{
The Ekman boundary layer is a fundamental concept in fluid dynamics that describes fluid motion near boundaries affected by Earth's rotation.

 Most theoretical studies have simplified their analysis by assuming a planar boundary surface, resulting in limited exploration of structures with general smooth boundary conditions. Investigating the impact of boundary geometry in the Ekman boundary layer is essential, as initially suggested by J.L. Lions and further examined in Masmoudi's study [Comm. Pure Appl. Math. 53 (2000), 432--483] under small amplitude periodic boundary conditions.

This paper clarifies how boundary geometry influences flow fields and characterizes its effects on near-boundary layer flow. We construct a class of multi-scale approximate solutions based on the boundary's geometric features and establish their convergence in the \(L^\infty\) framework. Our findings do not require a small-amplitude assumption, only an upper bound on the Gaussian curvature of the boundary surface. Notably, when the boundary is planar, our approach aligns with existing studies. Additionally, in the vanishing-viscosity limit, we derive a limiting-state system dependent on boundary geometric parameters. These contributions extend the theoretical understanding of boundary-layer interactions to general curved geometries and have possible applications in atmospheric, oceanic, and other geophysical flow contexts. 
}

\keywords{Ekman boundary layer, non-flat domain, $L^\infty$ convergence}

%%\pacs[JEL Classification]{D8, H51}

\pacs[MSC Classification]{76U05, 76D10}

\maketitle
\section{Introduction}\label{sec1}	
This paper is concerned with the  geophysical fluid dynamics under rotation,  described by the following system:
\begin{equation}\label{1.1}
	\begin{cases}
		\partial_t \boldsymbol{u}^\varepsilon-\mu_{h}\Delta_{h} \boldsymbol{u}^\varepsilon-\mu_{v}\partial_z^2 \boldsymbol{u}^\varepsilon+(\boldsymbol{u}^\varepsilon \cdot \nabla)\boldsymbol{u}^\varepsilon+\varepsilon^{-1}\boldsymbol{R}\boldsymbol{u}^\varepsilon+\varepsilon^{-1}\nabla p^\varepsilon=0,\\
		\nabla\cdot \boldsymbol{u}^\varepsilon=0,
	\end{cases}
\end{equation}
where $(t,\boldsymbol x)\in \mathbb{R^+}\times \Omega$, with $\Omega$ being a domain of $\mathbb{R}^3$ with smooth boudary,
and $\Delta_h$ is defined as the horizontal Laplacian operator.
Here, $\boldsymbol{u}^\varepsilon, p^\varepsilon$, $\mu_h=\mu_h(\varepsilon)>0$, and $\mu_v=\mu_v(\varepsilon)>0$ denote the velocity, pressure of the fluid, horizontal and vertical viscosity coefficients, respectively.
The Coriolis force $\boldsymbol{R}\boldsymbol{u}^\varepsilon$, essential in rotating fluid systems, is incorporated as
\begin{equation*}
	\boldsymbol{R}\boldsymbol{u}^\varepsilon=\begin{pmatrix}
		0&-1&0\\
		1&0&0\\
		0&0&0
	\end{pmatrix}\boldsymbol{u}^\varepsilon\,.
\end{equation*}

For the system \eqref{1.1}, the initial and boundary conditions read as
\begin{align}
	&\boldsymbol{u}^\varepsilon(t,x,y,z)|_{t=0}=\boldsymbol{u}^\varepsilon_0(x,y,z)\, ,\label{1.2}
	\\
	&\boldsymbol{u}^\varepsilon(t,x,y,z)|_{\partial \Omega}=0\, .\label{1.3}
\end{align}
Due to the axisymmetric structure of $\boldsymbol{R}\boldsymbol{u}^\varepsilon$ and following a standard procedure, there exists a weak solution pair $(\boldsymbol{u}^\varepsilon,{p}^\varepsilon)$ for the system \eqref{1.1}-\eqref{1.3} when $\boldsymbol{u}^\varepsilon_0 \in L^2(\Omega)$ (see Ref.~\cite{Chemin2006}).
%\begin{equation*}
%	\|\boldsymbol{u}^\varepsilon\|_{L^2(\Omega)}^2+\int_0^\infty \mu_{h}\|\nabla_{h}\boldsymbol{u}^\varepsilon \|_{L^2(\Omega)}^2 +\mu_{v}\|\partial_z \boldsymbol{u}^\varepsilon\|_{L^2(\Omega)}^2dt \leq \|\boldsymbol{u}_0^\varepsilon\|_{L^2(\Omega)}^2\, .
%\end{equation*}

The Ekman boundary layer (see Refs.~\cite{Greenspan1968,Pedlosky}) is a thin layer of fluid that forms adjacent to a boundary in a rotating fluid and is characterized by friction dominating the region near the boundary. 
%The concept was first introduced in 1905 by the physical oceanographer V. W. Ekman. 
The Ekman boundary layer's dynamics are influenced by various factors, including the fluid's viscosity, topography, wind stress, curvature of the earth, and turbulent frictional stress. 
%Together, these factors determine the complexity and diversity of the Ekman boundary layer, and understanding this layer is important for predicting fluid motion in the ocean and atmosphere.
%Insights from the Prandtl boundary layer theory (see Refs. \cite{guo2023,Liu2014,Liu2019})  offer complementary perspectives, enhancing our understanding of the physical mechanisms in the Ekman boundary layer.

Most investigations in Ekman boundary layer theory rely on the idealized assumption of a flat boundary (see Refs.~\cite{Chemin2006,Grenier1997,Li2019,Masmoudi2000,giga2007,giga2013}). However, natural boundaries often exhibit complex characteristics, featuring diverse seabed topographies and irregular landforms. In the general domain $\Omega$, many unresolved mathematical challenges remain (Chemin et al. \cite{Chemin2006}). Analyzing boundary layers in non-flat cases is significantly more complex than in flat cases due to varying normal directions caused by boundary curvature, which complicates the determination of boundary layer thickness \cite{Gerard2005}. The Ekman boundary layer's complexity is heightened under non-flat geometries. Even for geometrically favorable shapes like spherical shells, Stewartson \cite{Stewartson} found that boundary layer thicknesses ($\varepsilon^{1/3}, \varepsilon^{1/4}, \varepsilon^{2/7},...$) vary with latitude. Additionally, Rousset \cite{Rousset2007} provided approximate boundary layer solutions for well-posed initial values in regions of the unit sphere away from the equator.

Non-flat boundary geometric features can significantly impact boundary layer solutions, a concept first proposed by J. L. Lions and further explored by Masmoudi. In his seminal work \cite{Masmoudi2000}, Masmoudi analyzed the Ekman boundary layer solution in the domain $\mathbb{T}^2 \times [\varepsilon B(x,y), 1]$, where $\varepsilon$ is a small positive parameter and $B(x,y)$ is a periodic smooth function. He established convergence results within the $L^2$ framework, dependent on small boundary amplitude. 
Subsequent studies have continued to investigate boundary layers, as noted in Refs.~\cite{Bresch2004,Constantin2022,Gerard2017,Gerard2003,Rax2019}. 
For the Prandtl boundary layer, Liu and Wang \cite{Liu2014}, utilized curvilinear coordinate systems and multi-scale analysis to study the boundary layer theory of the incompressible Navier-Stokes equations within a two-dimensional curved bounded domain.

Recently, in Ref.~\cite{Jia2024}, the authors  extended the analysis to more general non-flat boundary geometries by removing the restriction of small boundary amplitude. Specifically, for the domain $\Omega = \mathbb{R}^2 \times [B(x, y), B(x, y) + 2]$ with $B(x,y) \in C^\infty(\mathbb{R}^2)$, they justified $L^2$-convergence under certain geometry curvature constraints on the boundary surface.

To facilitate compare the results of flat and non-flat boundary cases, we summarize two related  key findings for the flat boundary case below:

\begin{theorem}[Grenier and Masmoudi \cite{Grenier1997}]\label{thmGrenier}
	Let $\Omega=\mathbb{T}^2\times[0,1]$, 
	$(\boldsymbol{u}^\varepsilon,{p}^\varepsilon)$ be a pair of global weak solutions of system \eqref{1.1}-\eqref{1.3} with initial data $\boldsymbol{u}_0^\varepsilon$.
	Assume $\boldsymbol{u}_0^\varepsilon$ strongly converges to ${\boldsymbol{u}}_0 (x,y)= ({\boldsymbol{u}}_{0,h}, 0)$ in $L^2(\Omega)$,  with $\int_{\mathbb{T}^2} {\boldsymbol{u}}_{0,h} = 0$. 
	Then  following cases hold:
	\begin{enumerate}[1.]
		\item{Anisotropic case with $\mu_h\sim \mathcal{O}(1)$ , $ \mu_v\sim \mathcal{O}(\varepsilon)$ :}
		
		Assume ${\boldsymbol{u}}_0\in H^1(\mathbb{T}^2)$, then
		\begin{align*}
			\boldsymbol{u}^\varepsilon-{\boldsymbol{u}}\rightarrow 0^+\quad\text{in}\quad L^\infty(\mathbb{R_+};L^2(\Omega))
			\,, \\
			\nabla_{ h}(\boldsymbol{u}^\varepsilon-{\boldsymbol{u}})\rightarrow 0^+\quad\text{in}\quad L^2(\mathbb{R_+};L^2(\Omega))\,,
		\end{align*}
		where  $\nabla_h=(\partial_x,\partial_y)^T$, and ${\boldsymbol{u}}=(\boldsymbol{u}_h,0)$ is the solution of the following two-dimenszonnal system with damping
		\begin{equation}\label{1.6}
			\begin{cases}
				\partial_{t}  {\boldsymbol u}_h+( {\boldsymbol u}_h \cdot \nabla_h)  {\boldsymbol u}_h
				-\mu_{h}\Delta_{h} {\boldsymbol{u}}_h
				+\sqrt{2}{\boldsymbol u}_h+\nabla_h p=0\,,\\
				\nabla_h\cdot {\boldsymbol u}_h=0\,,\\
				{\boldsymbol u}_h|_{t=0}= {\boldsymbol u}_{0,h}\,.
			\end{cases}
		\end{equation}
		\item{Isotropic case with $\mu_h \sim \mu_v\sim \mathcal{O}(\varepsilon)$:}
		
		Assume that for $C_0$ is a universal constant such that
		\begin{equation}\label{1.7}
			\lVert{\boldsymbol{u}}_h\rVert_{{L}^{\infty}(\mathbb{T}^2)}\leq
			C_0\,.
		\end{equation}	
		Then 
		\begin{equation}\label{1.8}
			\boldsymbol{u}^\varepsilon-{\boldsymbol{u}}\rightarrow 0^+\quad\text{in}\quad L^\infty(\mathbb{R_+};L^2(\Omega))\,,
		\end{equation}
		and ${\boldsymbol{u}}=(\boldsymbol{u}_h,0)$ satisfies the following two-dimensional damped Euler type system
		\begin{equation}\label{1.9}
			\begin{cases}
				\partial_{t}  {\boldsymbol u}_h+( {\boldsymbol u}_h \cdot \nabla_h)  {\boldsymbol u}_h
				+\sqrt{2}{\boldsymbol u}_h+\nabla_h p=0\,,\\
				\nabla_h\cdot {\boldsymbol u}_h=0\,,\\
				{\boldsymbol u}_h|_{t=0}= {\boldsymbol u}_{0,h}\,.
			\end{cases}
		\end{equation}
	\end{enumerate}
\end{theorem}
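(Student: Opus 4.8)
The plan is to prove this classical flat-boundary result by the energy/modulated-energy method, comparing the weak solution $\boldsymbol{u}^\varepsilon$ against an explicit approximate solution built from the limit profile $\boldsymbol{u}$ plus an Ekman corrector. First I would construct the approximate solution: set $\boldsymbol{u}^\varepsilon_{\mathrm{app}} = \boldsymbol{u}(t,x,y) + \boldsymbol{u}^\varepsilon_{\mathrm{BL}}(t,x,y,z/\delta) + \boldsymbol{u}^\varepsilon_{\mathrm{BL},1}(t,x,y,(1-z)/\delta)$, where $\delta=\sqrt{\mu_v/\varepsilon}$ is the Ekman layer thickness (so $\delta\sim\sqrt{\varepsilon}$ in case 1 and $\delta\sim 1$ in case 2, which is why in the isotropic regime one genuinely gets a different asymptotic system). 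The boundary-layer profiles are the classical exponentially decaying Ekman spirals chosen so that $\boldsymbol{u}^\varepsilon_{\mathrm{app}}$ vanishes on $z=0$ and $z=1$; their divergence is corrected by adding a small vertical component $w^\varepsilon_{\mathrm{BL}}$ of size $\mathcal{O}(\delta)$ obtained from $\nabla_h\cdot$ the horizontal corrector. A direct computation shows that $\boldsymbol{u}^\varepsilon_{\mathrm{app}}$ solves \eqref{1.1} up to a remainder $\boldsymbol{R}^\varepsilon$ which is small in $L^1_t L^2_x$ (the friction term $\sqrt{2}\boldsymbol{u}_h$ in \eqref{1.6}--\eqref{1.9} arises precisely from matching the pumping velocity $w^\varepsilon_{\mathrm{BL}}|_{z=0}$ generated by the Ekman layer back into the interior equation — this is the Ekman pumping condition).

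Next I would set up the energy estimate for the difference $\boldsymbol{v}^\varepsilon = \boldsymbol{u}^\varepsilon - \boldsymbol{u}^\varepsilon_{\mathrm{app}}$. Using that $\boldsymbol{u}^\varepsilon$ is a weak (Leray) solution satisfying the energy inequality and that $\boldsymbol{u}^\varepsilon_{\mathrm{app}}$ satisfies \eqref{1.1} with remainder $\boldsymbol{R}^\varepsilon$, I would derive
\begin{equation*}
\frac{1}{2}\frac{d}{dt}\|\boldsymbol{v}^\varepsilon\|_{L^2}^2 + \mu_h\|\nabla_h\boldsymbol{v}^\varepsilon\|_{L^2}^2 + \mu_v\|\partial_z\boldsymbol{v}^\varepsilon\|_{L^2}^2 \le \Big|\int (\boldsymbol{v}^\varepsilon\cdot\nabla)\boldsymbol{u}^\varepsilon_{\mathrm{app}}\cdot\boldsymbol{v}^\varepsilon\Big| + \Big|\int \boldsymbol{R}^\varepsilon\cdot\boldsymbol{v}^\varepsilon\Big|,
\end{equation*}
where the Coriolis and pressure terms drop by skew-symmetry and incompressibility. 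The first term on the right is the crux: one splits $\boldsymbol{u}^\varepsilon_{\mathrm{app}}$ into its interior part (handled by the Gronwall-friendly bound $\|\boldsymbol{u}\|_{L^\infty}$, which is exactly where hypothesis \eqref{1.7} on $\|\boldsymbol{u}_h\|_{L^\infty}$ enters in the isotropic case, and where $H^1\hookrightarrow L^\infty$ in 2D plus parabolic smoothing of \eqref{1.6} enters in the anisotropic case) and its boundary-layer part. For the boundary-layer part the dangerous term involves $\partial_z\boldsymbol{u}^\varepsilon_{\mathrm{BL}}\sim\delta^{-1}$; one absorbs the vertical derivative of the corrector using the $\mu_v\|\partial_z\boldsymbol{v}^\varepsilon\|_{L^2}^2$ dissipation together with a Hardy-type inequality in $z$ near the boundary (since $\boldsymbol{v}^\varepsilon$ vanishes there), paying a factor $\mu_v\delta^{-2}=\varepsilon$ which is harmless, plus the horizontal tangential pieces $\partial_h\boldsymbol{u}^\varepsilon_{\mathrm{BL}}$ which are $\mathcal{O}(1)$ in $L^2_z$ after integrating the exponential. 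Collecting, one gets $\frac{d}{dt}\|\boldsymbol{v}^\varepsilon\|_{L^2}^2 \le C\|\boldsymbol{v}^\varepsilon\|_{L^2}^2 + \eta(\varepsilon)$ with $\eta(\varepsilon)\to0$ and $\|\boldsymbol{v}^\varepsilon(0)\|_{L^2}\to0$ (from the assumed convergence of the initial data, since the corrector is small in $L^2$), so Gronwall gives $\boldsymbol{v}^\varepsilon\to0$ in $L^\infty_tL^2_x$; adding $\|\boldsymbol{u}^\varepsilon_{\mathrm{app}}-\boldsymbol{u}\|_{L^\infty_tL^2_x}\to0$ (the correctors have $L^2$ norm $\mathcal{O}(\sqrt\delta)$) yields \eqref{1.8}, and in the anisotropic case integrating the dissipation term gives the stated $L^2_tL^2_x$ convergence of $\nabla_h\boldsymbol{v}^\varepsilon$, hence of $\nabla_h(\boldsymbol{u}^\varepsilon-\boldsymbol{u})$.

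The main obstacle I anticipate is controlling the trilinear term $\int(\boldsymbol{v}^\varepsilon\cdot\nabla)\boldsymbol{u}^\varepsilon_{\mathrm{app}}\cdot\boldsymbol{v}^\varepsilon$ uniformly in $\varepsilon$: the vertical-derivative contribution from the boundary-layer corrector scales like $\delta^{-1}$, and closing the estimate requires that this be dominated by the available vertical dissipation — which forces the precise choice $\delta=\sqrt{\mu_v/\varepsilon}$ and, in the isotropic case where there is only $\mathcal{O}(\varepsilon)$ dissipation and $\delta\sim1$, forces us to instead exploit that the corrector itself is $\mathcal{O}(1)$-smooth (no blow-up) but then need the $L^\infty$ control \eqref{1.7} on $\boldsymbol{u}_h$ to run Gronwall, explaining the extra hypothesis. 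A secondary technical point is justifying the energy inequality for the \emph{difference} of a weak solution and a smooth function (it is not automatic that one can test the weak formulation against $\boldsymbol{u}^\varepsilon_{\mathrm{app}}$), which is handled by the standard duality/mollification argument valid because $\boldsymbol{u}^\varepsilon_{\mathrm{app}}$ is smooth and divergence-free with the correct boundary values. Everything else — existence and regularity of $\boldsymbol{u}$ solving \eqref{1.6} or \eqref{1.9}, the exponential decay estimates for the Ekman profiles, and the remainder bounds — is routine.
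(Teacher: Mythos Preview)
The paper does not prove this theorem. It is quoted in the introduction as a classical background result from Grenier and Masmoudi \cite{Grenier1997}, alongside Theorem~\ref{thgong} of Gong--Guo--Wang, to motivate the paper's own results on curved domains (Theorems~\ref{thapp} and~\ref{thm2}). There is therefore no proof in the paper to compare your proposal against.

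That said, your overall architecture --- approximate solution with Ekman correctors at $z=0,1$, energy estimate on $\boldsymbol v^\varepsilon=\boldsymbol u^\varepsilon-\boldsymbol u^\varepsilon_{\mathrm{app}}$, handling the trilinear term $\int(\boldsymbol v^\varepsilon\cdot\nabla)\boldsymbol u^\varepsilon_{\mathrm{app}}\cdot\boldsymbol v^\varepsilon$ by integration by parts plus a Hardy-type bound near the wall, then Gronwall --- is indeed the Grenier--Masmoudi strategy, and is also the template the present paper follows (see the proof of Proposition~\ref{prop5}) in the curved setting.

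Your scaling, however, is wrong in a way that garbles the explanation of why the two cases differ. The boundary-layer balance is $\mu_v\partial_z^2\sim\varepsilon^{-1}$, so $\delta=\sqrt{\mu_v\varepsilon}$, not $\sqrt{\mu_v/\varepsilon}$. Since $\mu_v\sim\varepsilon$ in \emph{both} cases here, one has $\delta\sim\varepsilon$ in both; the corrector always carries $\partial_z\sim\varepsilon^{-1}$, and your claim that in the isotropic case ``$\delta\sim1$'' with an ``$\mathcal O(1)$-smooth'' corrector is incorrect (likewise $\mu_v\delta^{-2}=\varepsilon^{-1}$, not $\varepsilon$). The genuine distinction between the two cases is $\mu_h$, not $\delta$: after the Hardy trick the boundary-layer contribution is bounded by $C\delta\|\boldsymbol u_h\|_{L^\infty}\|\nabla\boldsymbol v^\varepsilon\|_{L^2}^2$ with $\delta/\mu_v\sim\sqrt{\varepsilon/\mu_v}\sim1$, so absorbing it into $\mu_v\|\partial_z\boldsymbol v^\varepsilon\|_{L^2}^2$ requires $\|\boldsymbol u_h\|_{L^\infty}$ small --- this is \eqref{1.7} in the isotropic case, while in the anisotropic case the extra $\mu_h\sim1$ horizontal dissipation (and the parabolic regularity it gives for the limit system \eqref{1.6}) is what lets one close without that smallness and also yields the $L^2_tL^2_x$ convergence of $\nabla_h(\boldsymbol u^\varepsilon-\boldsymbol u)$. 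Once the scaling is corrected, the rest of your plan is sound.
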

\begin{remark}
	Viscosity is essential for verifying the convergence of approximate solutions (\cite{Chemin2006}). In the anisotropic case of Theorem \ref{thmGrenier}, viscous dissipation in the horizontal direction effectively reduces the loss of regularity caused by nonlinear terms, thus ensuring convergence. Conversely, in the isotropic case, the dissipation effect diminishes and fails to provide sufficient regularity to manage the nonlinear term. To ensure convergence, an additional condition \eqref{1.7} must be imposed on the limiting system \eqref{1.9} to address the lack of regularity stemming from vanishing viscosity. 
\end{remark}

The above convergence estimates are established within the $L^2$ framework.
In contrast, for the more physically relevant $L^\infty$ space, Gong, Guo, and Wang \cite{Gong2015} 
addressed the potential singularities in the derivatives of the residuals by constructing higher-order approximate solutions though boundary layer expanding.
Consequently, they obtained the $L^\infty$ convergence estimate for boundary layer dynamics over a flat boundary, enhancing the result of Theorem \ref{thmGrenier} in the anisotropic viscosity scenario.  
The result of Ref.~\cite{Gong2015}  reads as:
\begin{theorem}[Gong, Guo, and Wang \cite{Gong2015}]\label{thgong}
	Let $\Omega=\mathbb{T}^2\times[0,1]$, and let $\boldsymbol{u}^\varepsilon$ denote a family of local strong solutions in $[0, T)$ for the system \eqref{1.1}-\eqref{1.3} (where $\mu_h \sim \mathcal{O}(1)$ and $\mu_v \sim \mathcal{O}(\varepsilon)$) with initial data $\boldsymbol{u}^\varepsilon_0 \in H^6(\Omega)$.
	
	Assume
	\begin{equation}\label{1.101}
		\boldsymbol{u}^\varepsilon_{0}-(\boldsymbol{u}_{0,h},0)\rightarrow 0^+\quad\text{in}\quad  H^6(\Omega)\,,
	\end{equation} 
	then there holds
	\begin{equation}\label{1.10} \lim\limits_{\varepsilon\rightarrow0^+}
		\Big\lVert
		\boldsymbol{u}^\varepsilon
		-
		\boldsymbol{u}_{app}^\varepsilon
		\Big\rVert_{{L}^{\infty}([0,T)\times\Omega)}=0\,,
	\end{equation}
	where $\boldsymbol{u}_{app}^\varepsilon$ is the approximate solution of the form
	\begin{equation}\label{1.30}
		\boldsymbol{u}_{app}^\varepsilon= \sum^{2}_{i=0}\varepsilon^i \boldsymbol{u}^i\big(t,x,y,z,\tfrac{z}{\varepsilon},\tfrac{1-z}{\varepsilon}\big)\,.
	\end{equation}  
	And $\boldsymbol{u}^i(t,x,y,z,\frac{z}{\varepsilon},\frac{1-z}{\varepsilon})$ consists of internal and boundary terms, are constructed via a multiscale method based on the power of $\varepsilon$.
	Meanwhile, the principal part of the approximate solution \eqref{1.30} is the zero-order  interior term $(\boldsymbol{u}_h,0)$, which satisfies the limiting system \eqref{1.6} with initial value $\boldsymbol{u}_{0,h}$.	
\end{theorem}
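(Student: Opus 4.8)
The plan is the classical boundary-layer strategy: build a sufficiently accurate multiscale approximation, derive the equation for the remainder $\boldsymbol{w}^\varepsilon=\boldsymbol{u}^\varepsilon-\boldsymbol{u}^\varepsilon_{app}$, and close an energy estimate in a Sobolev norm strong enough to embed into $L^\infty$. For the construction in \eqref{1.30} I would introduce the two fast variables $\zeta=z/\varepsilon$ and $\xi=(1-z)/\varepsilon$ and write each profile $\boldsymbol{u}^i$ as an interior part $\boldsymbol{u}^i_{int}(t,x,y,z)$ plus exponentially decaying correctors $\boldsymbol{u}^{i,-}_{bl}(t,x,y,\zeta)$, $\boldsymbol{u}^{i,+}_{bl}(t,x,y,\xi)$ localized near $z=0$ and $z=1$. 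Substituting into \eqref{1.1}, applying the chain rule $\partial_z\mapsto\partial_z+\varepsilon^{-1}\partial_\zeta-\varepsilon^{-1}\partial_\xi$ to the composite functions, and collecting powers of $\varepsilon$ yields a hierarchy: at the top order the interior flow is geostrophic, i.e. $\boldsymbol{u}^0_{int}=(\boldsymbol{u}_h(t,x,y),0)$ with pressure independent of $z$ and $\nabla_h\cdot\boldsymbol{u}_h=0$; the horizontal corrector solves the linear Ekman ODE (schematically $-\partial_\zeta^2\boldsymbol{u}^{0,-}_{bl,h}+\boldsymbol{R}_h\boldsymbol{u}^{0,-}_{bl,h}=0$) with the no-slip condition forcing its trace to cancel $\boldsymbol{u}_h|_{z=0}$, giving the explicit profile decaying at rate $1/\sqrt2$; its divergence then forces an $\mathcal{O}(\varepsilon)$ vertical velocity whose matching with the interior at the next order produces the Ekman pumping and hence the $\sqrt2$ damping in \eqref{1.6}. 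The profiles $\boldsymbol{u}^1,\boldsymbol{u}^2$ solve linear transport--diffusion equations (interior) and linear ODEs (correctors) forced by lower-order data; these are solvable because the mean-zero structure on $\mathbb{T}^2$ removes the obstructions, and the local-in-time smoothness of $\boldsymbol{u}_h$ from \eqref{1.6} together with $\boldsymbol{u}_{0,h}\in H^6$ keeps every profile in a high Sobolev class on $[0,T)$.

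Next I would record the remainder system: with $\boldsymbol{w}^\varepsilon=\boldsymbol{u}^\varepsilon-\boldsymbol{u}^\varepsilon_{app}$ and a modified pressure $q^\varepsilon$,
\begin{equation*}
	\partial_t\boldsymbol{w}^\varepsilon-\mu_h\Delta_h\boldsymbol{w}^\varepsilon-\mu_v\partial_z^2\boldsymbol{w}^\varepsilon+(\boldsymbol{w}^\varepsilon\cdot\nabla)\boldsymbol{w}^\varepsilon+(\boldsymbol{u}^\varepsilon_{app}\cdot\nabla)\boldsymbol{w}^\varepsilon+(\boldsymbol{w}^\varepsilon\cdot\nabla)\boldsymbol{u}^\varepsilon_{app}+\varepsilon^{-1}\boldsymbol{R}\boldsymbol{w}^\varepsilon+\varepsilon^{-1}\nabla q^\varepsilon=\mathcal{R}^\varepsilon,
\end{equation*}
with $\nabla\cdot\boldsymbol{w}^\varepsilon=0$, $\boldsymbol{w}^\varepsilon|_{\partial\Omega}=0$, where $\mathcal{R}^\varepsilon$ collects the terms left over by truncating the expansion at order two. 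The reason for carrying the expansion to order two is exactly that, although a vertical derivative hitting a boundary corrector costs a factor $\varepsilon^{-1}$, the prefactors of the first unmatched terms leave $\mathcal{R}^\varepsilon$ bounded by a positive power of $\varepsilon$ in the relevant high Sobolev norm; meanwhile \eqref{1.101} gives $\boldsymbol{w}^\varepsilon|_{t=0}\to0$ in $H^6$, hence in every lower norm.

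The core is the energy estimate. Testing the remainder system against $\boldsymbol{w}^\varepsilon$ and its horizontal derivatives (plus a controlled number of vertical ones), the Coriolis term $\varepsilon^{-1}\boldsymbol{R}\boldsymbol{w}^\varepsilon$ is skew-symmetric and drops out of the $L^2$ pairing --- this is what renders the singular rotation harmless --- the pressure is eliminated against the divergence-free field, and one is left to control the nonlinearity and the coupling with $\boldsymbol{u}^\varepsilon_{app}$ using the dissipation $\mu_h\lVert\nabla_h\boldsymbol{w}^\varepsilon\rVert^2+\varepsilon\lVert\partial_z\boldsymbol{w}^\varepsilon\rVert^2$. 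The interior part of $\boldsymbol{u}^\varepsilon_{app}$ has bounded derivatives and is benign; for the boundary-layer part, whose $z$-derivatives are $\mathcal{O}(\varepsilon^{-1})$ but which is localized on scale $z\sim\varepsilon$, I would use the Hardy inequality coming from $\boldsymbol{w}^\varepsilon|_{\partial\Omega}=0$ (so $\lvert\boldsymbol{w}^\varepsilon(z)\rvert\lesssim z\,\lVert\partial_z\boldsymbol{w}^\varepsilon\rVert$ near $z=0$): the gained factor $z$ beats the $\varepsilon^{-1}$ localization and converts each such term into one absorbed by the vertical dissipation times a small constant. The genuinely three-dimensional nonlinear term is handled by anisotropic Gagliardo--Nirenberg interpolation together with the constraint $\partial_z w^\varepsilon_3=-\nabla_h\cdot\boldsymbol{w}^\varepsilon_h$, which trades the only uncontrolled vertical derivative for horizontal ones covered by $\mu_h\lVert\nabla_h\boldsymbol{w}^\varepsilon\rVert^2$. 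This yields a Gr\"onwall inequality for $\lVert\boldsymbol{w}^\varepsilon(t)\rVert_{H^s}^2+\int_0^t\big(\mu_h\lVert\nabla_h\boldsymbol{w}^\varepsilon\rVert_{H^s}^2+\varepsilon\lVert\partial_z\boldsymbol{w}^\varepsilon\rVert_{H^s}^2\big)$ on $[0,T)$ with small forcing, hence $\lVert\boldsymbol{w}^\varepsilon\rVert_{L^\infty([0,T);H^s)}\to0$; the embedding $H^s(\Omega)\hookrightarrow L^\infty(\Omega)$ for $s\ge2$ then gives \eqref{1.10}, and the derivative losses in building the correctors and in the nonlinear estimates are what force the $H^6$ budget on the data.

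The main obstacle is closing this high-order estimate \emph{uniformly in $\varepsilon$}: the vertical dissipation is degenerate ($\mu_v\sim\mathcal{O}(\varepsilon)$), so the $\varepsilon^{-1}$-singular vertical derivatives of the boundary correctors in the linear coupling, and the vertical derivative generated by the $w_3\partial_z$ part of the nonlinearity, cannot be absorbed naively. The resolution --- and the delicate bookkeeping of the proof --- is the interplay of the Hardy/Poincar\'e inequality near the no-slip boundary with the exponential localization of the correctors, the systematic use of incompressibility to move vertical derivatives onto horizontal ones, and a careful accounting of powers of $\varepsilon$ so that every dangerous contribution carries a compensating positive power; one must also verify at each order of the construction that the Ekman solvability conditions hold, which is where the mean-zero hypothesis on $\boldsymbol{u}_{0,h}$ (inherited from the setting of Theorem \ref{thmGrenier}) enters.
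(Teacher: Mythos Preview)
This theorem is quoted in the paper as a prior result of Gong, Guo, and Wang \cite{Gong2015}; the present paper does not supply its own proof of it, so there is no in-paper argument to compare against directly. Your sketch is a reasonable outline of the classical strategy and is consistent with the remark the authors make immediately after the statement: the point of pushing the expansion to order two is precisely to force the residual $\mathcal{R}^\varepsilon$ to carry a positive power of $\varepsilon$ despite the $\varepsilon^{-1}$ cost of each vertical derivative on a corrector.

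It is, however, worth noting that the route you propose for the error estimate --- close a uniform-in-$\varepsilon$ energy inequality directly in $H^s$ with $s\ge 2$ and then embed into $L^\infty$ --- is \emph{not} the route the present paper takes for its own analogue (Theorem~\ref{thm2}). There the authors avoid high-order spatial energy estimates on the remainder altogether: they first obtain an $L^2$ bound on $\boldsymbol v^\varepsilon$ (Proposition~\ref{prop5}), then an $L^2$ bound on $\partial_t\boldsymbol v^\varepsilon$ (Proposition~\ref{prop8}), combine the two to control $\lVert\nabla\boldsymbol v^\varepsilon\rVert_{L^2}$, and finally read off $\lVert\Delta\boldsymbol v^\varepsilon\rVert_{L^2}$ \emph{from the equation itself} (using the Riesz bound for the pressure), closing with the interpolation $\lVert\boldsymbol v^\varepsilon\rVert_{L^\infty}\lesssim\lVert\boldsymbol v^\varepsilon\rVert_{L^2}^{1/4}\lVert\Delta\boldsymbol v^\varepsilon\rVert_{L^2}^{3/4}$. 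This ``$\partial_t$ plus elliptic regularity'' shortcut sidesteps the bookkeeping you flag as the main obstacle (uniform high-$H^s$ estimates with degenerate vertical viscosity), at the price of needing good control on $\partial_t\boldsymbol\rho^\varepsilon$ and hence one more derivative of regularity on the limiting profile. Your approach, by contrast, would give finer information on spatial derivatives of the remainder but requires the delicate anisotropic commutator and Hardy machinery you describe; both are viable, and which is cleaner depends on whether the application is isotropic or anisotropic in the viscosity.
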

 \begin{remark}
Although Theorem \ref{thgong} addresses the anisotropic case with horizontal dissipation, diffusion alone cannot control the singularity from the remainder terms in the $L^\infty$ framework. Consequently, a higher-order asymptotic expansion of the approximate solution is performed, as detailed in Ref.~\cite{Gong2015}. This requires additional regularity assumptions \eqref{1.101} on the limiting solution to validate the expansion.
 \end{remark}

This paper aims to examine the approximate solution of the Ekman boundary for non-flat domains and justify its convergence in a stronger norm \(L^\infty\).
We shall consider the case of isotropic viscosity ($\mu_h=\mu_v=\nu\varepsilon$) system as below:
\begin{equation}\label{1.11}
	\begin{cases}
		\partial_t \boldsymbol{u}^\varepsilon-\nu\varepsilon \Delta \boldsymbol{u}^\varepsilon+(\boldsymbol{u}^\varepsilon \cdot \nabla)\boldsymbol{u}^\varepsilon+\varepsilon^{-1}\boldsymbol{R}\boldsymbol{u}^\varepsilon+\varepsilon^{-1}\nabla p^\varepsilon=0\,,\\
		\nabla\cdot \boldsymbol{u}^\varepsilon=0\,,\\
		\boldsymbol{u}^\varepsilon|_{\partial\Omega}=0\,,\quad \boldsymbol{u}^\varepsilon|_{t=0}=\boldsymbol{u}_0^\varepsilon\,,
	\end{cases}  (t,x)\in \mathbb{R}^+\times \Omega\,,
\end{equation}
where 
\begin{equation}\label{1.12}
	\Omega=\mathbb{R}^2 \times [B(x, y), B(x, y) + 2]\,,\qquad  B(x,y)\in C^\infty(\mathbb{R}^2)\,.
\end{equation}
%Given the initial data $\boldsymbol{u}^\varepsilon_{0} \in L^2(\Omega)$  for system \eqref{1.11}, as $\varepsilon\rightarrow 0$, we assume 
%\begin{equation}\label{1.13}
%	\boldsymbol{u}_0^\varepsilon\rightarrow \bar{\boldsymbol{u}}_0 \quad \text{strongly in } L^2(\Omega)\,.
%\end{equation}
Noting that $\varepsilon^{-1}\boldsymbol{R}\boldsymbol{u}^\varepsilon$ is skew-symmetric for any fixed $\varepsilon > 0$, it is straightforward to prove that there exists  $T^*>0$ such that a local strong solution to the system \eqref{1.11}-\eqref{1.12} can be established in the interval $[0, T^*]$. This solution specifically satisfies the energy estimate:
\begin{equation}\label{1.13}
	\lVert\boldsymbol{u}^\varepsilon\rVert_{{L}^\infty([0,T^*); {H}^s(\Omega))}^2+\nu\varepsilon\int_0^{T^*}\lVert\nabla\boldsymbol{u}^\varepsilon(t,\cdot)\rVert_{H^s(\Omega)}^2\,dt \lesssim \lVert\boldsymbol{u}^\varepsilon_{0}\rVert_{H^s(\Omega)}^2\,.
\end{equation}

We will first introduce some notations before presenting our main results.

\subsection{Notations}\label{subsec1.1}
In this paper, we use the notations $\nabla=(\partial_x,\partial_y,\partial_z)^T$ and $\nabla^\bot_h=(-\partial_y,\partial_x)^T$. 
We denote the normal vector of the surface $z=B(x,y)$ as
\begin{equation*}
	{\boldsymbol n}_{B}=\Big(\tfrac{-B_x}{\sqrt{1+B_x^2+B_y^2}}, \tfrac{-B_y}{\sqrt{1+B_x^2+B_y^2}},\tfrac{1}{\sqrt{1+B_x^2+B_y^2}}\Big) =(\cos \alpha,\cos \beta,\cos\gamma)\,,
\end{equation*}
and the Hessian matrix of the surface $B(x,y)$  as
\begin{equation*}
	\boldsymbol{H}=\begin{pmatrix}
		B_{xx}&  B_{xy}\\
		B_{xy} & B_{yy}
	\end{pmatrix}.
\end{equation*}
Then it follows that
\begin{equation*}
	\nabla_{ h}\gamma(x,y)={\cos^3\gamma}{\sin^{-1}\gamma}\boldsymbol{H}\nabla_{ h}B\,.
\end{equation*}
The identity matrix is represented as $\boldsymbol{E}$, with other matrices represented by
\begin{align*}
	\boldsymbol{H_0}=
	\begin{pmatrix}
		1+B_x^2& B_xB_y\\
		B_xB_y& 1+B_y^2
	\end{pmatrix},
	\quad
	\boldsymbol{E_1}=
	\begin{pmatrix}
		0&1\\
		-1& 0
	\end{pmatrix}.
\end{align*}
The matrix $ \boldsymbol{H_0} $ is  positive definite and
$
	\det \boldsymbol{H_0}=\cos^{-2}\gamma\,.
$
The Gaussian curvature of the surface $ B(x,y) $ is given by
\begin{equation*}
	\boldsymbol{K}_{\mathrm{G}} = (\det \boldsymbol{H_0})^{-2} \det \boldsymbol{H}\,.
\end{equation*}
%while its mean curvature $ \boldsymbol{K}_{\mathrm{A}} $ is expressed as
%\begin{equation*}
%	{\color{red}\boldsymbol{K}_{\mathrm{A}} = \frac{1}{2} (\det \boldsymbol{H_0})^{-\frac{3}{2}} \big( B_{yy}(1+B_x^2) - 2B_{xy}B_xB_y + B_{xx}(1+B_y^2) \big)\,.}
%\end{equation*}

Throughout the paper, we sometimes use the notation $A\lesssim B$ as an equivalent to $A \leq CB$ with a uniform constant $C$.

\subsection{Main results}\label{subsec1.2}

For flat boundaries, depth of boundary layer  remains constant, whereas it varies with the function \( \gamma(x,y) \) in non-flat geometries. As demonstrated in Ref.~\cite{Jia2024}, the depth can be expressed as:

\begin{equation}\label{1.14}
	{\delta}(x,y)=\sqrt{\nu}\varepsilon\cos^{-\frac{3}{2}}\gamma(x,y)\,.
\end{equation}

We define the bottom and top boundary layer coordinates as follows:

\begin{equation}\label{1.15}
	(\tilde x,\tilde y,\tilde z)=\Big(x,y,\tfrac{z-B}{{\delta}}\Big)\,,
	\qquad
	(\bar{x},\bar{y},\bar{z})=\Big(x,y,\tfrac{2+B-z}{{\delta}}\Big)\,.
\end{equation}

The following theorem about the approximate solution holds:
\begin{theorem}\label{thapp}

As $\varepsilon\rightarrow 0^+$, assume $\boldsymbol{u}_0^\varepsilon$ strongly converges to ${\boldsymbol{u}}_0 (x,y)= ({\boldsymbol{u}}_{0,h}, 0)$ in $L^2(\Omega)$ and ${\boldsymbol{u}}_{0,h}\in H^6(\mathbb{R}^2)$, the boundary surface $B(x,y)\in C^\infty(\mathbb{R}^2)$ satisfies 	
	\begin{equation}\label{+16}
		\frac{\cos^2\alpha+\cos^2\beta}{\cos^2\gamma}<\tfrac{1}{8}\,,
		\quad  
		\sup\limits_{(x,y)\in \mathbb{R}^2} (1+\sqrt{\tfrac{2}{\nu}})\sqrt{\big| \boldsymbol{K}_{\mathrm{G}}\big|} <\tfrac{8}{9}\,,
	\end{equation}
where   $\boldsymbol{K}_{\mathrm{G}}$ is  the Gaussian curvature of $B(x,y)$.
Then there exists a pair of approximate solution of system \eqref{1.11}-\eqref{1.12}:
	\begin{equation}\label{1.16}
		\begin{cases}
			\boldsymbol{u}_{app}^{\varepsilon}=\sum\limits_{i=0}^{2}{\delta}^i
			 \boldsymbol u^{i}(t,x,y,z,\tilde z,\bar{z}) 
			+\boldsymbol u^{c},\\
			{p}_{app}^{\varepsilon}=\sum\limits_{i=0}^{3}{\delta}^i
			  p^{i}(t,x,y,z,\tilde z,\bar{z}) 
			+p^{c}.
		\end{cases}
	\end{equation}

	Additionally, the approximate solution satisfies:
	\begin{equation}\label{1.19}
		\begin{cases}
			\partial_{t} {\boldsymbol u}_{app}^{\varepsilon}-\nu\varepsilon\Delta{\boldsymbol u}_{app}^{\varepsilon}+({\boldsymbol u}_{app}^{\varepsilon} \cdot \nabla) {\boldsymbol u}_{app}^{\varepsilon}+\varepsilon^{-1}\boldsymbol{R} {\boldsymbol u}_{app}^{\varepsilon}
			+\varepsilon^{-1}\nabla{p}_{app}^{\varepsilon}=\boldsymbol \rho^{\varepsilon}\,,\\
			\nabla\cdot {\boldsymbol u}_{app}^{\varepsilon}=0\,,\quad
			{\boldsymbol u}_{app}^{\varepsilon}|_{\partial\Omega}=0\,,\\
			{\boldsymbol u}_{app}^{\varepsilon}|_{t=0}=\boldsymbol{u}_0^\varepsilon,
		\end{cases}
	\end{equation}
	where $\boldsymbol \rho^{\varepsilon}$ satisfies
	\begin{equation*}
		\lVert
		\boldsymbol \rho^{\varepsilon}\rVert_{{L}^{2}(\Omega)}
		\lesssim
		\varepsilon^2
		\lVert
		{\boldsymbol{u}}_{0,h}
		\rVert_{{H}^{5}(\mathbb{R}^2)}
		\big(
		\lVert
		{\boldsymbol{u}}_{0,h}
		\rVert_{{H}^{3}(\mathbb{R}^2)}
		+1\big)^2
		\notag,
	\end{equation*}	
	and
	\begin{equation*}
		\lVert
		\partial_t\boldsymbol \rho^{\varepsilon}\rVert_{{L}^{2}(\Omega)}
		\lesssim
		\varepsilon^2
		\lVert
		{\boldsymbol{u}}_{0,h}
		\rVert_{{H}^{6}(\mathbb{R}^2)}
		\big(
		\lVert
		{\boldsymbol{u}}_{0,h}
		\rVert_{{H}^{4}(\mathbb{R}^2)}
		+1
		\big)^2
		\notag.
	\end{equation*}	
\end{theorem}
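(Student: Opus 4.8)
\section*{Proof proposal for Theorem \ref{thapp}}

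The plan is to construct the approximate solution by a multi-scale ansatz in the boundary-fitted variables \eqref{1.15}, matched to an interior profile, and then to estimate the remainder $\boldsymbol\rho^\varepsilon$ order by order in powers of $\delta$. First I would straighten the domain: introduce the change of variables $(x,y,z)\mapsto(x,y,z-B(x,y))$ so that $\Omega$ becomes the flat slab $\mathbb{R}^2\times[0,2]$, at the cost of turning $\nabla$, $\Delta$ and $\nabla\cdot$ into $B$-dependent operators with coefficients built from $\cos\alpha,\cos\beta,\cos\gamma$, $\boldsymbol H$, $\boldsymbol H_0$. In these coordinates the two boundary layers sit at the slab faces, with stretched variables $\tilde z=(z-B)/\delta$ and $\bar z=(2+B-z)/\delta$; the geometry-dependent thickness \eqref{1.14} is precisely the choice that makes the leading boundary-layer operator the constant-coefficient Ekman operator $-\nu\partial_{\tilde z}^2 + \varepsilon^{-1}\boldsymbol R$ up to terms that are small under the curvature bounds \eqref{+16}. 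I would then posit $\boldsymbol u^i = \boldsymbol u^i_{int}(t,x,y,z) + \boldsymbol u^i_{bl,b}(t,x,y,\tilde z) + \boldsymbol u^i_{bl,t}(t,x,y,\bar z)$ with the boundary-layer pieces exponentially decaying in their fast variables, and similarly for $p^i$ with one extra order, plus a small corrector $(\boldsymbol u^c,p^c)$ to fix the divergence and boundary conditions exactly.

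The core of the construction is to plug \eqref{1.16} into \eqref{1.11}, expand in $\delta$ (noting $\varepsilon\sim\delta$ up to the bounded factor $\cos^{-3/2}\gamma$), and collect terms. At order $\delta^{-1}$ the fast-variable ODEs force the classical Ekman balance, whose solvability and exponential decay require the real part of the relevant matrix eigenvalues to be bounded below — this is where $\frac{\cos^2\alpha+\cos^2\beta}{\cos^2\gamma}<\frac18$ enters, guaranteeing the perturbed Ekman operator stays coercive. At order $\delta^0$ one obtains the limiting interior system: $\boldsymbol u^0_{int}=(\boldsymbol u_h,0)$ with $\boldsymbol u_h$ solving a two-dimensional damped Euler-type equation (the geometry-dependent analogue of \eqref{1.9}), with damping coefficient coming from the Ekman pumping computed through the curved boundary; the compatibility of the boundary-layer decay with the interior trace is what pins down the initial datum $\boldsymbol u_{0,h}$. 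The higher orders $\boldsymbol u^1,\boldsymbol u^2$ are then determined by linear transport/elliptic problems driven by lower-order data; the Gaussian-curvature bound $\sup(1+\sqrt{2/\nu})\sqrt{|\boldsymbol K_{\mathrm G}|}<\frac89$ is what keeps the fixed-point/contraction argument for these linear problems uniform and preserves the exponential decay of the boundary-layer corrections at every order.

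Once all profiles are built, I would assemble $\boldsymbol\rho^\varepsilon$ as the collection of leftover terms: the $\delta^3$ (and higher) interior residuals, the mismatch from the nonlinearity acting across interior and boundary-layer scales, and the contribution of the corrector $(\boldsymbol u^c,p^c)$. Each term is a product of a profile (or its derivative) with an explicit power $\delta^k$, $k\ge 3$, times either a bounded geometric coefficient or a decaying exponential whose $L^2_{\tilde z}$ norm contributes an extra $\delta^{1/2}$. Counting powers and using $\delta\sim\varepsilon$ yields $\|\boldsymbol\rho^\varepsilon\|_{L^2}\lesssim\varepsilon^2\times(\text{profile norms})$; the profile norms are controlled by Sobolev estimates on the limiting system \eqref{1.9}-type equation and its linearizations, which convert into $\|\boldsymbol u_{0,h}\|_{H^5}(\|\boldsymbol u_{0,h}\|_{H^3}+1)^2$ after tracking how many derivatives the successive profiles cost (roughly one extra derivative per order, plus the quadratic loss from the nonlinearity). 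The $\partial_t\boldsymbol\rho^\varepsilon$ estimate is identical with one more derivative everywhere, giving the $H^6$/$H^4$ bound. The main obstacle I anticipate is \emph{not} the power counting but the solvability-with-decay of the corrected boundary-layer ODEs at each order: one must verify that the geometric perturbation of the Ekman operator remains invertible on the space of exponentially decaying functions with norms uniform in $(x,y)$ and in $\varepsilon$, and that the solvability (zero-average / orthogonality) conditions at each order are exactly met by the interior equations — this is the step that genuinely uses both inequalities in \eqref{+16}, and where the curved geometry departs from the flat case of Theorem \ref{thgong}.
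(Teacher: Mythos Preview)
Your outline captures the general multi-scale/matched-asymptotics philosophy, but two concrete points diverge from what actually happens in the paper and would cause your argument to stall.

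First, the interior profile is \emph{not} $(\boldsymbol u_h,0)$. Because the boundary-layer incompressibility condition at order $\delta^{-1}$ forces $u_3^{B,0}=\nabla_hB\cdot\boldsymbol u_h^{B,0}$ (and similarly for the top), matching at the boundary gives $u_3^{I,0}=\nabla_hB\cdot\boldsymbol u_h^{I,0}$, which is nonzero whenever the boundary is curved. This nonzero vertical component is the essential geometric feature of the problem: it propagates into the limiting system \eqref{1.20}, where $\bar u_3=\nabla_hB\cdot\bar{\boldsymbol u}_h$ and the damping is $\sqrt{\nu/2}\,(\boldsymbol H_0-\boldsymbol E_1)\bar{\boldsymbol u}_h$, not the scalar $\sqrt{2}\,\bar{\boldsymbol u}_h$ of the flat case. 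It also generates the residual contribution $(0,0,\partial_t\bar u_3+\bar{\boldsymbol u}_h\cdot\nabla_h\bar u_3)^T$ in $\boldsymbol\rho^\varepsilon$, which is \emph{not} small by power counting alone and must be controlled by separate estimates on the limiting flow. If you start from $(\boldsymbol u_h,0)$ you will miss both the correct damping matrix and this residual term.

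Second, you have misplaced the role of the geometric hypotheses \eqref{+16}. The boundary-layer ODEs at each order are of the form $\boldsymbol H_0\partial_{\tilde z}^2\boldsymbol u_h+\cos^{-1}\gamma\,\boldsymbol E_1\boldsymbol u_h=\text{source}$ and are solved \emph{explicitly} (via the matrix $\boldsymbol M(\tilde z/\sqrt2)$ and a Duhamel-type integral); since $\boldsymbol H_0$ has eigenvalues $1$ and $\cos^{-2}\gamma$, no smallness is needed for solvability or exponential decay. The conditions $|\nabla_hB|^2<\tfrac18$ and the Gaussian-curvature bound enter instead in the \emph{a priori energy/vorticity estimates for the limiting 2D system} \eqref{1.20}: they guarantee that the geometry-dependent lower-order terms $\sqrt{\nu/2}\,\nabla_hB^T\boldsymbol E_1\boldsymbol H\bar{\boldsymbol u}_h$ and $\sqrt{\nu/2}\,|\nabla_hB|^2|\nabla_h\bar{\boldsymbol u}_h|$ in the vorticity equation are dominated by the damping, yielding exponential decay of $\|\bar{\boldsymbol u}_h\|_{L^2}+\|\bar\omega\|_{L^2}$ and hence uniform-in-time bounds on all profile norms. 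Without locating the hypotheses here, you will not be able to close the Sobolev estimates that feed into $\|\boldsymbol\rho^\varepsilon\|_{L^2}\lesssim\varepsilon^2$.
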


\begin{remark}
	The approximate solution $\boldsymbol{u}_{app}^{\varepsilon}$ in \eqref{1.16} is dominated by $\bar{\boldsymbol{u}}=(\bar {\boldsymbol u}_h,\bar u_3)$, which admits the following two-dimensional damped Euler type system with rotational effects (see detail in \eqref{2.17}-\eqref{2.22}):
	\begin{equation}\label{1.20}
		\begin{cases}
			\partial_{t} \bar {\boldsymbol u}_h+(\bar {\boldsymbol u}_h \cdot \nabla_h) \bar {\boldsymbol u}_h
			+\sqrt{\frac{\nu}{2 }}
			(\boldsymbol{H_0}
			- \boldsymbol{E_1})
			\bar{\boldsymbol u}_h+\nabla_h\bar p=0\,,\\
			\bar u_3=\nabla_h{B}\cdot\bar{\boldsymbol{u}}_h\,,\\
			\nabla\cdot \bar{\boldsymbol u}=\nabla_h \cdot \bar{\boldsymbol u}_h=0\,,\\
			\bar{\boldsymbol u}|_{t=0}={\boldsymbol u}_0=({\boldsymbol u}_{0,h},0)\,.
		\end{cases}
	\end{equation}
%	For the system \eqref{1.11}-\eqref{1.12} with initial data  $\boldsymbol{u}_0$, we take   
%	\begin{equation}
%		\bar{\boldsymbol{u}}_{0}=\int_B^{B+2} \boldsymbol{u}_0(x_h,\cdot) dz.
%	\end{equation} 
Given initial data \( {\boldsymbol u}_{0,h} \in H^{s} (s > 5) \) and \( B(x,y) \in C^\infty(\mathbb{R}^2) \), it can be straightforwardly shown using standard methods that the two-dimensional damped system \eqref{1.20} has a global solution. 
\end{remark}

Considering the computation of error estimates within different norm frameworks, the present paper differs from previous work \cite{Jia2024} in the approximate solution construction process. 
The primary distinctions are as follows:
Firstly, in contrast to the specific expansion of the $\delta^1$-order approximate solution conducted in Ref.~\cite{Jia2024}, this paper performs a higher-order expansion (at least $\delta^2$-order) of the approximate solution. This approach is adopted to avoid the singularities arising from higher-order derivatives.
Secondly, this paper directly considers a multi-scale expansion based on the nonlinear equations in the process of constructing the solution to ensure that the residual term of the approximate system is sufficiently small. The form of the system \eqref{1.20} satisfied by the principal part $\bar{\boldsymbol{u}}$ of the approximate solution in this paper is basically the same as that of Ref.~\cite{Jia2024}, but simplifies the variable coefficient in the damping term of the system \eqref{1.20} to a constant for theoretical and numerical convenience.

Based on approximate solutions $\boldsymbol{u}_{app}^\varepsilon$, we establish the $L^\infty$ error estimates between the local strong solution of system \eqref{1.11}-\eqref{1.12} and the approximate solution $\boldsymbol{u}_{app}^\varepsilon$. The details are as follows:
\begin{theorem}\label{thm2}
	Let $\boldsymbol u^\varepsilon\in{L}^{\infty}([0,T^*);{H}^{2} (\Omega))\cap{L}^{2}([0,T^*);{H}^{3} (\Omega))$ be a family of local strong solutions of the system \eqref{1.11}-\eqref{1.12} associated with the initial data $\boldsymbol u_0^\varepsilon\in   H^{2}(\Omega)$. Under the following conditions:
	\begin{enumerate}[1.]
		\item{Well-prepared initial data:} 
		let 
		\begin{equation}\label{1.21}
			\mathop{\lim}\limits_{\varepsilon\rightarrow 0^+} \boldsymbol u^\varepsilon_0=
			({\boldsymbol u}_{0,h}, 0)
			\quad\text{with}\quad{\boldsymbol u}_{0,h}\in H^s(\mathbb{R}^2)~(s>5)\,,
		\end{equation}
		and
		\begin{equation}\label{1.26}   
		\lVert{\boldsymbol{u}}_{0,h}\rVert_{{W}^{1,\infty}(\mathbb{R}^2)}\leq\sqrt{\tfrac{\nu}{3}}\,;
		\end{equation}
		\item{The boundary $B(x,y)\in C^\infty(\mathbb{R}^2)$ satisfies:}
		\begin{align}
			|{B}(x,y)|<\tfrac14,\quad   \tfrac{\cos^2\alpha+\cos^2\beta}{\cos^2\gamma}<\tfrac{1}{8}\,,\label{1.23}\\
			\sup\limits_{(x,y)\in \mathbb{R}^2} (1+\sqrt{\tfrac{2}{\nu}})\sqrt{\big| \boldsymbol{K}_{\mathrm{G}}\big|} <\tfrac{8}{9};\label{1.24}	
		\end{align}
		\item{Compatibility condition : with  $p_0(x,y)$ be a scalar function}
	\begin{equation}\label{compatibility}
		({\boldsymbol{u}}_{0,h}\cdot\nabla_h){\boldsymbol{u}}_{0,h}
			+\sqrt{\tfrac{\nu}{2 }}
			\nabla_{ h}B\otimes\nabla_{ h}B
			{\boldsymbol u}_{0,h}
		 +\nabla_h p_0=0.
	\end{equation} 
	\end{enumerate}
	Then, for the approximate $\boldsymbol{u}_{app}^\varepsilon$ given in Theorem \ref{thapp}, there exists $T<\min(T^*,\pi/2)$ such that
	\begin{equation}\label{1.25} \lim\limits_{\varepsilon\rightarrow0^+}
		\lVert
		\boldsymbol{u}^\varepsilon
		-\boldsymbol{u}_{app}^\varepsilon 
		\rVert_{{L}^{\infty}([0,T)\times\Omega)}=0\,.
	\end{equation}	
\end{theorem}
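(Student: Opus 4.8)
The plan is to compare $\boldsymbol{u}^{\varepsilon}$ with $\boldsymbol{u}_{app}^{\varepsilon}$ through weighted energy estimates and then upgrade the resulting $L^{2}$-type smallness to $L^{\infty}$ by interpolation, exploiting that the expansion of Theorem~\ref{thapp} is accurate to order $\varepsilon^{2}$ in $L^{2}$ together with its time derivative. Set $\boldsymbol{w}^{\varepsilon}:=\boldsymbol{u}^{\varepsilon}-\boldsymbol{u}_{app}^{\varepsilon}$ and $q^{\varepsilon}:=p^{\varepsilon}-p_{app}^{\varepsilon}$; subtracting \eqref{1.19} from \eqref{1.11}, the pair solves
\[
\partial_{t}\boldsymbol{w}^{\varepsilon}-\nu\varepsilon\Delta\boldsymbol{w}^{\varepsilon}+(\boldsymbol{u}^{\varepsilon}\cdot\nabla)\boldsymbol{w}^{\varepsilon}+(\boldsymbol{w}^{\varepsilon}\cdot\nabla)\boldsymbol{u}_{app}^{\varepsilon}+\varepsilon^{-1}\boldsymbol{R}\boldsymbol{w}^{\varepsilon}+\varepsilon^{-1}\nabla q^{\varepsilon}=-\boldsymbol{\rho}^{\varepsilon},
\]
with $\nabla\cdot\boldsymbol{w}^{\varepsilon}=0$, $\boldsymbol{w}^{\varepsilon}|_{\partial\Omega}=0$, $\boldsymbol{w}^{\varepsilon}|_{t=0}=0$, and $\boldsymbol{\rho}^{\varepsilon}$ obeying the bounds of Theorem~\ref{thapp}. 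The four steps are: (i) a weighted $L^{2}$ estimate giving $\|\boldsymbol{w}^{\varepsilon}\|_{L^{\infty}_{t}L^{2}}+\sqrt{\nu\varepsilon}\,\|\nabla\boldsymbol{w}^{\varepsilon}\|_{L^{2}_{t}L^{2}}\lesssim\varepsilon^{2}$; (ii) a time-differentiated estimate, initialized by the compatibility condition \eqref{compatibility}, giving $\|\partial_{t}\boldsymbol{w}^{\varepsilon}\|_{L^{\infty}_{t}L^{2}}\lesssim\varepsilon^{2}$; (iii) a stationary-Stokes bootstrap yielding $\|\boldsymbol{w}^{\varepsilon}\|_{L^{\infty}_{t}H^{2}}\lesssim1$; (iv) interpolation to reach $L^{\infty}([0,T)\times\Omega)$.

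For step (i), test with $\boldsymbol{w}^{\varepsilon}$: the Coriolis term vanishes by skew-symmetry of $\boldsymbol{R}$, the pressure term and $\int(\boldsymbol{u}^{\varepsilon}\cdot\nabla)\boldsymbol{w}^{\varepsilon}\cdot\boldsymbol{w}^{\varepsilon}$ vanish by $\nabla\cdot\boldsymbol{w}^{\varepsilon}=0$ and $\boldsymbol{w}^{\varepsilon}|_{\partial\Omega}=0$, leaving $\tfrac12\tfrac{d}{dt}\|\boldsymbol{w}^{\varepsilon}\|_{L^{2}}^{2}+\nu\varepsilon\|\nabla\boldsymbol{w}^{\varepsilon}\|_{L^{2}}^{2}=-\int_{\Omega}(\boldsymbol{w}^{\varepsilon}\cdot\nabla)\boldsymbol{u}_{app}^{\varepsilon}\cdot\boldsymbol{w}^{\varepsilon}-\int_{\Omega}\boldsymbol{\rho}^{\varepsilon}\cdot\boldsymbol{w}^{\varepsilon}$. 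The difficulty is the trilinear term, which I would split along \eqref{1.16}: the interior part has bounded gradient whose $W^{1,\infty}$-size is governed, through the damped Euler system \eqref{1.20}, by that of $\boldsymbol{u}_{0,h}$, hence controlled by \eqref{1.26}; the corrector $\boldsymbol{u}^{c}$ is exponentially small; and the boundary-layer part has normal derivative of size $\delta^{-1}\sim\varepsilon^{-1}$ but is paired against $w_{3}^{\varepsilon}$, which vanishes linearly on $\partial\Omega$, so a Hardy inequality $\|w_{3}^{\varepsilon}/\mathrm{dist}(\cdot,\partial\Omega)\|_{L^{2}}\lesssim\|\nabla\boldsymbol{w}^{\varepsilon}\|_{L^{2}}$, together with integration by parts in the layer, trades the $\varepsilon^{-1}$ for a factor of $\|\nabla\boldsymbol{w}^{\varepsilon}\|_{L^{2}}$ absorbed by $\nu\varepsilon\|\nabla\boldsymbol{w}^{\varepsilon}\|_{L^{2}}^{2}$, up to the surviving Ekman-pumping quadratic form $\sqrt{\nu/2}\,\langle(\boldsymbol{H_0}-\boldsymbol{E_1})\boldsymbol{w}_{h}^{\varepsilon},\boldsymbol{w}_{h}^{\varepsilon}\rangle_{\partial\Omega}$. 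Conditions \eqref{1.23}--\eqref{1.24} are exactly what makes this form coercive (symmetric part $\boldsymbol{H_0}$ positive and not overwhelmed by the curvature-controlled remainder), so the energy identity has a good sign; Gronwall on $[0,T]$ with $T<\min(T^{*},\pi/2)$ (the $\pi/2$ keeping the accumulated constants finite and compatible with \eqref{1.26}) and $\|\boldsymbol{\rho}^{\varepsilon}\|_{L^{2}}\lesssim\varepsilon^{2}$ then close (i).

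Step (ii) mirrors (i) after applying $\partial_{t}$ and testing with $\partial_{t}\boldsymbol{w}^{\varepsilon}$: the Coriolis, pressure and $(\boldsymbol{u}^{\varepsilon}\cdot\nabla)\partial_{t}\boldsymbol{w}^{\varepsilon}$ contributions drop again, and the new terms $\int(\partial_{t}\boldsymbol{u}^{\varepsilon}\cdot\nabla)\boldsymbol{w}^{\varepsilon}\cdot\partial_{t}\boldsymbol{w}^{\varepsilon}$, $\int(\partial_{t}\boldsymbol{w}^{\varepsilon}\cdot\nabla)\boldsymbol{u}_{app}^{\varepsilon}\cdot\partial_{t}\boldsymbol{w}^{\varepsilon}$, $\int(\boldsymbol{w}^{\varepsilon}\cdot\nabla)\partial_{t}\boldsymbol{u}_{app}^{\varepsilon}\cdot\partial_{t}\boldsymbol{w}^{\varepsilon}$ are handled by the same Hardy/geometry device (noting that $\partial_{t}$, unlike $\partial_{z}$, creates no $\delta^{-1}$), using the $L^{2}_{t}H^{1}$ bound from (i) and $\|\partial_{t}\boldsymbol{\rho}^{\varepsilon}\|_{L^{2}}\lesssim\varepsilon^{2}$. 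The initial value follows from the equation at $t=0$ with $\boldsymbol{w}^{\varepsilon}(0)=0$: $\partial_{t}\boldsymbol{w}^{\varepsilon}(0)=-\varepsilon^{-1}\nabla q^{\varepsilon}(0)-\boldsymbol{\rho}^{\varepsilon}(0)$, so $\partial_{t}\boldsymbol{w}^{\varepsilon}(0)=-\mathbb{P}\boldsymbol{\rho}^{\varepsilon}(0)$ after the Leray projection, and \eqref{compatibility} is precisely the cancellation forcing the $\mathcal{O}(1)$ (and $\mathcal{O}(\delta)$) part of $\boldsymbol{\rho}^{\varepsilon}|_{t=0}$ to vanish, hence $\|\partial_{t}\boldsymbol{w}^{\varepsilon}(0)\|_{L^{2}}\lesssim\varepsilon^{2}$; Gronwall closes (ii). For step (iii), read the perturbation equation at fixed $t$ as $-\nu\varepsilon\Delta\boldsymbol{w}^{\varepsilon}+\varepsilon^{-1}\nabla q^{\varepsilon}=\boldsymbol{F}^{\varepsilon}$, $\boldsymbol{F}^{\varepsilon}=-\partial_{t}\boldsymbol{w}^{\varepsilon}-(\boldsymbol{u}^{\varepsilon}\cdot\nabla)\boldsymbol{w}^{\varepsilon}-(\boldsymbol{w}^{\varepsilon}\cdot\nabla)\boldsymbol{u}_{app}^{\varepsilon}-\varepsilon^{-1}\boldsymbol{R}\boldsymbol{w}^{\varepsilon}-\boldsymbol{\rho}^{\varepsilon}$, and use Stokes regularity on the smooth slab $\Omega$ (uniform in the horizontal variables), $\|\boldsymbol{w}^{\varepsilon}\|_{H^{2}}\lesssim(\nu\varepsilon)^{-1}\|\boldsymbol{F}^{\varepsilon}\|_{L^{2}}$; since $\|\partial_{t}\boldsymbol{w}^{\varepsilon}\|_{L^{2}},\|\boldsymbol{\rho}^{\varepsilon}\|_{L^{2}}=\mathcal{O}(\varepsilon^{2})$, $\varepsilon^{-1}\|\boldsymbol{R}\boldsymbol{w}^{\varepsilon}\|_{L^{2}}=\mathcal{O}(\varepsilon)$, and the nonlinear terms are $\lesssim\|\nabla\boldsymbol{w}^{\varepsilon}\|_{L^{2}}\lesssim\|\boldsymbol{w}^{\varepsilon}\|_{L^{2}}^{1/2}\|\boldsymbol{w}^{\varepsilon}\|_{H^{2}}^{1/2}$ (Hardy once more for $(\boldsymbol{w}^{\varepsilon}\cdot\nabla)\boldsymbol{u}_{app}^{\varepsilon}$, and $\|\boldsymbol{u}^{\varepsilon}\|_{L^{\infty}}\le\|\boldsymbol{w}^{\varepsilon}\|_{L^{\infty}}+\|\boldsymbol{u}_{app}^{\varepsilon}\|_{L^{\infty}}\lesssim1$ inside a short-time continuation loop), absorbing gives $\|\boldsymbol{w}^{\varepsilon}\|_{L^{\infty}_{t}H^{2}}\lesssim1$. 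Step (iv): interpolate $\|\nabla\boldsymbol{w}^{\varepsilon}\|_{L^{2}}\lesssim\|\boldsymbol{w}^{\varepsilon}\|_{L^{2}}^{1/2}\|\boldsymbol{w}^{\varepsilon}\|_{H^{2}}^{1/2}\lesssim\varepsilon$ and apply the three-dimensional Gagliardo--Nirenberg inequality, e.g.\ $\|\boldsymbol{w}^{\varepsilon}\|_{L^{\infty}(\Omega)}\lesssim\|\boldsymbol{w}^{\varepsilon}\|_{L^{2}}^{1/4}\|\boldsymbol{w}^{\varepsilon}\|_{H^{2}}^{3/4}\lesssim\varepsilon^{1/2}$, uniformly for $t\in[0,T)$, which is \eqref{1.25}.

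I expect the main obstacle to be the uniform-in-$\varepsilon$ control of $\int(\boldsymbol{w}^{\varepsilon}\cdot\nabla)\boldsymbol{u}_{app}^{\varepsilon}\cdot\boldsymbol{w}^{\varepsilon}$ and its time-differentiated analogues: the $\delta^{-1}\sim\varepsilon^{-1}$ normal derivatives of the Ekman profiles must not produce growth of order $e^{Ct/\varepsilon}$. Resolving this is where every hypothesis enters — the geometric conditions \eqref{1.23}--\eqref{1.24} make the boundary Ekman-pumping form that survives the Hardy integration by parts coercive; \eqref{1.26} keeps the interior shear constant below the threshold $\sqrt{\nu/3}$ so it is dominated by the damping; the higher-order ($\delta^{2}$/$\delta^{3}$) expansion of Theorem~\ref{thapp} makes $\boldsymbol{\rho}^{\varepsilon}$ and $\partial_{t}\boldsymbol{\rho}^{\varepsilon}$ small enough ($\mathcal{O}(\varepsilon^{2})$) to survive the $(\nu\varepsilon)^{-1}$ loss in the elliptic bootstrap; and the compatibility condition \eqref{compatibility} lets the time-derivative estimate start from $\mathcal{O}(\varepsilon^{2})$ data. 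A secondary technical point is justifying the short-time continuation loop in step (iii) and checking that the smallness $|B|<\tfrac14$ in \eqref{1.23} makes the change of variables \eqref{1.15} and the Hardy inequality hold with constants independent of $\varepsilon$.
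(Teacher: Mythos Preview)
Your four-step architecture (energy on $\boldsymbol{w}^{\varepsilon}$, energy on $\partial_t\boldsymbol{w}^{\varepsilon}$, elliptic recovery of $\Delta\boldsymbol{w}^{\varepsilon}$, interpolation to $L^{\infty}$) is exactly the paper's strategy, and the final interpolation $\|\boldsymbol{w}^{\varepsilon}\|_{L^{\infty}}^{2}\lesssim\|\boldsymbol{w}^{\varepsilon}\|_{L^{2}}^{1/2}\|\Delta\boldsymbol{w}^{\varepsilon}\|_{L^{2}}^{3/2}$ is the one used. However, several of your mechanistic explanations are off and would not work as written.

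First, there is no ``surviving Ekman-pumping quadratic form'' on $\partial\Omega$: since $\boldsymbol{w}^{\varepsilon}|_{\partial\Omega}=0$, integration by parts produces no boundary contribution. The paper handles $\int(\boldsymbol{w}^{\varepsilon}\cdot\nabla)\boldsymbol{u}^{BL}\cdot\boldsymbol{w}^{\varepsilon}$ by first integrating by parts to $\int(\boldsymbol{w}^{\varepsilon}\cdot\nabla)\boldsymbol{w}^{\varepsilon}\cdot\boldsymbol{u}^{BL}$ (so no derivative hits $\boldsymbol{u}^{BL}$), then uses $|\boldsymbol{w}^{\varepsilon}|\le d(z)^{1/2}\|\partial_z\boldsymbol{w}^{\varepsilon}\|_{L^{2}_{z}}$ on one factor, yielding $\|\nabla\boldsymbol{w}^{\varepsilon}\|_{L^{2}}^{2}\cdot\|d(z)^{1/2}\boldsymbol{u}^{BL}\|_{L^{2}_{z}L^{\infty}_{h}}$; the last norm is $\lesssim\tfrac{3\sqrt{3\nu}}{4}\varepsilon\|\boldsymbol{u}_{0,h}\|_{L^{\infty}}$, and \eqref{1.26} is precisely what makes this absorbable by $\nu\varepsilon\|\nabla\boldsymbol{w}^{\varepsilon}\|_{L^{2}}^{2}$. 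The geometric conditions \eqref{1.23}--\eqref{1.24} do \emph{not} enter the error estimate directly; they are consumed entirely in Propositions~\ref{lem1}--\ref{propT} to obtain exponential damping of the limiting solution $\bar{\boldsymbol{u}}$, which in turn controls $\|\nabla(\boldsymbol{u}^{I}+\boldsymbol{u}^{c})\|_{L^{\infty}}$ and makes the Gronwall coefficients integrable in time.

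Second, the $T<\pi/2$ restriction is not a vague ``keep constants finite'' device: it comes from the quartic term $\varepsilon^{-1}\|\partial_t\boldsymbol{w}^{\varepsilon}\|_{L^{2}}^{4}$ that appears when you estimate $\int(\partial_t\boldsymbol{w}^{\varepsilon}\cdot\nabla)\boldsymbol{w}^{\varepsilon}\cdot\partial_t\boldsymbol{w}^{\varepsilon}$ via Gagliardo--Nirenberg, giving a Riccati inequality $\tfrac{d}{dt}g\lesssim\varepsilon^{-3}g^{2}+\varepsilon^{3}$ closed by Lemma~\ref{f4}, whose $\tan(C_1C_2t)$ bound blows up at $t=\pi/2$. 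This also explains why the paper gets only $\|\partial_t\boldsymbol{w}^{\varepsilon}\|_{L^{2}}^{2}\lesssim\varepsilon^{3}$, not the $\varepsilon^{4}$ you claim.

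Third, the compatibility condition \eqref{compatibility} is not primarily used to initialize $\partial_t\boldsymbol{w}^{\varepsilon}(0)$; it is used (via Propositions~\ref{prop4}--\ref{propT}) to show that the $\mathcal{O}(1)$ entry $\partial_t\bar u_3+(\bar{\boldsymbol{u}}_h\cdot\nabla_h)\bar u_3$ in $\boldsymbol{\rho}^{\varepsilon}$ is controlled by $\|\tilde{\boldsymbol u}_h\|_{L^2}+\|\tilde\omega\|_{H^1}$ with $\tilde{\boldsymbol u}_h=\bar{\boldsymbol u}_h-e^{-\sqrt{\nu/2}\,t}\boldsymbol{u}_{0,h}$, which vanishes at $t=0$ and stays bounded. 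Without \eqref{compatibility}, $\|\boldsymbol{\rho}^{\varepsilon}\|_{L^{2}}\lesssim\varepsilon^{2}$ would already fail, before you ever reach step~(ii).
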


\begin{remark}\label{re4}
	 Here, we give a brief explanation on the conditions in this thereom.
	 The third component ($\bar{u}_3=\nabla_{ h}B\cdot\bar{\boldsymbol{u}}_h$) induced by non-flat boundary geometries in the vertical direction under the dynamical limit state poses significant challenges to establishing uniform convergence in \( L^\infty \) space. To address this, we introduce the well-prepared initial data condition for system \eqref{1.11}-\eqref{1.12}:
	$$
	\mathop{\lim}\limits_{\varepsilon\rightarrow 0^+} \boldsymbol u^\varepsilon_0=
	 ({\boldsymbol u}_{0,h}, 0).
	$$
This condition indicates that the initial horizontal velocity is parallel to the tangent plane of the surface \( z = B(x,y) \), selected to prevent the emergence of an initial layer. Noting that \( \nabla_h \cdot \boldsymbol{u}_{0,h}(x_h) = 0 \), we have
\begin{equation}\label{compatibility1}
		\boldsymbol{E_1}\boldsymbol{u}_{0,h}\sim\nabla_{h}\phi(x_h).
	\end{equation}
	Then, by using the system \eqref{1.20}, we take the compatibility condition as \eqref{compatibility}.
\end{remark}

The main contributions of this paper are summarized as follows.

First, we identify the feedback mechanism through which non-flat boundary geometries affect the near-boundary layer flow.
    The multi-scale construction procedure and its principles are standard, but establishing the  boundary layer terms is complex due to the dependency on geometric features. During the development of the multi-scale approximate solutions, we noted intriguing regularities in the geometric structures, which may offer new insights for designing future numerical schemes. A key distinction from the flat-boundary case is the introduction of a normal (vertical) component \(\boldsymbol{u}_3=\nabla_h B\cdot \boldsymbol{u}_h\). To ensure that the corresponding approximate solutions converge to \(\boldsymbol{u}^\varepsilon \), controlling  the evolution of this quantity is essential. This work addresses this challenge by applying suitable initial and compatibility conditions.  From this mechanism, we derive a limiting-state system  that describes the coupled boundary-flow interaction (see \eqref{1.20}). 
    
Second, from the mathematical proof aspect, we develop approximate solutions that do not depend on the typical small-amplitude assumption for boundary perturbations; instead, they are based on the Gaussian curvature of the boundary surface. Notably, our approach recovers classical results in the flat-boundary limit: when the boundary is planar, defined by \(B(x,y)=\text{const}\), the solutions align with existing studies.

%
%{\color{red}
% the system \eqref{1.11}-\eqref{1.12} with dissipation vanishing in the limit state, this work constructs an approximate solution that depends on the boundary's geometric characteristics. This solution addresses the challenge of simultaneously matching the leading-order equations, the incompressibility condition, and the boundary conditions.
%Differing from the approach in Ref.~\cite{Jia2024}, our study performs a multiscale expansion directly on the original nonlinear system. 
%This approach ensures that the residual terms of the approximated system are sufficiently small within the $L^\infty$ framework.
%}

\subsection{Preliminaries}\label{subsec1.3}	In this subsection, two Gronwall-type inequalities are given for the sake of the subsequent proofs.
\begin{lemma}[Grenier and Masmoudi \cite{Grenier1997}]\label{f2}
	Let $\varepsilon_0>0$, and assume $f(t)$ and $a_i(t)~(i=0,1,2)$ are nonnegative functions satisfuing 
	\begin{equation*}
		\frac{d}{dt}(f^2(t))\leq
		a_0(t)f^2(t)+a_1(t)f(t)+a_2(t)\,,\quad\text{with}\quad
		f(0)\leq C\varepsilon_0\,,
	\end{equation*}
	and
	\begin{equation*}
		\int_{0}^{T}a_i(t)\,dt\leq C\varepsilon_0^i\,,\quad\text{with}\quad
		i=0,1,2.
	\end{equation*}
	Then there exists some $M>0$ depending only on $C$, for all $0\leq t\leq T$, such that $f\leq M\varepsilon_0$.
\end{lemma}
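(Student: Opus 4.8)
The plan is to reduce the mixed differential inequality to a purely \emph{linear} one for $h(t):=f^2(t)$ and then apply the integral form of Gronwall's lemma, the only delicate point being that the intermediate term $a_1 f$ (which is neither quadratic in $f$ nor a pure forcing term) must be split so that each resulting piece respects the $\varepsilon_0$-scaling imposed by the hypotheses. Since $f\ge 0$ we have $\sqrt{h}=f$, so the assumption reads $h'(t)\le a_0(t)\,h(t)+a_1(t)\sqrt{h(t)}+a_2(t)$ (holding a.e., $h$ being locally absolutely continuous, which is all the comparison argument needs). I would then treat the middle term by the weighted AM--GM inequality with weight $\varepsilon_0$:
\[
a_1(t)\sqrt{h(t)}\;=\;a_1(t)\sqrt{\tfrac{h(t)}{\varepsilon_0}\cdot\varepsilon_0}\;\le\;\frac{a_1(t)}{2}\Big(\frac{h(t)}{\varepsilon_0}+\varepsilon_0\Big)\;=\;\frac{a_1(t)}{2\varepsilon_0}\,h(t)+\frac{a_1(t)\,\varepsilon_0}{2}\,,
\]
which turns the hypothesis into the linear differential inequality
\[
h'(t)\;\le\;\Big(a_0(t)+\tfrac{a_1(t)}{2\varepsilon_0}\Big)h(t)+\Big(\tfrac{a_1(t)\varepsilon_0}{2}+a_2(t)\Big)\;=:\;\alpha(t)\,h(t)+\beta(t)\,,\qquad \alpha,\beta\ge 0\,.
\]

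Next I would bound the two quantities produced by Gronwall using $\int_0^T a_i\,dt\le C\varepsilon_0^i$. For the exponent, $\int_0^T\alpha\,dt\le \int_0^T a_0\,dt+\frac{1}{2\varepsilon_0}\int_0^T a_1\,dt\le C+\frac{C\varepsilon_0}{2\varepsilon_0}=\tfrac{3C}{2}$; for the forcing, $\int_0^T\beta\,dt\le \frac{\varepsilon_0}{2}\int_0^T a_1\,dt+\int_0^T a_2\,dt\le \frac{C\varepsilon_0^2}{2}+C\varepsilon_0^2=\tfrac{3C}{2}\varepsilon_0^2$. Multiplying the linear inequality by the integrating factor $e^{-\int_0^t\alpha}$, integrating on $[0,t]$, and using $e^{-\int_0^s\alpha}\le 1$ together with $h(0)=f(0)^2\le C^2\varepsilon_0^2$, I obtain for every $t\in[0,T]$
\[
h(t)\;\le\;e^{\int_0^t\alpha}\Big(h(0)+\int_0^t\beta\Big)\;\le\;e^{3C/2}\Big(C^2+\tfrac{3C}{2}\Big)\varepsilon_0^2\,.
\]
Taking square roots gives $f(t)\le M\varepsilon_0$ on $[0,T]$ with $M:=e^{3C/4}\big(C^2+\tfrac{3C}{2}\big)^{1/2}$, which depends only on $C$, as claimed.

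As for difficulties: this is an elementary ODE-comparison argument and presents no genuine obstacle; the only place where the exact form of the hypotheses is used is the choice of the weight $\varepsilon_0$ in the splitting of $a_1 f$, calibrated precisely so that $\frac{a_1}{2\varepsilon_0}$ integrates to something $\mathcal O(1)$ and $\frac{a_1\varepsilon_0}{2}$ integrates to something $\mathcal O(\varepsilon_0^2)$ — any other weight would leave one piece with the wrong power of $\varepsilon_0$. An equivalent route, avoiding the case $f(0)=0$ or vanishing of $f$, is to regularize by $g:=\sqrt{f^2+\varepsilon_0^2}\ge\varepsilon_0$: from $\tfrac{d}{dt}(g^2)=\tfrac{d}{dt}(f^2)\le a_0 g^2+a_1 g+a_2$ one divides by $2g$ and applies Gronwall to $g$ directly, arriving at the same conclusion (with $M=e^{C/2}(\sqrt{C^2+1}+C)$).
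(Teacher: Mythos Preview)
Your argument is correct: the AM--GM splitting of $a_1\sqrt{h}$ with weight $\varepsilon_0$ is exactly the right calibration, and the subsequent Gronwall step is clean. The alternative regularization $g=\sqrt{f^2+\varepsilon_0^2}$ that you sketch at the end is in fact the route taken in the original reference \cite{Grenier1997}, and both approaches are essentially equivalent here.

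Note, however, that the present paper does \emph{not} supply its own proof of this lemma: it is stated with attribution to Grenier and Masmoudi and used as a black box. So there is nothing in the paper to compare against beyond the citation itself; your write-up would serve perfectly well as the missing justification.
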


\begin{lemma}\label{f4}
	If \( g(0) = 0 \), \( g(t) \in C^1 \), and   
	\begin{equation}\label{1.28}
		\frac{d}{dt}g(t) \leq C_1^2 g^2(t) + C_2^2,
	\end{equation}
	where \( C_1, C_2 \) are positive constants, then the inequality 
	\begin{equation}\label{1.29}
		g(t) \leq \tan(C_1 C_2 t) \frac{C_2}{C_1}\,.
	\end{equation}
	holds for time  $t< \frac{\pi}{2C_1 C_2}$.
\end{lemma}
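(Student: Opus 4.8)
The plan is to compare $g$ with the solution of the associated Riccati ODE. First I would introduce the candidate barrier
$$
y(t):=\tfrac{C_2}{C_1}\tan(C_1C_2t),
$$
which is well-defined and $C^1$ on $[0,\tfrac{\pi}{2C_1C_2})$, satisfies $y(0)=0$, and — using $\tfrac{d}{dt}\tan\theta=1+\tan^2\theta$ — solves the equality version of \eqref{1.28}, namely $y'(t)=C_1^2y^2(t)+C_2^2$. Indeed $y'=C_2^2\sec^2(C_1C_2t)=C_2^2\bigl(1+\tan^2(C_1C_2t)\bigr)=C_1^2y^2+C_2^2$. The claim \eqref{1.29} is then exactly the assertion that $g(t)\le y(t)$ on $[0,\tfrac{\pi}{2C_1C_2})$.

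Next I would run a standard ODE comparison argument on the difference $w:=g-y$. We have $w(0)=0$, and subtracting the identity for $y$ from the differential inequality \eqref{1.28} gives
$$
w'(t)\le C_1^2\bigl(g^2(t)-y^2(t)\bigr)=C_1^2\bigl(g(t)+y(t)\bigr)\,w(t).
$$
The only delicate point — and the main, though minor, obstacle — is that this is a linear-in-$w$ inequality whose coefficient $C_1^2(g+y)$ is not a priori bounded, so one cannot immediately invoke Gronwall on the whole interval: a priori $g$ might conceivably grow faster than $y$. I would handle this by a continuity/contradiction argument. Fix any $T_0<\tfrac{\pi}{2C_1C_2}$ and suppose, for contradiction, that $g(t_0)>y(t_0)$ for some $t_0\le T_0$. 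Let $t_1:=\sup\{t\in[0,t_0]:w(t)\le 0\}$; by continuity and $w(0)=0$ this is well-defined with $w(t_1)=0$ and $w>0$ on $(t_1,t_0]$. On the compact interval $[t_1,t_0]$ the continuous function $C_1^2(g+y)$ is bounded by some constant $L$, so $w'\le Lw$ there with $w(t_1)=0$, and Gronwall's lemma forces $w\equiv 0$ on $[t_1,t_0]$, contradicting $w(t_0)>0$. Hence $w\le 0$, i.e. $g\le y$, on $[0,T_0]$; since $T_0<\tfrac{\pi}{2C_1C_2}$ was arbitrary, \eqref{1.29} holds on the full interval.

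Finally I would note that the estimate is sharp: taking equality in \eqref{1.28} with $g(0)=0$ yields $g=y$, and the blow-up time $\tfrac{\pi}{2C_1C_2}$ of $\tan(C_1C_2t)$ is precisely the maximal interval of validity — which is why, in the later application with $g$ an energy-type quantity and $C_1,C_2$ determined by the data and the curvature bounds \eqref{1.23}--\eqref{1.24}, the lemma only delivers control up to a finite time $T<\pi/2$.
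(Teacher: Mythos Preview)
Your proof is correct, but the paper takes a shorter, more direct route. Rather than introducing the barrier $y$ and running a comparison argument on $w=g-y$, the paper simply separates variables: since $C_1^2 g^2+C_2^2>0$, dividing \eqref{1.28} by this quantity gives
\[
\frac{(C_1/C_2)\,g'(t)}{1+(C_1 g(t)/C_2)^2}\le C_1C_2,\qquad\text{i.e.}\qquad \frac{d}{dt}\arctan\!\bigl(\tfrac{C_1}{C_2}g(t)\bigr)\le C_1C_2,
\]
and integrating from $0$ with $g(0)=0$ yields $\arctan(C_1 g/C_2)\le C_1C_2 t$; applying $\tan$ (monotone on $(-\pi/2,\pi/2)$) gives \eqref{1.29}. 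This exploits the explicit arctan antiderivative and avoids the continuity/contradiction step you needed to control the coefficient $C_1^2(g+y)$. Your comparison approach, on the other hand, is more robust: it would work verbatim for any differential inequality $g'\le F(g)$ once one knows the solution of the equality, without needing a closed-form primitive of $1/F$.
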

\begin{proof}
	By using \eqref{1.28}, we have
	\begin{equation*}
		\frac{\frac{c_1 g'(t)}{C_2}}{(\frac{c_1 g'(t)}{C_2})^2+1} \leq C_1  C_2\,,
	\end{equation*}
	so we have \eqref{1.29} when $t< \frac{\pi}{2C_1 C_2}$.
\end{proof}

This paper is organized as follows: 
Section \ref{sec2} extends the approximate solution to higher order. The process involves initially constructing the top and bottom boundary terms as well as the interior term, then coupling the top and bottom boundary regions using a cut-off function, and finally revising the incompressibility condition of the approximate solution. Section \ref{sec4} derive the $L^\infty$ error estimate between the approximate solution and the local strong solution of the three-dimensional system by establishing estimates related to the approximate solution and lower-order error estimates.

\section{Construction of the approximate solution}\label{sec2}
In this section, we will discuss the higher-order expansion of the solution in conjunction with the effect of the nonlinear term and the top and bottom boundary layers on the approximate solution,completing the proof of Theorem \ref{thapp}.

\begin{proof}[Proof of Theorem~{\upshape\ref{thapp}}]
In fact, the nonlinear terms do not affect the size of the boundary layer and the construction of the approximate solution but impact its stability \cite{Gerard2005}. In particular, to avoid singularities for the higher-order derivatives of the residual term in the $L^\infty$ framework, we need to consider nonlinear effects in the construction of the solution.
Clearly, some of the nonlinear terms are coupled terms involving both the top and bottom boundary terms.
Therefore, we introduce  $\chi(z)\in C^\infty(B(x,y), B(x,y)+2)$ is a cutoff function satisfying  $\chi'(z)\leq1$ and
	\begin{equation}\label{1.17}
		\chi(z)=
		\begin{cases}
			\begin{aligned}
				&0\,,  &B\leqslant z\leqslant B+1/2\,,\\
				&1\,,  &B+3/2 \leqslant z\leqslant B+2\,.
			\end{aligned}
		\end{cases}
	\end{equation}

We will approximate the solution as follows:
\begin{equation}\label{2.1}
	\begin{cases}
		\boldsymbol{u}_{app}^{\varepsilon}=\sum\limits_{i=0}^{m}{\delta}^i
		\big[\boldsymbol u^{I,i}(t,x,y,z)
		+(1-\chi)\boldsymbol u^{B,i}(t,\tilde x,\tilde y,\tilde z)
		+\chi\boldsymbol u^{T,i}(t,\bar{x},\bar{y},\bar{z})\big]
		,\\
		p_{app}^{\varepsilon}=\sum\limits_{i=0}^{m}{\delta}^i\big[p^{I,i}(t,x,y,z)+(1-\chi)p^{B,i}(t,\tilde x,\tilde y,\tilde z)+\chi p^{T,i}(t,\bar{x},\bar{y},\bar{z})
		\big],
	\end{cases}
\end{equation}
where the superscripts $I, T, B, c$ denote the internal, top-boundary, bottom-boundary, and correction terms respectively.

Since the approximate solution $\boldsymbol{u}^\varepsilon_{app}$ needs to satisfy the Dirichlet boundary condition, combined with the setting of the truncation function $\chi(z)$, the boundary and interior terms need to satisfy the following boundary conditions:
\begin{align}
	\boldsymbol u^{B,i}(t,\tilde{x},\tilde{y},\tilde{z})|_{\tilde z=0}&=-\boldsymbol u^{I,i}(t,x,y,z)|_{z=B}\,,\label{2.2}\\
	\boldsymbol u^{T,i}(t,\bar{x},\bar{y},\bar{z})|_{\bar z=0}&=-\boldsymbol u^{I,i}(t,x,y,z)|_{z=B+2}\,.\label{2.3}
\end{align}

\subsection{Calculation of boundary coordinates}\label{subsec2.1}
Given the expression of the approximate solution \eqref{2.1}, when calculating derivatives within the boundary layer under the new coordinate systems $(\tilde x,\tilde y,\tilde z)=(x,y,\frac{z - B}{\sqrt{\nu}\varepsilon}\cos^{\frac{3}{2}}\gamma)$ and $(\bar x,\bar y,\bar z)=(x,y,\frac{B + 2 - z}{\sqrt{\nu}\varepsilon}\cos^{\frac{3}{2}}\gamma)$, singularities may occur. To streamline subsequent derivations and notations, we decompose the gradient and Laplace operators based on the orders of $\delta$.

We define $\nabla_{\tilde{h}}=(\partial_{\tilde{x}},\partial_{\tilde{y}})^T$, $\nabla_{\bar{h}}=(\partial_{\bar{x}},\partial_{\bar{y}})^T$, $\Delta_{\tilde{h}}=\partial^2_{\tilde{x}}+\partial^2_{\tilde{y}}$, and $\Delta_{\bar{h}}=\partial^2_{\bar{x}}+\partial^2_{\bar{y}}$. Taking the bottom boundary as an illustration, we obtain the expression
\begin{align}
	\nabla=\,
	&\tilde{\nabla}_{\delta^{0}}
	+\delta^{-1}\tilde{\nabla}_{\delta^{-1}}\,,\label{2.4}\\
	\Delta=\,
	&\tilde{\Delta}_{\delta^{0}}
	+\delta^{-1}\tilde{\Delta}_{\delta^{-1}}
	+\delta^{-2}\tilde{\Delta}_{\delta^{-2}}\,,\label{2.5}
\end{align}
where
\begin{align*}
	\tilde{\nabla}_{\delta^{0}}=&
	\,\big(\nabla_{\tilde{h}}-\tfrac{3}{2}\tilde{z}\tan\gamma\nabla_{h}\gamma\partial_{\tilde{z}},0\big)^T\,,\\	
	\tilde{\Delta}_{\delta^{0}}
	=&\,\Delta_{\tilde{h}}+\tfrac{9}{4}|\tilde{z}\tan\gamma\nabla_{h}\gamma|^2\partial_{\tilde{z}}^2
	-3\tilde{z}
	\tan\gamma\nabla_{h}\gamma\cdot\nabla_{\tilde h}\partial_{\tilde{z}}
	\\
	& \,
	+\tilde{z}\big(\tfrac{3}{4}(1-2\cot^2\gamma)|\tan\gamma\nabla_{h}\gamma|^2
	-\tfrac{3}{2}\tan\gamma\Delta_h\gamma
	\big)\partial_{\tilde{z}}
	\,,\\
	\tilde{\nabla}_{\delta^{-1}}=&
	\,
	\nabla(\delta\tilde{z})\partial_{\tilde{z}}\,,\\
	\tilde{\Delta}_{\delta^{-1}}=&
	\,
	2\nabla_h(\delta\tilde{z})\cdot\nabla_{\tilde{h}}
	\partial_{\tilde{z}}
	-3\tilde{z}\tan\gamma\nabla_{h}\gamma\cdot\nabla_{h}(\delta\tilde{z})
	\partial^2_{\tilde{z}}\\
	&	\,+\Delta_h(\delta\tilde{z})\partial_{\tilde{z}}
	-3\tan\gamma\nabla_{h}\gamma\cdot\nabla_{h}(\delta\tilde{z})
	\partial_{\tilde{z}}
	\,,\\
	\tilde{\Delta}_{\delta^{-2}}=&
	\,\cos^{-2}\gamma\partial^2_{\tilde{z}}\,.
\end{align*}

For the terms related to the top boundary layer, we set 
\begin{align}
	\nabla=\,
	&\bar{\nabla}_{\delta^{0}}
	+\delta^{-1}\bar{\nabla}_{\delta^{-1}}
	\,,\label{2.6}\\
	\Delta=\,
	&\bar{\Delta}_{\delta^{0}}
	+\delta^{-1}\bar{\Delta}_{\delta^{-1}}
	+\delta^{-2}\bar{\Delta}_{\delta^{-2}}
	\,.\label{2.7}
\end{align}
The forms of equations \eqref{2.6}-\eqref{2.7} are similar to those of equations \eqref{2.4}-\eqref{2.5}, where $\bar{z}$ and $\bar{h}$ simply substitutes for $\tilde{z}$ and $\tilde{h}$.

Furthermore, for the sake of concise expression, we write
\begin{equation}\label{2.8}
	\boldsymbol{u}^{i}=
	\boldsymbol u^{I,i}
	+(1-\chi)\boldsymbol u^{B,i}
	+\chi\boldsymbol u^{T,i}\,,\quad\text{with}\quad
	i=0,1,2,\cdots.
\end{equation}

\subsection{Asymptotic analysis}\label{subsec2.2}
In this subsection, we asymptotically analyze the approximate solution $\boldsymbol{u}_{app}^\varepsilon$ by substituting it into the following approximate system:
\begin{equation}\label{2.9}
	\begin{cases}
		\partial_{t} {\boldsymbol u}_{app}^{\varepsilon}-\nu\varepsilon\Delta{\boldsymbol u}_{app}^{\varepsilon}+({\boldsymbol u}_{app}^{\varepsilon} \cdot \nabla) {\boldsymbol u}_{app}^{\varepsilon}+\varepsilon^{-1}\boldsymbol{R} {\boldsymbol u}_{app}^{\varepsilon}
		+\varepsilon^{-1}\nabla{p}_{app}^{\varepsilon}=\boldsymbol \rho^\varepsilon\,,\\
		\nabla\cdot {\boldsymbol u}_{app}^{\varepsilon}=0\,,\quad
		{\boldsymbol u}_{app}^{\varepsilon}|_{\partial\Omega}=0\,,
	\end{cases}
\end{equation}
and give an expression for the pressure ${p}_{app}^\varepsilon$ accordingly.

\subsubsection{Construction of ($\boldsymbol{u}^{I,0}, \boldsymbol{u}^{B,0}, \boldsymbol{u}^{T,0}$)  and (${p}^{I,0}, {p}^{B,0}, {p}^{T,0},{p}^{B,1},{p}^{T,1}$)}\label{subsec2.2.1}
First, from the boundary $\delta^{-2}$-order part of the approximate equations, it can be inferred that $\partial_{\tilde{z}}p^{B,0}=\partial_{\tilde{z}}p^{T,0}=0$.  Without loss of generality, we take 
\begin{equation}\label{2.10}
	p^{B,0}=p^{T,0}=0\,.
\end{equation}

Second, from the $\delta^{-1}$-order part of the boundary in the incompressibility condition, the non-flat boundary induces a vertical component of the $\delta^{0}$-order boundary terms:
\begin{equation}\label{2.11}
	u_3^{T,0}(t,\bar{x},\bar{y},\bar{z})=\nabla_hB\cdot\boldsymbol u_h^{T,0}\,,
\end{equation} 
and
\begin{equation}\label{2.12}
	u_3^{B,0}(t,\tilde{x},\tilde{y},\tilde{z})=\nabla_hB\cdot\boldsymbol u_h^{B,0}\,.
\end{equation}
Obviously, from boundary conditions \eqref{2.2} and \eqref{2.3}, it follows that the $\delta^{0}$-order interior term satisfies
\begin{equation}\label{2.13}
	u_3^{I,0}=\nabla_hB\cdot\boldsymbol u_h^{I,0}\,.
\end{equation}

Then, it follows from the internal $\delta^{-1}$-order part that $p^{I,0}$ and $\boldsymbol{u}^{I,0}$ are independent of $z$ and have the following relationship:
\begin{equation}\label{2.14}
	p^{I,0}(t,x,y)=\Delta_h^{-1}\nabla^\perp_h \cdot \boldsymbol{u}^{I,0}_h\,.
\end{equation}
Also, according to the internal term $\delta^{0}$-order part, the following system holds		
\begin{equation*}
	\begin{cases}
		\partial_{t} \boldsymbol{u}^{I,0}_h+\boldsymbol{u}^{I,0}_h \cdot \nabla_h \boldsymbol{u}^{I,0}_h
		-\sqrt{\nu}\cos^{-\frac32}\gamma
		\boldsymbol{E_1}\boldsymbol{u}^{I,1}_h
		+\nabla_h(\sqrt{\nu}\cos^{-\frac32}\gamma
		{p}^{I,1})=0\,,\\
		\nabla\cdot \boldsymbol{u}^{I,0}=\nabla_h \cdot \boldsymbol{u}^{I,0}_h=0\,.
	\end{cases}
\end{equation*}

Finally, based on the above analysis, the bottom boundary $\delta^{-1}$-order part of the approximate equations reduces to
\begin{equation*}
	\begin{cases}	
		\boldsymbol{H_0} \partial_{\tilde z}^2\boldsymbol u_h^{B,0}+\cos^{-1}\gamma\boldsymbol{E_1}\boldsymbol u_h^{B,0}=0\,,\\
		\partial_{\tilde{z}}p^{B,1}=\cos\gamma\partial_{\tilde{z}}^2 u_3^{B,0},\\
		\boldsymbol u^{B,0}_h|_{\tilde z=0}=-\boldsymbol{u}^{I,0}_h,\quad
		\lim\limits_{\tilde{z}\rightarrow \infty}\boldsymbol u^{B,0}_h=0\,,
	\end{cases}
\end{equation*}
which is given by
\begin{equation}\label{2.15}
	\begin{cases}
		\boldsymbol u_h^{B,0}=\boldsymbol{M}(\tfrac{\tilde z}{\sqrt 2})
		\boldsymbol{u}^{I,0}_h\,,\\
		p^{B,1}=\mathrm{e}^{-\frac{\tilde z}{\sqrt 2}}\cos\gamma\big( \sin(\tfrac{\tilde z}{\sqrt 2}-\tfrac\pi4)\cos\gamma\nabla^\perp_h B
		+\cos(\tfrac{\tilde z}{\sqrt 2}-\tfrac\pi4)\nabla_hB\big)\cdot\boldsymbol{u}^{I,0}_h\,,
	\end{cases}
\end{equation}
where
\begin{equation*}
	\boldsymbol{M}(Z)=
	-\mathrm{e}^{-Z}
	\Big(
	\cos(Z)\boldsymbol{E}+\sin(Z)\cos\gamma\boldsymbol{E_1}\boldsymbol{H_0}
	\Big)\,.
\end{equation*}
In a similar way, $(\boldsymbol{u}^{T,0}_h,p^{T,1})$ can be defined as
\begin{equation}\label{2.16}
	\begin{cases}
		\boldsymbol u_h^{T,0}=\boldsymbol{M}(\tfrac{\bar z}{\sqrt 2})
		\boldsymbol{u}^{I,0}_h\,,\\
		p^{T,1}=\mathrm{e}^{-\frac{\bar z}{\sqrt 2}}\cos\gamma
		\big( 
		\sin(\tfrac\pi4-\tfrac{\bar z}{\sqrt 2})\cos\gamma\nabla^\perp_h B
		-\cos(\tfrac\pi4-\tfrac{\tilde z}{\sqrt 2})\nabla_hB\big)\cdot\boldsymbol{u}^{I,0}_h\,.
	\end{cases}
\end{equation}

\subsubsection{Construction of ($\boldsymbol{u}^{I,1}, \boldsymbol{u}^{B,1}, \boldsymbol{u}^{T,1}$)  and (${p}^{I,1}, {p}^{B,2}, {p}^{T,2}$)}\label{subsec2.2.2}
For the $\delta^1$-order top and bottom boundary layer terms, the following relation follows from the incompressibility conditions:
\begin{align}\label{2.17}
	\nabla(z-B)\cdot \boldsymbol{u}^{B,1}
	=&	
	\tfrac{3}{2}\cos^2\gamma\nabla_{h}B^T\boldsymbol{H}\big(
	\mathrm{e}^{-\frac{\tilde z}{\sqrt 2}}\sin(\tfrac{\tilde z}{\sqrt 2}-\tfrac\pi4)\boldsymbol{u}^{I,0}_h+\tilde{z}\boldsymbol u_h^{B,0}
	\big)\notag\\
	&-\mathrm{e}^{-\frac{\tilde z}{\sqrt 2}}\cos(\tfrac{\tilde z}{\sqrt 2}-\tfrac\pi4)
	\nabla_h\cdot(\cos\gamma\boldsymbol{E_1}\boldsymbol{H_0}\boldsymbol{u}^{I,0}_h)
	\notag\\
	&-\mathrm{e}^{-\frac{\tilde z}{\sqrt 2}}\cos(\tfrac{\tilde z}{\sqrt 2}-\tfrac\pi4)
	\tfrac{3\cos^2\gamma}{2}\nabla_{h}B^T\boldsymbol{H}\boldsymbol{E_1}\boldsymbol{H_0}\boldsymbol{u}^{I,0}_h\,,
\end{align}
and
\begin{align}\label{2.18}
	\nabla(z-B)\cdot \boldsymbol{u}^{T,1}
	=
	&
	-\tfrac{3}{2}\cos^2\gamma\nabla_{h}B^T\boldsymbol{H}\big(
	\mathrm{e}^{-\frac{\bar z}{\sqrt 2}}\sin(\tfrac{\bar z}{\sqrt 2}-\tfrac\pi4)\boldsymbol{u}^{I,0}_h+\bar{z}\boldsymbol u_h^{T,0}
	\big)\notag\\
	&+\mathrm{e}^{-\frac{\bar z}{\sqrt 2}}\cos(\tfrac{\bar z}{\sqrt 2}-\tfrac\pi4)
	\nabla_h\cdot(\cos\gamma\boldsymbol{E_1}\boldsymbol{H_0}\boldsymbol{u}^{I,0}_h)
	\notag\\
	&+\mathrm{e}^{-\frac{\bar z}{\sqrt 2}}\cos(\tfrac{\bar z}{\sqrt 2}-\tfrac\pi4)\tfrac{3\cos^3\gamma}{2}\nabla_{h}B^T\boldsymbol{H}\boldsymbol{E_1}\boldsymbol{H_0}\boldsymbol{u}^{I,0}_h\,.
\end{align}

Using the boundary condition for the interior term $\boldsymbol{u}^{I,1}$, it follows that 
\begin{equation}\label{2.19}
		u_3^{I,1}|_{z=B}=-u_3^{B,1}|_{\tilde{z}=0}\,,\quad
		u_3^{I,1}|_{z=B+2}=-u_3^{T,1}|_{\bar{z}=0}\,.
\end{equation}
Combining \eqref{2.17}-\eqref{2.19}, we can define $\boldsymbol{u}^{I,1}$ in a natural way that
\begin{equation}\label{2.20}
	\begin{cases}
		\boldsymbol u_h^{I,1}
		=\tfrac{1}{\sqrt{2}}
		\cos^{\frac32}\gamma
		\big(
		\boldsymbol{E_1}\boldsymbol{H_0}+\boldsymbol{E}	
		\big)\boldsymbol{u}^{I,0}_h\,,\\
		u_3^{I,1}
		=
		-(1-\chi)	u_3^{B,1}|_{z=B}-\chi	u_3^{T,1}|_{z=B+2}\,,
	\end{cases}
\end{equation}
where the structure of $\boldsymbol u_h^{I,1}$ follows the framework of the literature \cite{Jia2024}, but its coefficients are simplified for easier numerical realization.

Therefore, combining the expression $\eqref{2.20}_1$ for $\boldsymbol{u}^{I,1}_h(t,x,y)$, the limiting system is ultimately obtained as
\begin{equation}\label{2.21}
	\begin{cases}
		\partial_{t} \boldsymbol{u}^{I,0}_h+(\boldsymbol{u}^{I,0}_h \cdot \nabla_h) \boldsymbol{u}^{I,0}_h
		+\sqrt{\frac{\nu}{2}}
		(\boldsymbol{H_0}
		-\boldsymbol{E_1})
		\boldsymbol{u}^{I,0}_h+\nabla_h\bar{p}=0\,,\\
		u^{I,0}_3=\nabla_h{B}\cdot\boldsymbol{u}^{I,0}_h\,,\\
		\nabla\cdot \boldsymbol{u}^{I,0}=\nabla_h \cdot \boldsymbol{u}^{I,0}_h=0\,,
	\end{cases}
\end{equation}
where 
\begin{equation}\label{2.22}
	\bar{p}=\sqrt{\nu}\cos^{-\frac32}\gamma
	{p}^{I,1}\,.
\end{equation}

In order to obtain expressions for the $\delta^{1}$-order boundary terms, we simplify and diagonalize the $\delta^{0}$-order boundary part of the approximate equations. Taking $\boldsymbol{u}_h^{B,1}$ as an example, we have
\begin{equation}\label{2.23}
	\partial_{\tilde z}^2 \boldsymbol u_h^{B,1}+
	\boldsymbol{Q}\begin{pmatrix}
		i&0\\
		0&-i
	\end{pmatrix}\boldsymbol{Q}^{-1}\boldsymbol u_h^{B,1}=\boldsymbol{H_0}^{-1}
	\bigg(\begin{pmatrix}
		1&0&B_x\\
		0&1&B_y	
	\end{pmatrix}\boldsymbol{D}^{B,1}+\partial_{\tilde{z}} D^{B,1}_0\nabla_h B\bigg)\,,
\end{equation}
where
\begin{equation*}
	\boldsymbol{Q}=\begin{pmatrix}
		\cos\gamma(1+B_y^2)
		&\cos\gamma(1+B_y^2)\\
		i-\cos\gamma B_xB_y
		&-i-\cos\gamma B_xB_y
	\end{pmatrix}\,,
\end{equation*}
and
\begin{align*}
	\boldsymbol{D}^{B,1}
	=&(\cos\gamma\delta)^{-1}\tilde{\nabla}_{\delta^0}(\delta p^{B,1})
	-\cos^2\gamma\tilde{\Delta}_{\delta^{-1}}\boldsymbol u^{B,0}
	\\
	&
	+\sqrt{\tfrac{\cos\gamma}{\nu}}
	\Big(
	\partial_t\boldsymbol u^{B,0}+\boldsymbol{u}^{0}\cdot\tilde{\nabla}_{\delta^0}\boldsymbol{u}^{B,0}
	+(\boldsymbol{u}_h^{B,0}\cdot\nabla_h)\boldsymbol{u}^{I,0}+\boldsymbol{u}^{1}\cdot\nabla(\delta\tilde{z})\partial_{\tilde{z}}\boldsymbol{u}^{B,0}
	\Big)\,,\\
	D^{B,1}_0=
	&\tilde{\nabla}_{\delta^0}\cdot\boldsymbol u^{B,0}\,.
\end{align*}
For ease of presentation, for $i\in\mathbb{Z_+}$, the following notation is given with respect to the top and bottom boundary layers:
\begin{align*}
	&\begin{pmatrix}
		G^{B,i}_1\\
		G^{B,i}_2
	\end{pmatrix}=\boldsymbol{H_0}^{-1}
	\bigg(\begin{pmatrix}
		1&0&B_x\\
		0&1&B_y	
	\end{pmatrix}\boldsymbol{D}^{B,i}+\partial_{\tilde{z}} D^{B,i}_0\nabla_h B\bigg)\,,\\
	&\begin{pmatrix}
		G^{T,i}_1\\
		G^{T,i}_2
	\end{pmatrix}=\boldsymbol{H_0}^{-1}
	\bigg(\begin{pmatrix}
		1&0&B_x\\
		0&1&B_y	
	\end{pmatrix}\boldsymbol{D}^{T,i}+\partial_{\bar{z}} D^{T,i}_0\nabla_h B\bigg)\,.
\end{align*}
The structure of $\boldsymbol{D}^{T,i}$ and ${D}_0^{T,i}$
are analogly to that of  $\boldsymbol{D}^{B,i}$ and ${D}_0^{B,i}$,
with the only modification being the replacement of the bottom boundary terms and their derivative operators in 
$\boldsymbol{D}^{B,i}$ and ${D}_0^{B,i}$ by the corresponding top boundary terms.
To streamline the presentation, we focus solely on the bottom boundary terms in the following discussion.

Note that due to the $\delta^{1}$-order part of the approximate solution having relations \eqref{2.17}, \eqref{2.18} and \eqref{2.20}, it is known that $\boldsymbol{u}^{1}\cdot\nabla(\delta\tilde{z})=\big(\boldsymbol{u}^{I,1}+(1-\chi)\boldsymbol{u}^{B,1}+\chi\boldsymbol{u}^{T,1}\big)\cdot\nabla(z-B)$ in $\boldsymbol{D}^{B,1}$.
Thus, solving equations \eqref{2.23},  yields
\begin{equation}\label{2.24}
	\boldsymbol u_h^{B,1}=
	\boldsymbol{M}(\tfrac{\tilde z}{\sqrt 2})
	\boldsymbol u_h^{I,1}
	+
	Re\boldsymbol{Q}
	\int_0^{\tilde{z}}
	\boldsymbol{Q}^{-1}\begin{pmatrix}
		{\mathrm{e}}^{-\sqrt{-i}(\tilde{z}-\tau)} G^{B,1}_1(\tau)
		\\
		{\mathrm{e}}^{-\sqrt{i}(\tilde{z}-\tau)} G^{B,1}_2(\tau) 
	\end{pmatrix}{d}\tau\,.
\end{equation}
Similarly, we have
\begin{equation}\label{2.25}
	\boldsymbol u_h^{T,1}=
	\boldsymbol{M}(\tfrac{\bar z}{\sqrt 2})
	\boldsymbol u_h^{I,1}
	+
	Re\boldsymbol{Q}
	\int_0^{\bar{z}}
	\boldsymbol{Q}^{-1}\begin{pmatrix}
		{\mathrm{e}}^{-\sqrt{-i}(\bar{z}-\tau)} G^{T,1}_1(\tau)
		\\
		{\mathrm{e}}^{-\sqrt{i}(\bar{z}-\tau)} G^{T,1}_2(\tau) 
	\end{pmatrix}{d}\tau\,.
\end{equation}

Meanwhile, from the relation equation between $p^{B,2}$ and $u_3^{B,1}$, we can get
\begin{equation}\label{2.26}
	p^{B,2}=-\int_{\tilde{z}}^{+\infty}\big(
	\cos\gamma\partial_{\tilde{z}'}^2 u_3^{B,1}-\cos\gamma D_3^{B,1}
	\big)d\tilde{z}'\,.
\end{equation}
Similarly, we have
\begin{equation}\label{2.27}
	p^{T,2}=-\int_{\bar{z}}^{+\infty}\big(
	-\cos\gamma\partial_{\bar{z}'}^2 u_3^{T,1}+\cos\gamma D_3^{T,1}
	\big)d\bar{z}'\,.
\end{equation}

\subsubsection{Construction of ($\boldsymbol{u}^{I,2}, \boldsymbol{u}^{B,2}, \boldsymbol{u}^{T,2}$)  and (${p}^{I,2}, {p}^{B,3}, {p}^{T,3}$)}\label{subsec2.2.3}

Given the stronger norm framework employed in this study to explore convergence, we require more sufficiently small remainder terms. 
Unlike the prior work, which only constructed the third components of $\boldsymbol{u}^{B,2}$ and $\boldsymbol{u}^{T,2}$ to satisfy the incompressibility conditions coupled with $\boldsymbol{u}^{B,1}$ and $\boldsymbol{u}^{T,1}$, we specifically construct all components of $\boldsymbol{u}^{B,2}$ and $\boldsymbol{u}^{T,2}$.

Firstly, similar to the previous parts, we need to simplify 
$\boldsymbol{u}^{2}\cdot\nabla(\delta\tilde{z})$
according to the incompressibility condition for ease of computation.
It is worth noting that $\boldsymbol{u}^{1}\cdot\nabla(\delta\tilde{z})$ was previously shown to be derived and expressed in terms of $\boldsymbol{u}^{I,0}_h$ because of the need to construct the limit system. 
Here, for $\boldsymbol{u}^{2}\cdot\nabla(\delta\tilde{z})$ and remaining items, only the construction is needed, and no detailed expression is required.

From the incompressibility conditions of $\boldsymbol{u}^{B,2}$ and $\boldsymbol{u}^{T,2}$, it follows that
\begin{equation*}
	\nabla(z-B)\cdot\boldsymbol{u}^{B,2}=
	\int_{\tilde{z}}^{+\infty}
	\delta^{-1}\tilde{\nabla}_{\delta^0}
	\cdot(\delta\boldsymbol{u}^{B,1})
	d\tilde{z}'\,,
\end{equation*}
and
\begin{equation*}
	\nabla(z-B)\cdot\boldsymbol{u}^{T,2}=
	-\int_{\bar{z}}^{+\infty}
	\delta^{-1}\bar{\nabla}_{\delta^0}
	\cdot(\delta\boldsymbol{u}^{T,1})
	d\bar{z}'\,.
\end{equation*}

Let us now solve for $\boldsymbol{u}^{I,2}$. First, the $\delta^1$-order interior part of the approximate equations is organized as
\begin{multline}\label{2.28}
	\partial_{t} \boldsymbol{u}^{I,1}
	-\sqrt{\nu}\cos^{\frac{3}{2}}\gamma\Delta_h\boldsymbol{u}^{I,0}
	+(\boldsymbol{u}_h^{I,1}\cdot\nabla_{ h})\boldsymbol{u}^{I,0}
	+	\delta^{-1}(\boldsymbol{u}^{I,0}\cdot\nabla)(\delta\boldsymbol{u}^{I,1})\\
	+\sqrt{\nu}\cos^{-\frac{3}{2}}\gamma\boldsymbol{R}\boldsymbol{u}^{I,2}
	+\sqrt{\nu}\cos^{\frac{3}{2}}\gamma\nabla(\cos^{-3}\gamma p^{I,2})
	=0\,.
\end{multline} 
In particular, the third part implies 
\begin{align*}
	\partial_z(\cos^{-3}\gamma p^{I,2})
	=&-(\sqrt{\nu}\cos^{\frac{3}{2}}\gamma)^{-1}
	(\boldsymbol{u}_h^{I,1}\cdot\nabla_{ h}){u}^{I,0}_3
	+\Delta_h{u}^{I,0}_3\\
	&-(\sqrt{\nu}\cos^{\frac{3}{2}}\gamma)^{-1}\delta^{-1}
	\big(
	\partial_{t}	+\boldsymbol{u}^{I,0}\cdot\nabla\big)(\delta{u}_3^{I,1})\,.
\end{align*}
Thus, we have
\begin{align}\label{2.29}
	p^{I,2}
	=&-\cos^3\gamma
	\int_0^z\big((\sqrt{\nu}\cos^{\frac{3}{2}}\gamma)^{-1}
	(\boldsymbol{u}_h^{I,1}\cdot\nabla_{ h}){u}^{I,0}_3
	+\Delta_h{u}^{I,0}_3\big)\,dz'
	\notag\\
	&-\cos^3\gamma
	\int_0^z\big(
	(\sqrt{\nu}\cos^{\frac{3}{2}}\gamma)^{-1}\delta^{-1}
	\big(
	\partial_{t}	+\boldsymbol{u}^{I,0}\cdot\nabla\big)(\delta{u}_3^{I,1})
	\big)\,dz'.
\end{align}
Then, it follows from \eqref{2.28} that
\begin{align}\label{2.30}
	\boldsymbol{u}_h^{I,2}
	=&-\tfrac{\cos^{\frac{3}{2}}\gamma}{\sqrt{\nu}}\boldsymbol{E_1}\big(
	\partial_{t} \boldsymbol{u}_h^{I,1}
	-\sqrt{\nu}\cos^{\frac{3}{2}}\gamma\Delta_h\boldsymbol{u}^{I,0}_h
	+(\boldsymbol{u}_h^{I,1}\cdot\nabla_{ h})\boldsymbol{u}^{I,0}_h\big)\notag\\
	&-\tfrac{\cos^{\frac{3}{2}}\gamma}{\sqrt{\nu}}\boldsymbol{E_1}\big(	\delta^{-1}(\boldsymbol{u}_h^{I,0}\cdot\nabla_h)(\delta\boldsymbol{u}_h^{I,1})	
	+\sqrt{\nu}\cos^{\frac{3}{2}}\gamma\nabla_h(\cos^{-3}\gamma p^{I,2})
	\big)\,.
\end{align} 
Based on this, and using the boundary conditions \eqref{2.2}-\eqref{2.3} of $\boldsymbol{u}^{B,2}$ and $\boldsymbol{u}^{T,2}$, as well as the setup \eqref{1.17} of $\chi(z)$, we can derive
\begin{align}\label{2.31}
	u_3^{I,2}
	=&-(1-\chi)	u_3^{B,2}|_{z=B}-\chi	u_3^{T,2}|_{z=B+2}\notag\\
	=&(1-\chi)\Big(
	\nabla_hB\cdot\boldsymbol{u}_h^{I,2}|_{z=B}
	-\int_{0}^{+\infty}
	\delta^{-1}\tilde{\nabla}_{\delta^0}
	\cdot(\delta\boldsymbol{u}^{B,1})
	d\tilde{z}'
	\Big)\notag\\
	&+\chi\Big(
	\nabla_hB\cdot\boldsymbol{u}_h^{I,2}|_{z=B+2}
	+\int_{0}^{+\infty}
	\delta^{-1}\bar{\nabla}_{\delta^0}
	\cdot(\delta\boldsymbol{u}^{T,1})
	d\bar{z}'
	\Big)\,.
\end{align}

Repeating the process of Subsubsection \ref{subsec2.2.2} yields
\begin{equation}\label{2.32}
	\begin{cases}
		\boldsymbol u_h^{B,2}=
		\boldsymbol{M}(\tfrac{\tilde z}{\sqrt 2})
		\boldsymbol u_h^{I,2}|_{z=B}
		+
		Re\boldsymbol{Q}
		\int_0^{\tilde{z}}
		\boldsymbol{Q}^{-1}\begin{pmatrix}
			{\mathrm{e}}^{-\sqrt{-i}(\tilde{z}-\tau)} G^{B,2}_1(\tau)
			\\
			{\mathrm{e}}^{-\sqrt{i}(\tilde{z}-\tau)} G^{B,2}_2(\tau) 
		\end{pmatrix}{d}\tau\,,\\
		u_3^{B,2}=\nabla_{h}B\cdot\boldsymbol u_h^{B,2}+\int_{\tilde{z}}^{+\infty} D_0^{B,2}d\tilde{z}'\,,\\
		p^{B,3}=-\int_{\tilde{z}}^{+\infty}\big(
		\cos\gamma\partial_{\tilde{z}'}^2 u_3^{B,2}-\cos\gamma D_3^{B,2}
		\big)d\tilde{z}'\,,
	\end{cases}
\end{equation}
and
\begin{equation}\label{2.33}
	\begin{cases}
		\boldsymbol u_h^{T,2}=
		\boldsymbol{M}(\tfrac{\bar z}{\sqrt 2})
		\boldsymbol u_h^{I,2}|_{z=B+2}
		+
		Re\boldsymbol{Q}
		\int_0^{\bar{z}}
		\boldsymbol{Q}^{-1}\begin{pmatrix}
			{\mathrm{e}}^{-\sqrt{-i}(\bar{z}-\tau)} G^{T,2}_1(\tau) 
			\\
			{\mathrm{e}}^{-\sqrt{i}(\bar{z}-\tau)} G^{T,2}_2(\tau)
		\end{pmatrix} {d}\tau\,,\\	
		u_3^{T,2}=\nabla_{h}B\cdot\boldsymbol u_h^{T,2}+\int_{\bar{z}}^{+\infty} D_0^{T,2}d\bar{z}'\,,\\
		p^{T,3}=-\int_{\bar{z}}^{+\infty}\big(
		-\cos\gamma\partial_{\bar{z}'}^2 u_3^{T,2}+\cos\gamma D_3^{T,2}
		\big)d\bar{z}'\,,	
	\end{cases}
\end{equation}
where, for $i,j\in \{0,1\}$, there are
\begin{align*}
	\boldsymbol{D}^{B,2}
	=&\cos^{-1}\gamma\tilde{\nabla}_{\delta^0}p^{B,2}
	+2(\cos\gamma\delta)^{-1}p^{B,2}\nabla\delta
	-\cos^2\gamma\tilde{\Delta}_{\delta^{-1}}\boldsymbol u^{B,1}
	-\cos^2\gamma\tilde{\Delta}_{\delta^{0}}\boldsymbol{u}^{B,0}
	\\
	&
	+\sqrt{\tfrac{\cos\gamma}{\nu}}
	\Big(
	\partial_t\boldsymbol u^{B,1}
	+\tfrac{3}{2}\tan\gamma\nabla_{ h}\gamma\cdot
	\big(
	\boldsymbol{u}_h^{B,0} \boldsymbol{u}^{I,1}+\boldsymbol{u}^{0}_h \boldsymbol{u}^{B,1}
	\big)\Big)
	\\
	&+\sqrt{\tfrac{\cos\gamma}{\nu}}
	\sum_{i+j=1}\big(
	\boldsymbol{u}^{i}\cdot\tilde{\nabla}_{\delta^0}\boldsymbol{u}^{B,j}
	+\boldsymbol{u}^{1+i}\cdot\nabla(\delta\tilde{z})\partial_{\tilde{z}}\boldsymbol{u}^{B,j}+(\boldsymbol{u}^{B,i}\cdot\nabla)\boldsymbol{u}^{I,j}\big)\,,
	\\
	D^{B,2}_0=
	&\delta^{-1}\tilde{\nabla}_{\delta^0}\cdot(\delta\boldsymbol u^{B,1})\,.
\end{align*}

In summary, the approximate solution constructed above satisfies only
\begin{equation}\label{2.34}
	\sum_{i=0}^{2}{\delta}^i
	\boldsymbol u^{i}|_{\partial\Omega}=0\,.
\end{equation}

\begin{remark}
	In fact, in analyzing the approximate equations \eqref{2.9}, some terms involve the derivatives of $\chi(z)$. However, due to the setting \eqref{1.17} of $\chi(z)$, we know that the derivatives of $\chi(z)$ are zero near the boundary layer. Therefore, these terms are neither discussed in the boundary parts nor analyzed in the interior parts because of their construction. The details will be amended in the next subsection.
\end{remark}

\subsection{Modification of incompressible condition}\label{subsec2.3}
This subsection aims to recover the divergence-free condition, which the previously constructed solutions do not satisfy, namely,
\begin{align}\label{2.35}
	\nabla\cdot \Big(\sum_{i=0}^{2}{\delta}^i
	\boldsymbol u^{i}\Big)
	=&\chi'\sum_{i=0}^{2}{\delta}^i
	(u_3^{T,i}-u_3^{B,i})
	+\nabla\cdot(\delta\boldsymbol{u}^{I,1})
	+
	\nabla\cdot(\delta^2\boldsymbol{u}^{I,2})
	\notag\\
	&+
	(1-\chi)\tilde{\nabla}_{\delta^0}\cdot(\delta^2\boldsymbol u^{B,2})
	+\chi\bar{\nabla}_{\delta^0}\cdot(\delta^2\boldsymbol u_h^{T,2})
	\,.
\end{align}
Note that the correction term needs to satisfy the boundary conditions.
In addition, for the part of the equation involving the derivative of $\chi(z)$ and the velocity field correction term, we will correct it with the pressure term.

Firstly, we correct the incompressibility condition according to the order of $\delta$, one by one. By observing the order of \eqref{2.35}, take $\boldsymbol{u}^c=\sum_{i=0}^2\delta^i\boldsymbol{u}^{c,i}$, and let it be satisfied
\begin{align}
	&\begin{cases}
		\nabla \cdot \boldsymbol u^{c,0}=\chi'({u}_3^{B,0}-{u}_3^{T,0})\,,\\
		\boldsymbol u^{c,0}|_{\partial\Omega}=0\,,
	\end{cases}\label{2.36}\\
	&\begin{cases}
		\nabla \cdot (\delta\boldsymbol u^{c,1})=\chi'\delta({u}_3^{B,1}-{u}_3^{T,1})-\nabla\cdot(\delta\boldsymbol{u}^{I,1})\,,\\
		\delta\boldsymbol u^{c,1}|_{\partial\Omega}=0\,,
	\end{cases}\label{2.37}\\
	&		\begin{cases}
		\begin{aligned}
			\nabla \cdot (\delta^2\boldsymbol u^{c,2})=
			&\chi'\delta^2({u}_3^{B,2}-{u}_3^{T,2})
			-\nabla\cdot(\delta^2\boldsymbol{u}^{I,2})
			\\
			&-
			(1-\chi)\tilde{\nabla}_{\delta^0}\cdot(\delta^2\boldsymbol u^{B,2})
			-\chi\bar{\nabla}_{\delta^0}\cdot(\delta^2\boldsymbol u_h^{T,2})\,,
		\end{aligned}\\
		\delta^2\boldsymbol u^{c,2}|_{\partial\Omega}=0\,.
	\end{cases}\label{2.38}
\end{align}
For the correction term $\boldsymbol{u}^c$, we have:	

\begin{proposition}\label{propC}	
	Assuming ${\boldsymbol{u}}^{I,0}_{h}\in L^\infty(\mathbb{R}^2)\cap H^s(\mathbb{R}^2)~(s\geqslant5)$ and $B(x,y)\in C^\infty(\mathbb{R}^2)$, 
	there exist $\boldsymbol{u}^{c,0},\delta\boldsymbol{u}^{c,1},\delta^2\boldsymbol{u}^{c,2}\in W^{1,\infty}(\mathbb{R}_+;H^1(\Omega))$ satisfying \eqref{2.36}-\eqref{2.38}
	with the following estimation holds:
	\begin{align*} 
		\Big\lVert 
		\sum_{i=0}^{2}\delta^i\boldsymbol u^{c,i}
		(t)
		\Big\rVert_{W^{1,\infty}(\mathbb{R}_+;H^1(\Omega))}
		\lesssim&
		\lVert
		\boldsymbol{u}^{I,0}_h
		\rVert_{{L}^{\infty}(\mathbb{R}^2)}
		\lVert
		\boldsymbol{u}^{I,0}_h
		\rVert_{{H}^{1}(\mathbb{R}^2)}+ 
		\lVert
		\boldsymbol{u}^{I,0}_h
		\rVert_{{L}^{2}(\mathbb{R}^2)}\\
		&+\varepsilon
		(\lVert
		\boldsymbol{u}^{I,0}_h
		\rVert_{{L}^{\infty}(\mathbb{R}^2)}
		\lVert
		\boldsymbol{u}^{I,0}_h
		\rVert_{{H}^{2}(\mathbb{R}^2)}+ 
		\lVert
		\boldsymbol{u}^{I,0}_h
		\rVert_{{H}^{1}(\mathbb{R}^2)})
		\notag\\
		&
		+\varepsilon^2
		(\lVert
		\boldsymbol{u}^{I,0}_h
		\rVert_{{L}^{\infty}(\mathbb{R}^2)}
		\lVert
		\boldsymbol{u}^{I,0}_h
		\rVert_{{H}^{5}(\mathbb{R}^2)}+ 
		\lVert
		\boldsymbol{u}^{I,0}_h
		\rVert_{{H}^{4}(\mathbb{R}^2)})\,,
	\end{align*}
	where $t\in [0,T)$.
\end{proposition}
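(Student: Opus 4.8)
The plan is to treat \eqref{2.36}--\eqref{2.38} as divergence (Bogovskii-type) problems on the slab $\Omega=\mathbb{R}^{2}\times[B,B+2]$, to solve them with an explicit right inverse of $\nabla\cdot$ adapted to the geometry, and to read off the $W^{1,\infty}(\mathbb{R}_{+};H^{1}(\Omega))$ bound from the closed-form profiles \eqref{2.11}--\eqref{2.33}. First I would record the solvability structure. Summing \eqref{2.36}--\eqref{2.38} shows that $\boldsymbol u^{c}=\sum_{i=0}^{2}\delta^{i}\boldsymbol u^{c,i}$ must be a right inverse of $g:=-\nabla\cdot\big(\sum_{i=0}^{2}\delta^{i}\boldsymbol u^{i}\big)$; since $\sum_{i=0}^{2}\delta^{i}\boldsymbol u^{i}$ vanishes on $\partial\Omega$ by \eqref{2.34}, an integration by parts over the vertical fibres shows $\int_{B}^{B+2}g\,dz=-\nabla_{h}\cdot\boldsymbol V$ with $\boldsymbol V(x,y):=\int_{B}^{B+2}\big(\sum_{i=0}^{2}\delta^{i}\boldsymbol u^{i}_{h}\big)\,dz$, i.e. the vertical fibre-average of the datum is a genuine horizontal divergence on $\mathbb{R}^{2}$ --- exactly the compatibility condition needed to invert $\nabla\cdot$ over the unbounded base.

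Then I would build the right inverse explicitly. With $\bar g(x,y):=\tfrac12\int_{B}^{B+2}g\,dz$ I would set $\boldsymbol u^{c}_{h}(x,y,z):=-\tfrac12\psi(z)\boldsymbol V(x,y)$ and $u^{c}_{3}(x,y,z):=\int_{B}^{z}\big(g-\psi\bar g\big)(x,y,s)\,ds$, where $\psi$ is a fixed vertical cut-off with $\psi(B)=\psi(B+2)=0$ and $\int_{B}^{B+2}\psi\,dz=2$. Then $\boldsymbol u^{c}$ vanishes on $\partial\Omega$, $\nabla\cdot\boldsymbol u^{c}=g$, and the operator $g\mapsto\boldsymbol u^{c}$ is bounded, so every required $H^{1}(\Omega)$ estimate reduces to controlling $\|g\|_{L^{2}(\Omega)}$ and $\|\boldsymbol V\|_{H^{1}(\mathbb{R}^{2})}$; the three pieces $\boldsymbol u^{c,0},\delta\boldsymbol u^{c,1},\delta^{2}\boldsymbol u^{c,2}$ are recovered by sorting these formulas according to powers of $\delta$. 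Note that $\boldsymbol V$ is controlled only in $L^{2}$ by the datum (with no gain in $\delta$), which is the structural reason the leading tier of the bound is $O(1)$ rather than $O(\varepsilon)$.

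It then remains to estimate $\|g\|_{L^{2}(\Omega)}$ and $\|\boldsymbol V\|_{W^{1,\infty}(\mathbb{R}_{+};H^{1})}$ through $\boldsymbol u^{I,0}_{h}$. Substituting \eqref{2.11}--\eqref{2.33}, I would use: (i) the exponential decay of every boundary-layer profile in its fast variable, which makes each $\chi'$-term in \eqref{2.35}--\eqref{2.38} exponentially small --- $\chi'$ being supported where $\tilde z,\bar z\gtrsim\varepsilon^{-1}$ --- and gives a factor $\sqrt{\delta}$ when passing from $L^{2}_{z}$ to $L^{2}_{\tilde z}$ on the remaining boundary terms; (ii) the limiting system \eqref{2.21} together with the Leray-projection estimate for $\nabla_{h}\bar p$, used to express $\partial_{t}\boldsymbol u^{I,0}_{h}$ by quadratic and linear terms in $\boldsymbol u^{I,0}_{h}$ --- this is where $\|\boldsymbol u^{I,0}_{h}\|_{L^{\infty}}\|\boldsymbol u^{I,0}_{h}\|_{H^{1}}$ and the bare $\|\boldsymbol u^{I,0}_{h}\|_{L^{2}}$ enter; (iii) the uniform bounds on $\cos^{-1}\gamma$, $\boldsymbol H_{0}$, $\boldsymbol H$ from \eqref{+16}. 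Collecting powers of $\delta\sim\varepsilon$ --- checking that the apparently singular $\delta^{-1}$-factors carried by $\tilde{\nabla}_{\delta^{-1}},\tilde{\Delta}_{\delta^{-2}}$ in \eqref{2.4}--\eqref{2.7} cancel, so that $g$ carries at least one power of $\delta$ --- together with the derivative cost of $\boldsymbol u^{I,1}$ (one derivative, through \eqref{2.29}--\eqref{2.31}) and $\boldsymbol u^{I,2}$ (up to three, through \eqref{2.29}--\eqref{2.33}) reproduces the tiers $O(1),O(\varepsilon),O(\varepsilon^{2})$ with the stated norms; the $\partial_{t}$-estimate is obtained identically, the only new ingredient being one further use of \eqref{2.21}.

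The step I expect to be the main obstacle is the divergence equation on the unbounded slab. On $\mathbb{R}^{2}\times[B,B+2]$ there is no classical Bogovskii right inverse, so one must both isolate the correct compatibility condition (the fibre-average being a horizontal divergence --- the point at which \eqref{2.34} is used) and produce a right inverse compatible with it whose norm is uniform in $\varepsilon$; the bookkeeping that shows the negative $\delta$-powers in $g$ cancel, and that the $\nabla_{h}$-loss in $\boldsymbol V$ is absorbed by the regularity of the global solution of \eqref{2.21}, is the most delicate part.
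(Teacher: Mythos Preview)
Your approach is correct and considerably more explicit than the paper's. The paper's proof is very brief: it cites Galdi and Ukai for the solvability of the divergence problem (claiming thereby that $\|\boldsymbol u^{c,i}\|_{W^{1,\infty}_t H^1_x}\lesssim\|\text{source}\|_{W^{1,\infty}_t L^2_x}$), and then only estimates the right-hand sides of \eqref{2.36}--\eqref{2.38}, showing for instance
\[
\big\|\chi'(u_3^{B,0}-u_3^{T,0})\big\|_{W^{1,\infty}(\mathbb{R}_+;L^2(\Omega))}\ \lesssim\ \|\partial_t\boldsymbol u^{I,0}_h\|_{L^2}\ \lesssim\ \|\boldsymbol u^{I,0}_h\|_{L^\infty}\|\boldsymbol u^{I,0}_h\|_{H^1}+\|\boldsymbol u^{I,0}_h\|_{L^2},
\]
the last step via \eqref{2.21}. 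Your ingredient (ii) is exactly this computation, and the $\varepsilon,\varepsilon^2$ tiers follow in the same way from the explicit forms of $\boldsymbol u^{I,1},\boldsymbol u^{I,2}$.

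Where you go further is in not taking the right inverse of $\nabla\cdot$ for granted. Your compatibility remark (the vertical fibre-average of the source equals $-\nabla_h\cdot\boldsymbol V$ thanks to \eqref{2.34}) and your explicit slab ansatz are more than the paper supplies, and your concern about the unbounded base is legitimate --- the references the paper cites treat bounded or exterior domains, not $\mathbb R^2\times[B,B+2]$. One caveat on your construction: the formula $u^c_3=\int_B^z(g-\psi\bar g)\,ds$ does not by itself yield an $L^2\to H^1$ gain, since $\nabla_h u^c_3$ requires horizontal regularity of $g$ (and picks up a boundary term from $\nabla_h B$ at the lower limit); likewise $\nabla_h(\psi\boldsymbol V)$ acquires an extra $\psi'\nabla_h B\cdot\boldsymbol V$ contribution because $\psi$ must depend on $z-B(x,y)$. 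None of this is fatal here, because every source in \eqref{2.36}--\eqref{2.38} is either a $\chi'$-localised boundary profile (smooth in $h$, exponentially small) or built from smooth interior quantities, so you may simply estimate the sources in $H^1_h L^2_z$ and absorb the $\nabla_h B$ terms using $B\in C^\infty$; but the assertion that your operator is bounded from $L^2$ to $H^1$ needs that qualification.
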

\begin{proof}
	According to Refs.~\cite{Galdi1994,Ukai1986},  it is sufficient to show that the right end terms of \eqref{2.36}-\eqref{2.38} belong to the $W^{1,\infty}(\mathbb{R}_+;L^2(\Omega))$ space.
	
	The following is an example of $\boldsymbol{u}^{c,0}$. The remaining terms can be estimated similarly.
	\begin{align*}
		&\big\lVert\chi'({u}_3^{B,0}-{u}_3^{T,0})\big\rVert_{W^{1,\infty}(\mathbb{R}_+;L^2(\Omega))}\\
		&\qquad\leq\big\lVert
		\mathrm{e}^{-\tfrac{\tilde z}{\sqrt 2}}
		\nabla_hB^T\big(
		\cos(\tfrac{\tilde z}{\sqrt 2})\boldsymbol{E}+\sin(\tfrac{\tilde z}{\sqrt 2})\cos\gamma\boldsymbol{E_1}\boldsymbol{H_0}
		\big)\partial_t{\boldsymbol{u}}^{I,0}_h
		\big\rVert_{L^2(\Omega)}\notag\\
		&\qquad\quad+\big\lVert
		\mathrm{e}^{-\tfrac{\bar z}{\sqrt 2}}
		\nabla_hB^T\big(
		\cos(\tfrac{\bar z}{\sqrt 2})\boldsymbol{E}+\sin(\tfrac{\bar z}{\sqrt 2})\cos\gamma\boldsymbol{E_1}\boldsymbol{H_0}
		\big)\partial_t{\boldsymbol{u}}_h^{I,0}
		\big\rVert_{L^2(\Omega)}\notag\\
		&\qquad\leq
		(\lVert\nabla_{ h}B\rVert_{{L}^{\infty}(\mathbb{R}^2)}
		+\lVert\nabla^\bot_{ h}B\rVert_{{L}^{\infty}(\mathbb{R}^2)})
		\lVert
		\partial_t{\boldsymbol{u}}^{I,0}_h
		\rVert_{{L}^{2}(\mathbb{R}^2)}\notag\\
		&\qquad\lesssim
		\lVert
		\boldsymbol{u}^{I,0}_h
		\rVert_{{L}^{\infty}(\mathbb{R}^2)}
		\lVert
		\boldsymbol{u}^{I,0}_h
		\rVert_{{H}^{1}(\mathbb{R}^2)}+ 
		\lVert
		\boldsymbol{u}^{I,0}_h
		\rVert_{{L}^{2}(\mathbb{R}^2)}
		\,,\notag
	\end{align*}
	the last inequalitie is obtained from the equations \eqref{2.21} for $\boldsymbol{u}^{I,0}_h$.
\end{proof}

Secondly, due to the appearance of the correction term and the derivative term of $\chi(z)$, the residual term of the approximate equation cannot achieve the expected result. Then, the construction is completed by correcting the pressure term $\nabla{p}^c=\sum_{i=0}^2\nabla(\delta^i{p}^{c,i})\in H_0^1(\Omega)$, and for $i,j\in \{0,1,2\}$ that it satisfies 
\begin{align}
	\varepsilon^{-1}\nabla p^{c,0}=&
	-\varepsilon^{-1}R \boldsymbol u^{c,0}+\nu\varepsilon \Delta  \boldsymbol u^{c,0}
	-\varepsilon^{-1}\chi'(0,0,p^{T,0}-p^{B,0})^T\notag\\
	&-\delta^{-1}\boldsymbol{u}^{c,0}\cdot\nabla(z-B)\big(
	(1-\chi)\partial_{\tilde{z}}\boldsymbol{u}^{B,0}-\chi\partial_{\bar{z}}\boldsymbol{u}^{T,0}\big)\,,\label{2.39}\\
	\varepsilon^{-1}\nabla (\delta p^{c,1})=
	&-\partial_t\boldsymbol{u}^{c,0}-\sqrt{\nu}\cos^{-\frac{3}{2}}\gamma \boldsymbol{R} \boldsymbol u^{c,1}+\nu\varepsilon \Delta  (\delta\boldsymbol u^{c,1})-\boldsymbol{u}_h^{c,0}\cdot\nabla_h {\boldsymbol{u}}^{I,0}\notag\\
	&-\big({\boldsymbol{u}}^{0}+\boldsymbol{u}^{c,0}\big)\cdot\nabla\boldsymbol{u}^{c,0}+\chi'\sqrt{\nu}\cos^{\frac32}\gamma
	\big(
	\partial_{\bar{z}}\boldsymbol{u}^{T,0}
	-\partial_{\tilde{z}}\boldsymbol{u}^{B,0}
	\big)
	\notag\\
	&
	-\chi'{\sqrt{\nu}}{\cos^{-\frac32}\gamma}(0,0,p^{T,1}-p^{B,1})^T
	-\chi'\big({u}^{0}_3+{u}_3^{c,0}\big)(\boldsymbol{u}^{T,0}-\boldsymbol{u}^{B,0})
	\notag\\
	&
	-(1-\chi)\boldsymbol{u}^{c,0}\cdot\tilde{\nabla}_{\delta^0}\boldsymbol{u}^{B,0}
	-\chi\boldsymbol{u}^{c,0}\cdot\bar{\nabla}_{\delta^0}\boldsymbol{u}^{T,0}
	\notag\\
	&
	-\sum_{i+j=1}\boldsymbol{u}^{c,i}
	\cdot\nabla(z-B)\big(
	(1-\chi)\partial_{\tilde{z}}\boldsymbol{u}^{B,j}-\chi\partial_{\bar{z}}\boldsymbol{u}^{T,j}\big)\,,\label{2.40}\\
	\varepsilon^{-1}\nabla (\delta^2 p^{c,2})=&
	-\partial_t(\delta\boldsymbol{u}^{c,1})-\nu\varepsilon\cos^{-3}\gamma \boldsymbol{R} \boldsymbol u^{c,2}
	-\chi'{\nu\varepsilon}{\cos^{-3}\gamma}(0,0,p^{T,2}-p^{B,2})^T\notag\\
	&+\nu\varepsilon \Delta  (\delta^2\boldsymbol u^{c,2})
	+\chi'{\nu}\varepsilon
	\big(
	\partial_{\bar{z}}\boldsymbol{u}^{T,1}
	-\partial_{\tilde{z}}\boldsymbol{u}^{B,1}
	\big)-\chi''\nu\varepsilon(\boldsymbol{u}^{B,0}-\boldsymbol{u}^{T,0})
	\notag\\
	&-\sum_{i+j=1}
	\Big(
	\delta^i\boldsymbol{u}^{c,i}\cdot\nabla (\delta^j\boldsymbol{u}^{I,j})
	+\delta^i\big(\boldsymbol{u}^{i}+\boldsymbol{u}^{c,i}\big)\cdot\nabla(\delta^j\boldsymbol{u}^{c,j})\notag\\
	&\qquad\qquad+\chi'\delta\big({u}_3^{i}+{u}_3^{c,i}\big)(\boldsymbol{u}^{T,j}-\boldsymbol{u}^{B,j})+\chi(\delta^i\boldsymbol{u}_h^{c,i})\cdot\bar{\nabla}_{\delta^0}(\delta^j\boldsymbol{u}^{T,j})
	\notag\\
	&\qquad\qquad+(1-\chi)(\delta^i\boldsymbol{u}_h^{c,i})\cdot\tilde{\nabla}_{\delta^0}(\delta^j\boldsymbol{u}^{B,j})
	\Big)
	\notag\\
	&+\sum_{i+j=2}\delta\boldsymbol{u}^{c,i}
	\cdot\nabla(z-B)\big(
	(1-\chi)\partial_{\tilde{z}}\boldsymbol{u}^{B,j}-\chi\partial_{\bar{z}}\boldsymbol{u}^{T,j}\big)\,.\label{2.41}
\end{align}

The above completes the corrections concerning the incompressibility condition and the lower-order terms in the equations.

\subsection{Estimates of the approximate solution \eqref{2.42}}

Through the analysis in the previous subsections, we construct the approximate solution $(\boldsymbol{u}^\varepsilon_{app},p^\varepsilon_{app})$:
\begin{equation}\label{2.42}
	\begin{cases}
		\boldsymbol{u}_{app}^{\varepsilon}=
		\sum\limits_{i=0}^{2}
		{\delta}^i\big(
		\boldsymbol{u}^{I,i}
		+
		(1-\chi)\boldsymbol u^{B,i}
		+\chi\boldsymbol u^{T,i}
		+\boldsymbol u^{c,i}\big)\,,\\
		p_{app}^{\varepsilon}=
		\sum\limits_{i=0}^{2}
		\Big(
		{\delta}^{i+1}\big(
		(1-\chi)p^{B,i+1}+\chi p^{T,i+1}
		\big)
		+\delta^i(p^{I,i}+p^{c,i})\Big)\,,
	\end{cases}
\end{equation}
where main part $(\boldsymbol{u}^{I,0},\delta p^{I,1})$ satisfies \eqref{2.21}-\eqref{2.22}, and the residual internal terms are expressed by equations \eqref{2.14}, \eqref{2.20} and \eqref{2.28}-\eqref{2.31}.
The boundary terms are denoted by equations \eqref{2.11}-\eqref{2.12},\eqref{2.15}-\eqref{2.18}, \eqref{2.24}-\eqref{2.27} and \eqref{2.32}-\eqref{2.33}, as well as the correction items are shown by equations \eqref{2.36}-\eqref{2.41}.

The approximate solution is mainly based on the term  $\boldsymbol{u}^{I,0}$. To go ahead, we write $(\boldsymbol{u}^{I,0},\delta p^{I,1})=(\bar{\boldsymbol{u}},\varepsilon\bar{p})$, which satisfy the 
following system:
\begin{equation}\label{2.43}
	\begin{cases}
		\partial_{t} \bar {\boldsymbol u}_h+(\bar {\boldsymbol u}_h \cdot \nabla_h) \bar {\boldsymbol u}_h
		+\sqrt{\frac{\nu}{2}}
		(\boldsymbol{H_0}
		-\boldsymbol{E_1})
		\bar{\boldsymbol u}_h+\nabla_h\bar p=0\,,\\
		\bar u_3=\nabla_h{B}\cdot\bar{\boldsymbol{u}}_h\,,\\
		\nabla\cdot \bar{\boldsymbol u}=\nabla_h \cdot \bar{\boldsymbol u}_h=0\,.
	\end{cases}
\end{equation}

By applying $\nabla_h^\bot \cdot $ to $\eqref{2.43}_{1}$, and denoting $\bar{\omega}=\nabla_h^\bot \cdot \bar {\boldsymbol u}_h$, we have 
\begin{equation}\label{2.44}
	\partial_t\bar{\omega}+(\bar {\boldsymbol u}_h\cdot\nabla_h) {\bar{\omega}}
	+\nabla^\bot_h\cdot\big(\sqrt{\tfrac{\nu}{2}}
	\boldsymbol{H_0}
	\bar{\boldsymbol u}_h
	-\sqrt{\tfrac{\nu}{2}}\boldsymbol{E_1}\bar{\boldsymbol u}_h\big)
	=0\,,
\end{equation}
which yields
\begin{align}\label{2.45}
	&\partial_t\bar{\omega}+(\bar {\boldsymbol u}_h\cdot\nabla_h) {\bar{\omega}}
	+\sqrt{\tfrac{\nu}{2}}\bar{\omega}\notag\\
	&\qquad=-\sqrt{\tfrac{\nu}{2}}
	\nabla_hB^T\boldsymbol{E_1}
	\boldsymbol{H}
	\bar{\boldsymbol{u}}_h
	+\sqrt{\tfrac{\nu}{2}}
	\big(
\nabla_hB\cdot(\nabla_h^\bot B\cdot\nabla_h)\bar{\boldsymbol{u}}_h
	\big)\,.
\end{align}

First, we need to re-establish the estimated limit equations for $\bar{\boldsymbol{u}}_h$, $\nabla_{ h}\bar{\boldsymbol{u}}_h$ and $\bar{{\omega}}$ based on simplifying the limiting system.
\begin{proposition}\label{lem1}
	Let $ {\boldsymbol u}_{0,h}(x,y) \in{H}^{3}(\mathbb{R}^2) $ be a divergence-free vector field,   $(\bar{\boldsymbol u},\bar p)$ be a pair of solutions to the system \eqref{2.43} with initial data $ {\boldsymbol u}_0=( {\boldsymbol u}_{0,h},0)$,
	and the vorticity $\bar{\omega}$ be a solution of \eqref{2.45} with initial value ${{\omega}_0=\nabla_h^{\bot}\cdot {\boldsymbol u}_{0,h}}$.
	Assume that the boundary surface $B(x,y)\in C^\infty(\mathbb{R}^2)$ satisfies 	
	\begin{equation}\label{2.46}
		\frac{\cos^2\alpha+\cos^2\beta}{\cos^2\gamma}<\tfrac{1}{8}\,,
		\quad  
		\sup\limits_{(x,y)\in \mathbb{R}^2} (1+\sqrt{\tfrac{2}{\nu}})\sqrt{\big| \boldsymbol{K}_{\mathrm{G}}\big|} <\tfrac{8}{9}\,,
	\end{equation}
where   $\boldsymbol{K}_{\mathrm{G}}$ is  the Gaussian curvature of $B(x,y)$.

	Then,  the following  estimates hold:
	\begin{align}
		&\lVert\bar {\boldsymbol u}\rVert_{L^2(\mathbb{R}^2)}^2
		+\lVert\bar{\omega}\rVert_{L^2(\mathbb{R}^2)}^2
		\lesssim
		\Big(\lVert{\boldsymbol{u}}_{0,h}\rVert_{L^2(\mathbb{R}^2)}^2+\lVert{\omega}_0\rVert_{L^2(\mathbb{R}^2)}^2\Big) \mathrm{e}^{-\frac{\sqrt{2\nu}}{8}  t}\,,\label{2.47}\\
		&\lVert\bar {\boldsymbol u}\rVert_{W^{1,\infty}(\mathbb{R}^2)}^2
		\lesssim
		\Big(\lVert{\boldsymbol{u}}_{0,h}\rVert_{L^2(\mathbb{R}^2)}^2+\lVert{\omega}_0\rVert_{H^2(\mathbb{R}^2)}^2\Big) \mathrm{e}^{-\frac{\sqrt{2\nu}}{8}  t}\label{2.48}\,.
	\end{align}
\end{proposition}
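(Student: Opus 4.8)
The plan is to run $L^2$ energy estimates on the limiting system \eqref{2.43} and on its vorticity form \eqref{2.45}, using the geometric hypotheses \eqref{2.46} solely to absorb the curvature-generated forcing into the damping, and then to upgrade the resulting $L^2$ bounds to the $W^{1,\infty}$ bound via a higher-order energy estimate, the two-dimensional Biot--Savart law and Sobolev embedding. Two elementary reductions will be used repeatedly: since $\cos\alpha=-B_x\cos\gamma$ and $\cos\beta=-B_y\cos\gamma$, one has $\tfrac{\cos^2\alpha+\cos^2\beta}{\cos^2\gamma}=|\nabla_h B|^2$, so the first condition in \eqref{2.46} reads $|\nabla_h B|^2<\tfrac18$, and since $\boldsymbol{H_0}=\boldsymbol E+\nabla_h B\otimes\nabla_h B$ this yields the spectral bounds $\boldsymbol E\le\boldsymbol{H_0}<\tfrac98\boldsymbol E$ (which is why the companion threshold in \eqref{2.46} is $\tfrac89$); and on $\mathbb R^2$ a divergence-free field obeys $\lVert\nabla_h\bar{\boldsymbol u}_h\rVert_{L^2}=\lVert\bar\omega\rVert_{L^2}$, and more generally $\lVert\bar{\boldsymbol u}_h\rVert_{H^{k+1}}\lesssim\lVert\bar{\boldsymbol u}_h\rVert_{L^2}+\lVert\bar\omega\rVert_{H^k}$.

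First I would handle the velocity: testing $\eqref{2.43}_{1}$ with $\bar{\boldsymbol u}_h$ kills the transport term by $\nabla_h\cdot\bar{\boldsymbol u}_h=0$, kills the pressure term after integrating by parts, and kills $\int_{\mathbb R^2}\boldsymbol{E_1}\bar{\boldsymbol u}_h\cdot\bar{\boldsymbol u}_h$ by skew-symmetry, so the coercivity $\boldsymbol{H_0}\ge\boldsymbol E$ gives $\tfrac{d}{dt}\lVert\bar{\boldsymbol u}_h\rVert_{L^2}^2+\sqrt{2\nu}\,\lVert\bar{\boldsymbol u}_h\rVert_{L^2}^2\le0$, decay at rate $\sqrt{2\nu}$, far faster than needed. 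Then I would test \eqref{2.45} with $\bar\omega$: again the transport term drops and the damping yields $\sqrt{2\nu}\,\lVert\bar\omega\rVert_{L^2}^2$. On the right, $\nabla_h B\cdot(\nabla_h^\bot B\cdot\nabla_h)\bar{\boldsymbol u}_h$ has $L^2$-norm $\le|\nabla_h B|^2\lVert\nabla_h\bar{\boldsymbol u}_h\rVert_{L^2}<\tfrac18\lVert\bar\omega\rVert_{L^2}$, hence is swallowed by a fraction of the vorticity damping; and $\nabla_h B^{T}\boldsymbol{E_1}\boldsymbol H\,\bar{\boldsymbol u}_h=(\boldsymbol H\nabla_h^\bot B)^{T}\bar{\boldsymbol u}_h$, whose coefficient is controlled by $|\nabla_h B|^2<\tfrac18$ together with the Gaussian-curvature bound in \eqref{2.46}, is split by Young's inequality between the large velocity damping and a small piece of the vorticity damping. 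Adding the two identities and calibrating the splitting constants to the numerology of \eqref{2.46} leaves $\tfrac{d}{dt}\big(\lVert\bar{\boldsymbol u}_h\rVert_{L^2}^2+\lVert\bar\omega\rVert_{L^2}^2\big)+\tfrac{\sqrt{2\nu}}{8}\big(\lVert\bar{\boldsymbol u}_h\rVert_{L^2}^2+\lVert\bar\omega\rVert_{L^2}^2\big)\le0$, so Gronwall gives \eqref{2.47} (using $\lVert\bar u_3\rVert_{L^2}\le|\nabla_h B|\lVert\bar{\boldsymbol u}_h\rVert_{L^2}$ to pass from $\bar{\boldsymbol u}_h$ to $\bar{\boldsymbol u}$).

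For \eqref{2.48} it suffices to propagate a decaying $H^2$-bound on $\bar\omega$: then $\lVert\bar{\boldsymbol u}_h\rVert_{H^3}\lesssim\lVert\bar{\boldsymbol u}_h\rVert_{L^2}+\lVert\bar\omega\rVert_{H^2}$ and $H^3(\mathbb R^2)\hookrightarrow W^{1,\infty}$, while $\bar u_3=\nabla_h B\cdot\bar{\boldsymbol u}_h$ gives $\lVert\bar u_3\rVert_{W^{1,\infty}}\lesssim\lVert B\rVert_{C^2}\lVert\bar{\boldsymbol u}_h\rVert_{W^{1,\infty}}$. Applying $\nabla_h^\alpha$ ($|\alpha|\le 2$) to \eqref{2.45} and testing with $\nabla_h^\alpha\bar\omega$, the transport commutator is bounded, via Kato--Ponce and two-dimensional Gagliardo--Nirenberg estimates, by $\big(\lVert\nabla_h\bar{\boldsymbol u}_h\rVert_{L^\infty}+\lVert\bar\omega\rVert_{H^1}\big)\lVert\bar\omega\rVert_{H^2}^2$, and the differentiated geometric forcing by $\lVert B\rVert_{C^4}\big(\lVert\bar{\boldsymbol u}_h\rVert_{H^2}+\lVert\bar\omega\rVert_{H^2}\big)\lVert\bar\omega\rVert_{H^2}$, with the curvature-smallness of \eqref{2.46} absorbing the part that is genuinely of top (that is, $H^2$-vorticity) order. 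Invoking the global smoothness of \eqref{2.43} (standard 2D damped-Euler theory, as remarked after \eqref{1.20}) and the $L^2$-decay just obtained, I would close the estimate by a continuation argument on $\mathrm e^{\frac{\sqrt{2\nu}}{8}t}\lVert\bar\omega(t)\rVert_{H^2}^2$: once the lower-order norms are small and time-integrable, this weighted quantity stays below $C\big(\lVert{\boldsymbol u}_{0,h}\rVert_{L^2}^2+\lVert\omega_0\rVert_{H^2}^2\big)$, which is \eqref{2.48}.

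The main obstacle is this last step: the top-order commutator $\lVert\nabla_h\bar{\boldsymbol u}_h\rVert_{L^\infty}\lVert\bar\omega\rVert_{H^2}^2$ has the same strength as the $H^2$-damping and cannot be defeated by the geometry alone — it must be controlled by using that the $L^2$-decay of the previous step drives $\lVert\nabla_h\bar{\boldsymbol u}_h\rVert_{L^\infty}$ into the small-data regime of the Riccati-type inequality for $\lVert\bar\omega\rVert_{H^2}^2$, which is the reason the rate in \eqref{2.48} is the reduced $\tfrac{\sqrt{2\nu}}{8}$ rather than $\sqrt{2\nu}$. A secondary technical point is checking that the components of $\boldsymbol H\nabla_h^\bot B$ (and its derivatives) which enter the vorticity forcing are indeed controlled by $|\nabla_h B|^2<\tfrac18$ and the Gaussian-curvature bound, i.e. that the two conditions in \eqref{2.46} are exactly what is needed to close the energy loop.
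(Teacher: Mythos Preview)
Your approach for \eqref{2.47} is exactly the paper's: the $L^2$ energy identity on $\eqref{2.43}_1$ giving $\tfrac12\tfrac{d}{dt}\lVert\bar{\boldsymbol u}_h\rVert_{L^2}^2+\sqrt{\nu/2}\,\lVert\bar{\boldsymbol u}_h\rVert_{L^2}^2\le 0$, the $L^2$ estimate on \eqref{2.45} with the two geometric forcing terms bounded via \eqref{2.46} and Young's inequality, and the sum closed by Gronwall at the rate $\tfrac14\sqrt{\nu/2}=\tfrac{\sqrt{2\nu}}{8}$. The only piece you flag as a ``secondary technical point'' is made concrete in the paper by the observation that the eigenvalues of $\boldsymbol{E_1}\boldsymbol H$ are $\pm\sqrt{-\det\boldsymbol H}$ and that $|\det\boldsymbol H|=\cos^{-4}\gamma\,|\boldsymbol K_{\mathrm G}|<(\tfrac98)^2|\boldsymbol K_{\mathrm G}|$, which is how the curvature hypothesis is converted into a coefficient strictly below the $\bar\omega$-damping. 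For \eqref{2.48} the paper gives no argument, deferring entirely to \cite{Jia2024}; your higher-order vorticity plus Biot--Savart plus Sobolev-embedding outline is the natural plan, and there is nothing in the present paper to compare it against.
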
	
\begin{proof}
		Firstly, given that $\bar{u}_3=\bar{\boldsymbol{u}}_h\cdot \nabla_hB$, it suffices to verify the results for $\bar{\boldsymbol{u}}_h$.

	Note that the eigenvalues of $\boldsymbol{H_0}$ are $1$ and $\cos^{-2}\gamma$, which implies the system \eqref{2.43} admits damping. 
	Furthermore, the eigenvalues of matrix $\boldsymbol{E_1}\boldsymbol{H}$ are denoted by $\pm\sqrt{-\det\boldsymbol{H}}$, with
	\begin{equation}\label{H}
		\mathcal{R}e\sqrt{-\det\boldsymbol{H}}<\sup\limits_{(x,y)\in \mathbb{R}^2}\tfrac98\sqrt{|\boldsymbol{K}_{\mathrm{G}}|},
	\end{equation}

	Due to the divergence-free condition of $\bar{\boldsymbol{u}}_h$ and antisymmetric construction of $\boldsymbol{E_1}$, We obtain the $L^2$ estimate of $\bar {\boldsymbol u}_h$ as 
	\begin{equation}\label{5.9}
		\frac12\frac{d}{dt}\lVert\bar {\boldsymbol u}_h\rVert_{L^2(\mathbb{R}^2)}^s
		+ \sqrt{\tfrac{\nu}{2}}
		\lVert\bar {\boldsymbol u}_h\rVert_{{L}^2(\mathbb{R}^2)}^2
		\leq0.
	\end{equation}
	Next, we utilize \eqref{2.45}, \eqref{2.46} and \eqref{H}  to estimate $\bar{\omega}$, there holds
	\begin{align}\label{5.12}
		&	\frac12\frac{d}{dt}\lVert\bar{\omega}\rVert_{L^2(\mathbb{R}^2)}^2+\sqrt{\tfrac{\nu}{2}}\lVert\bar{\omega}\rVert_{L^2(\mathbb{R}^2)}^2\notag
		\\\notag
		&\leq\sqrt{\tfrac{\nu}{2}} \int_{\mathbb{R}^2} |\nabla_hB^T\boldsymbol{E_1} \boldsymbol{H}
		 \bar{\boldsymbol{u}}_h| |\bar{\omega}| \,dx{d}y 
		+\sqrt{\tfrac{\nu}{2}}
		 \int_{\mathbb{R}^2}
		 |\nabla_hB|^2|\nabla_h\bar{\boldsymbol{u}}_h||\bar{\omega}| \,dx{d}y 
		\\\notag
		&< 
		\tfrac{9}{16}\sqrt{\tfrac{\nu}{2}}
		\sup\limits_{(x,y)\in \mathbb{R}^2}\sqrt{|\boldsymbol{K}_{\mathrm{G}}|}
		\lVert\bar{\boldsymbol{u}}_h\lVert_{L^2(\mathbb{R}^2)}
		\lVert\bar{\omega}\lVert_{L^2(\mathbb{R}^2)}
		+\tfrac18\sqrt{\tfrac{\nu}{2}}
		\lVert\bar{\omega}\lVert^{2}_{L^2(\mathbb{R}^2)}
		\\ 
		&\leq
		\big(
		\tfrac18
		+\tfrac{9}{16}\sup\limits_{(x,y)\in \mathbb{R}^2}\sqrt{|\boldsymbol{K}_{\mathrm{G}}|}
		\big)\sqrt{\tfrac{\nu}{2}}
		\lVert\bar{\omega}\lVert^{2}_{L^2(\mathbb{R}^2)}
		+
		\tfrac{9}{16}\sqrt{\tfrac{\nu}{2}}
		\sup\limits_{(x,y)\in \mathbb{R}^2}\sqrt{|\boldsymbol{K}_{\mathrm{G}}|}
		\lVert\bar{\boldsymbol{u}}_h\lVert^2_{L^2(\mathbb{R}^2)}.
	\end{align}
	Therefore, by combining \eqref{2.46} and \eqref{5.9}--\eqref{5.12}, we obtain
	\begin{displaymath}
		 \frac{d}{dt}
		\big(\lVert\bar {\boldsymbol u}_h\rVert_{L^2(\mathbb{R}^2)}^2
		+\lVert\bar{\omega}\rVert_{L^2(\mathbb{R}^2)}^2\big)
		+ \frac14\sqrt{\frac{\nu}{2}}
		\big(\lVert\bar {\boldsymbol u}_h\rVert_{L^2(\mathbb{R}^2)}^2
		+\lVert\bar{\omega}\rVert_{L^2(\mathbb{R}^2)}^2\big)
		\leq0,
	\end{displaymath}
	eventually leading to \eqref{2.47}.
The proof of result \eqref{2.48} follows a line of reasoning similar to that in \cite{Jia2024}, and thus the detailed steps are omitted here.	
\end{proof}

Since the convergence analysis requires higher regularity of the limiting states, we now provide the $H^s$-estimates $(s>5)$ for $\bar{\boldsymbol{u}}_h$.

\begin{proposition}\label{prop3}
	Let $ {\boldsymbol u}_{0,h}(x,y) \in{H}^{s}(\mathbb{R}^2)~(s>5)$ be a divergence-free vector field,  $(\bar{\boldsymbol u},\bar p)$ be a pair of solutions to the system \eqref{2.43} with initial data $ {\boldsymbol u}_0=( {\boldsymbol u}_{0,h},0)$.
	Assume that the boundary surface $B(x,y)\in C^\infty(\mathbb{R}^2)$.	
	Then, for $s>5$ and $t\in [0,+\infty)$, the following  estimates hold:			
	\begin{align}
		&\lVert\bar {\boldsymbol u}(t)\rVert_{H^s(\mathbb{R}^2)}
		\lesssim \lVert {\boldsymbol u}_{0,h}\rVert_{H^s(\mathbb{R}^2)}
		\,,\label{2.49}\\
		&\lVert\partial_t\bar {\boldsymbol u}(t)\rVert_{H^{s-1}(\mathbb{R}^2)}
		\lesssim \lVert
		 {\boldsymbol u}_{0,h}\rVert_{H^{s}(\mathbb{R}^2)}
		\label{2.50}\,.
	\end{align}
\end{proposition}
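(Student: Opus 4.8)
The plan is to run the classical commutator energy method for the two--dimensional damped Euler--type system \eqref{2.43}, exploiting the uniform positive definiteness of the damping matrix $\boldsymbol{H_0}$ (whose smallest eigenvalue equals $1$) to close the estimates globally in time. First I would reduce everything to the horizontal velocity: since $\bar u_3=\nabla_h B\cdot\bar{\boldsymbol u}_h$ with $B\in C^\infty(\mathbb{R}^2)$, the Moser product estimate gives $\lVert\bar u_3\rVert_{H^s}\lesssim\lVert\bar{\boldsymbol u}_h\rVert_{H^s}$ and $\lVert\partial_t\bar u_3\rVert_{H^{s-1}}\lesssim\lVert\partial_t\bar{\boldsymbol u}_h\rVert_{H^{s-1}}$, so \eqref{2.49}--\eqref{2.50} follow from the corresponding bounds for $\bar{\boldsymbol u}_h$. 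Passing to the scalar vorticity $\bar\omega=\nabla_h^\bot\cdot\bar{\boldsymbol u}_h$, which obeys \eqref{2.45}, the Biot--Savart law $\bar{\boldsymbol u}_h=\nabla_h^\bot\Delta_h^{-1}\bar\omega$ together with Calder\'on--Zygmund theory yields $\lVert\nabla_h\bar{\boldsymbol u}_h\rVert_{H^k}\lesssim\lVert\bar\omega\rVert_{H^k}$ for $k\ge 0$ and $\lVert\bar{\boldsymbol u}_h\rVert_{H^s}\lesssim\lVert\bar{\boldsymbol u}_h\rVert_{L^2}+\lVert\bar\omega\rVert_{H^{s-1}}$; since $\lVert\bar{\boldsymbol u}_h(t)\rVert_{L^2}\le\lVert{\boldsymbol u}_{0,h}\rVert_{L^2}$ already follows from \eqref{5.9} (which uses only $\boldsymbol{H_0}\ge\boldsymbol{E}$, not the curvature constraint), \eqref{2.49} is reduced to a uniform-in-time bound on $\lVert\bar\omega(t)\rVert_{H^{s-1}}$.

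For that bound I would perform an $H^{s-1}$ energy estimate on \eqref{2.45}: apply $\partial^\alpha$ with $|\alpha|\le s-1$, test against $\partial^\alpha\bar\omega$, and sum. The divergence-free condition annihilates the top-order transport term, the Kato--Ponce commutator estimate bounds the rest of the transport term by $C\lVert\nabla_h\bar{\boldsymbol u}_h\rVert_{L^\infty}\lVert\bar\omega\rVert_{H^{s-1}}^2$, the linear damping contributes a coercive term $\sim\sqrt{\nu}\,\lVert\bar\omega\rVert_{H^{s-1}}^2$ on the left, and the boundary-induced forcing on the right of \eqref{2.45} --- at most first order in $\bar{\boldsymbol u}_h$ with $C^\infty$ coefficients --- is controlled in $H^{s-1}$ by $C_B\big(\lVert\bar{\boldsymbol u}_h\rVert_{L^2}+\lVert\bar\omega\rVert_{H^{s-1}}\big)$, where, exactly as in the proof of Proposition~\ref{lem1}, the relevant combination $C_B$ of derivatives of $B$ (measured through $|\nabla_h B|^2=\tfrac{\cos^2\alpha+\cos^2\beta}{\cos^2\gamma}$ and through $|\boldsymbol{K}_{\mathrm{G}}|$) is small enough to be absorbed into the damping. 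This produces a differential inequality
\begin{equation*}
	\frac{d}{dt}\lVert\bar\omega\rVert_{H^{s-1}}^2+c_0\lVert\bar\omega\rVert_{H^{s-1}}^2\le C\lVert\nabla_h\bar{\boldsymbol u}_h\rVert_{L^\infty}\lVert\bar\omega\rVert_{H^{s-1}}^2+C\lVert\bar{\boldsymbol u}_h\rVert_{L^2}^2,\qquad c_0>0,
\end{equation*}
which I would close by a Beale--Kato--Majda/Yudovich logarithmic Gronwall argument: the same damped transport structure of \eqref{2.45} furnishes uniform-in-time $L^p$ bounds for $\bar\omega$ ($2\le p<\infty$), hence $\lVert\nabla_h\bar{\boldsymbol u}_h\rVert_{L^\infty}\lesssim 1+\log\big(e+\lVert\bar\omega\rVert_{H^{s-1}}\big)$, and, since $\lVert\bar{\boldsymbol u}_h(t)\rVert_{L^2}$ decays exponentially, integrating the resulting inequality for $\log(e+\lVert\bar\omega(t)\rVert_{H^{s-1}}^2)$ gives a bound uniform on $[0,\infty)$; this is \eqref{2.49}.

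Finally, \eqref{2.50} follows by reading $\partial_t\bar{\boldsymbol u}_h$ off $\eqref{2.43}_1$, namely $\partial_t\bar{\boldsymbol u}_h=-(\bar{\boldsymbol u}_h\cdot\nabla_h)\bar{\boldsymbol u}_h-\sqrt{\tfrac{\nu}{2}}(\boldsymbol{H_0}-\boldsymbol{E_1})\bar{\boldsymbol u}_h-\nabla_h\bar p$, where $\bar p$ is recovered from $-\Delta_h\bar p=\nabla_h\cdot\big[(\bar{\boldsymbol u}_h\cdot\nabla_h)\bar{\boldsymbol u}_h+\sqrt{\tfrac{\nu}{2}}(\boldsymbol{H_0}-\boldsymbol{E_1})\bar{\boldsymbol u}_h\big]$; since $H^{s-1}(\mathbb{R}^2)$ is an algebra and the Riesz transforms are bounded on it, one obtains $\lVert\partial_t\bar{\boldsymbol u}_h\rVert_{H^{s-1}}\lesssim\lVert\bar{\boldsymbol u}_h\rVert_{H^s}^2+C_B\lVert\bar{\boldsymbol u}_h\rVert_{H^{s-1}}$, which together with \eqref{2.49} yields \eqref{2.50}. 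All of these a priori estimates are justified in the usual way on a regularized system and then passed to the limit.

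The step I expect to be the main obstacle is the word ``uniform'' in \eqref{2.49}: for the bare two-dimensional Euler part the natural a priori control of $\lVert\bar\omega\rVert_{H^{s-1}}$ grows (double) exponentially in time, so independence of $t$ genuinely requires that $\boldsymbol{H_0}$ be uniformly coercive and that the boundary-generated forcing in \eqref{2.45} be of strictly lower order and of small amplitude relative to $\sqrt{\nu/2}$; the interplay of this dissipation with the exponential $L^2$-decay of $\bar{\boldsymbol u}_h$ inside the logarithmic Gronwall step is the delicate point, while the commutator and Moser bookkeeping is routine. In the flat limit $B\equiv\mathrm{const}$ the forcing on the right of \eqref{2.45} vanishes and the whole argument collapses to the classical global $H^s$-estimate for the two-dimensional damped Euler equations.
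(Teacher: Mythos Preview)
Your overall strategy is sound and shares with the paper the use of commutator energy estimates plus a direct reading of $\partial_t\bar{\boldsymbol u}_h$ from the equation for \eqref{2.50}; but you take a genuinely different route for \eqref{2.49}, and the detour creates one real weak point.

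\textbf{Where the paper is simpler.} The paper does not pass to the vorticity and does not invoke any Beale--Kato--Majda or Yudovich machinery. It runs the $\Lambda^s$ energy estimate directly on $\bar{\boldsymbol u}_h$ in \eqref{2.43}. The damping term is split via the commutator $[\Lambda^s,\sqrt{\nu/2}\,\boldsymbol{H_0}]\bar{\boldsymbol u}_h$ and handled by Kato--Ponce, exactly as you propose for \eqref{2.45}. The crucial difference is how the nonlinear Gronwall factor $\|\nabla_h\bar{\boldsymbol u}_h\|_{L^\infty}$ is closed: the paper simply quotes the already-proved exponential decay $\|\bar{\boldsymbol u}\|_{W^{1,\infty}}\lesssim e^{-\sqrt{2\nu}t/8}$ from Proposition~\ref{lem1} (which uses the curvature hypotheses \eqref{2.46}), so that $\int_0^\infty\|\nabla_h\bar{\boldsymbol u}_h\|_{L^\infty}\,dt<\infty$ and an ordinary Gr\"onwall gives the uniform-in-time bound \eqref{2.58}. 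No logarithmic step is needed.

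\textbf{The gap in your route.} Your BKM step relies on ``uniform-in-time $L^p$ bounds for $\bar\omega$ ($2\le p<\infty$)'' coming from the damped transport structure of \eqref{2.45}. But the forcing in \eqref{2.45} contains $\nabla_h B\cdot(\nabla_h^\bot B\cdot\nabla_h)\bar{\boldsymbol u}_h$, and controlling $\|\nabla_h\bar{\boldsymbol u}_h\|_{L^p}$ by $\|\bar\omega\|_{L^p}$ brings in a Calder\'on--Zygmund constant that grows like $p$; the smallness of $|\nabla_h B|^2$ cannot absorb this uniformly in $p$, so the damping does not obviously dominate for large $p$. This jeopardizes the $\|\bar\omega\|_{L^\infty}$ (or uniform-in-$p$) input that your logarithmic inequality needs. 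The fix is precisely what the paper does: use Proposition~\ref{lem1} to get $\|\nabla_h\bar{\boldsymbol u}_h\|_{L^\infty}$ decaying a priori, which makes the whole BKM apparatus unnecessary --- you can then run either your vorticity $H^{s-1}$ estimate or the paper's velocity $H^s$ estimate with an integrable coefficient and close by ordinary Gr\"onwall. Your treatment of \eqref{2.50} via the Leray projector matches the paper's.
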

\begin{proof}
	In the first place, applying $\varLambda^s$ to \eqref{2.43} and multiplying the resulting equations by $\varLambda^s\bar{\boldsymbol{u}}_h$, then we have
	\begin{multline}\label{2.51}
		\frac12\frac{d}{dt}\lVert\varLambda^s\bar {\boldsymbol u}_h\rVert_{L^2(\mathbb{R}^2)}^2
		+\int_{\mathbb{R}^2}			
		\varLambda^s\big(
		\sqrt{\tfrac{\nu}{2}}
		(\boldsymbol{H_0}
		-\boldsymbol{E_1})
		\bar{\boldsymbol u}_h
		\big)
		\cdot
		\varLambda^s
		\bar {\boldsymbol u}_h
		dxdy\\
		+\int_{\mathbb{R}^2}
		\varLambda^s\big(
		(\bar {\boldsymbol u}_h \cdot \nabla_h) \bar {\boldsymbol u}_h
		\big)
		\cdot
		\varLambda^s
		\bar {\boldsymbol u}_h
		dxdy=0\,.
	\end{multline}

	We simplify the second term on the left-hand side of equation \eqref{2.51}, yielding
	\begin{align}\label{2.52}
		\varLambda^s\big(
		\sqrt{\tfrac{\nu}{2}}
		(\boldsymbol{H_0}
		-\boldsymbol{E_1})
		\bar{\boldsymbol u}_h
		\big)
		=	&	\sqrt{\tfrac{\nu}{2}}
		(\boldsymbol{H_0}
		-\boldsymbol{E_1})\varLambda^s\bar{\boldsymbol u}_h+\big[
		\varLambda^s,\sqrt{\tfrac{\nu}{2}}
		\boldsymbol{H_0}
		\big]	\bar{\boldsymbol u}_h\,,
	\end{align}
	where $[,]$ is the commutator.
	Additionally, it is shown in Ref.~\cite{Kato1988} that for $p\in(1,\infty)$ satisfying 
	\begin{equation}\label{2.53}
		\lVert
		\big[
		\varLambda^s,f
		\big]g
		\rVert_{L^p}
		\lesssim
		\lVert\nabla f\rVert_{L^\infty}
		\lVert
		\varLambda^{s-1}g
		\rVert_{L^p}
		+
		\lVert
		\varLambda^{s}f
		\cdot g\rVert_{L^p}\,.
	\end{equation}
	Then the second term at the right end of \eqref{2.52} can be written as
	\begin{align}\label{2.54}
			\lVert\big[
		\varLambda^s,\sqrt{\tfrac{\nu}{2}}
		\boldsymbol{H_0}
		\big]	\bar{\boldsymbol u}_h\rVert_{L^2(\mathbb{R}^2)}
		&\lesssim	
		\lVert\nabla \big(\sqrt{\tfrac{\nu}{2}}
		\boldsymbol{H_0}
		\big)\rVert_{L^\infty(\mathbb{R}^2)}
		\lVert
		\varLambda^{s-1}\bar{\boldsymbol u}_h
		\rVert_{L^2(\mathbb{R}^2)}\notag\\
		&\qquad\quad+
		\lVert\varLambda^s \big(\sqrt{\tfrac{\nu}{2}}
		\boldsymbol{H_0}
		\big)\rVert_{L^\infty(\mathbb{R}^2)}
		\lVert
		\bar{\boldsymbol u}_h
		\rVert_{L^2(\mathbb{R}^2)}
		\,.
	\end{align}

	Combining the incompressible condition with \eqref{2.53}, the third term of \eqref{2.51} reads
	\begin{align}\label{2.55}
		&\int_{\mathbb{R}^2}
		\varLambda^s\big(
		(\bar {\boldsymbol u}_h \cdot \nabla_h) \bar {\boldsymbol u}_h
		\big)
		\cdot
		\varLambda^s
		\bar {\boldsymbol u}_h
		dxdy\notag\\	
		&\qquad=	
		\int_{\mathbb{R}^2}
		\Big(
		\varLambda^s\big(
		(\bar {\boldsymbol u}_h \cdot \nabla_h) \bar {\boldsymbol u}_h
		\big)
		-
		\big(
		(\bar {\boldsymbol u}_h \cdot \nabla_h) \varLambda^s\bar {\boldsymbol u}_h
		\big)
		\Big)
		\cdot
		\varLambda^s
		\bar {\boldsymbol u}_h
		dxdy\notag\\	
		&\qquad\leq
		\lVert
		\varLambda^s\big(
		(\bar {\boldsymbol u}_h \cdot \nabla_h) \bar {\boldsymbol u}_h
		\big)
		-
		\big(
		(\bar {\boldsymbol u}_h \cdot \nabla_h) \varLambda^s\bar {\boldsymbol u}_h
		\big)
		\rVert_{L^2(\mathbb{R}^2)}
		\lVert\varLambda^s\bar {\boldsymbol u}_h\rVert_{L^2(\mathbb{R}^2)}\notag\\
		&\qquad\lesssim
		\lVert\nabla_h  \bar {\boldsymbol u}_h\rVert_{L^\infty(\mathbb{R}^2)}
		\lVert\varLambda^s\bar {\boldsymbol u}_h\rVert_{L^2(\mathbb{R}^2)}^2
		\,.
	\end{align}
	Based on  $B(x,y)\in C^\infty(\mathbb{R}^2)$ , the eigenvalues of $\boldsymbol{H_0}$ are $1$ and $\cos^{-2}\gamma$, $\boldsymbol{E}_1$ is skew-symmetry, and equations \eqref{2.52}-\eqref{2.55}, we can derive equation \eqref{2.51} as follows
	\begin{align}\label{2.56}
		&\frac12\frac{d}{dt}\lVert\varLambda^s\bar {\boldsymbol u}_h\rVert_{L^2(\mathbb{R}^2)}^2			
		+
		\sqrt{\tfrac{\nu}{2}}
		\lVert\varLambda^s\bar {\boldsymbol u}_h\rVert_{L^2(\mathbb{R}^2)}^2
		\notag\\
		&\qquad\lesssim
		\lVert
		\bar{\boldsymbol u}_h
		\rVert_{L^2(\mathbb{R}^2)}
		\lVert\varLambda^s\bar{\boldsymbol u}_h\rVert_{L^2(\mathbb{R}^2)}
		+
		\lVert\varLambda^{s-1}\bar{\boldsymbol u}_h\rVert_{L^2(\mathbb{R}^2)}
		\lVert\varLambda^s\bar{\boldsymbol u}_h\rVert_{L^2(\mathbb{R}^2)}\notag\\
		&\qquad\quad+\lVert\nabla_h  \bar {\boldsymbol u}_h\rVert_{L^\infty(\mathbb{R}^2)}
		\lVert\varLambda^s\bar{\boldsymbol u}_h\rVert_{L^2(\mathbb{R}^2)}^2
		\notag\\
		&\qquad	\lesssim
		\tfrac12\sqrt{\tfrac{\nu}{2}}
		\lVert\varLambda^s\bar {\boldsymbol u}_h\rVert_{L^2(\mathbb{R}^2)}^2
		+
		\lVert
		\bar{\boldsymbol u}_h
		\rVert_{L^2(\mathbb{R}^2)}^2+\lVert\nabla_h  \bar {\boldsymbol u}_h\rVert_{L^\infty(\mathbb{R}^2)}
		\lVert\varLambda^s\bar{\boldsymbol u}_h\rVert_{L^2(\mathbb{R}^2)}^2\,.
	\end{align}

	Combining the result of Proposition \ref{lem1} and  \eqref{2.56}, it is easy to obtain 
	\begin{align}\label{2.57}
		\frac{d}{dt}\lVert\varLambda^s\bar {\boldsymbol u}_h\rVert_{L^2(\mathbb{R}^2)}^2	
		\lesssim&
		\lVert\varLambda^s\bar{\boldsymbol u}_h\rVert_{L^2(\mathbb{R}^2)}^2
		\Big(\lVert{\boldsymbol{u}}_{0,h}\rVert_{L^2(\mathbb{R}^2)}+\lVert{\omega}_0\rVert_{H^2(\mathbb{R}^2)}\Big) \mathrm{e}^{-\frac{\sqrt{2\nu}}{8}  t}\notag\\
		&+\lVert{\boldsymbol{u}}_{0,h}\rVert_{L^2(\mathbb{R}^2)}^2 \mathrm{e}^{-\frac{\sqrt{2\nu}}{8}  t}
		\,.
	\end{align}	
	Using the Gronwall inequality for \eqref{2.57} yields
	\begin{align}\label{2.58}
		\lVert\varLambda^s {\boldsymbol u}_h(t)\rVert_{L^2(\mathbb{R}^2)}^2	
		\lesssim&
		\big(
		\lVert\varLambda^s {\boldsymbol u}_{h}(0)\rVert_{L^2(\mathbb{R}^2)}^2	
		+
		\tfrac{8}{\sqrt{2\nu}}
		\lVert{\boldsymbol{u}}_{0,h}\rVert_{L^2(\mathbb{R}^2)}^2
		\big)\notag\\
		&\times
		\exp
		\big(
		\tfrac{8}{\sqrt{2\nu}}(\lVert{\boldsymbol{u}}_{0,h}\rVert_{L^2(\mathbb{R}^2)}+\lVert{\omega}_0\rVert_{H^2(\mathbb{R}^2)})
		\big)\,.
	\end{align}

	To estimate $\lVert\partial_t \bar{\boldsymbol{u}}_h\rVert_{H^{s-1}(\mathbb{R}^2)}$, we first recall the equations \eqref{2.43} of $\bar{\boldsymbol{u}}_h$:
	\begin{equation}\label{2.59}
		\partial_{t} \bar{\boldsymbol{u}}_h
		=-(\bar {\boldsymbol u}_h \cdot \nabla_h) \bar {\boldsymbol u}_h
		-\sqrt{\tfrac{\nu}{2}}
		(\boldsymbol{H_0}
		-\boldsymbol{E_1})
		\bar{\boldsymbol u}_h-\nabla_h\bar p\,,
	\end{equation}
	and combine it with the incompressibility condition, yielding 
	\begin{equation*}
		\nabla_h\bar p=-\nabla_h\Delta_h^{-1}\nabla_h\cdot\Big((\bar {\boldsymbol u}_h \cdot \nabla_h) \bar {\boldsymbol u}_h+\sqrt{\tfrac{\nu}{2}}
		(\boldsymbol{H_0}
		-\boldsymbol{E_1})
		\bar{\boldsymbol u}_h\Big)\,.
	\end{equation*}
	Thus combining $\nabla_h \cdot \bar{\boldsymbol{u}}_h$ and acting on the Leray projection  $\mathbb{P}$ for \eqref{2.59} yields
	\begin{equation}\label{2.60}
		\partial_{t} \bar{\boldsymbol{u}}_h
		=-\mathbb{P}\big(
		(\bar {\boldsymbol u}_h \cdot \nabla_h) \bar {\boldsymbol u}_h+\sqrt{\tfrac{\nu}{2}}
		(\boldsymbol{H_0}
		-\boldsymbol{E_1})
		\bar{\boldsymbol u}_h
		\big)\,.
	\end{equation}
	Further utilizing $B(x,y)\in C^\infty(\mathbb{R}^2)$ and \eqref{2.49}, we can derive
	\begin{align*}
		\lVert\partial_t\bar{\boldsymbol{u}}_h\rVert_{{H}^{s-1}(\mathbb{R}^2)}
		=
		&\big\lVert
		\mathbb{P}\big(
		(\bar {\boldsymbol u}_h \cdot \nabla_h) \bar {\boldsymbol u}_h+\sqrt{\tfrac{\nu}{2}}
		(\boldsymbol{H_0}
		-\boldsymbol{E_1})
		\bar{\boldsymbol u}_h
		\big)
		\big\rVert_{{H}^{s-1}(\mathbb{R}^2)}\notag\\
		\lesssim&
		\lVert(\bar {\boldsymbol u}_h \cdot \nabla_h) \bar {\boldsymbol u}_h\rVert_{{H}^{s-1}(\mathbb{R}^2)}
		+
		\lVert\bar {\boldsymbol u}_h\rVert_{{H}^{s-1}(\mathbb{R}^2)}
		\notag\\
		\lesssim&\lVert\bar{\boldsymbol{u}}_h\rVert_{{L}^\infty(\mathbb{R}^2)}
		\lVert\bar {\boldsymbol u}_h\rVert_{{H}^{s}(\mathbb{R}^2)}
		+
		(1+\lVert\nabla_{ h}\bar{\boldsymbol{u}}_h\rVert_{{L}^\infty(\mathbb{R}^2)})
		\lVert\bar {\boldsymbol u}_h\rVert_{{H}^{s-1}(\mathbb{R}^2)}\notag\\
		\lesssim &\lVert
		 {\boldsymbol u}_{0,h}\rVert_{H^{s}(\mathbb{R}^2)}
		+\lVert{\boldsymbol{u}}_{0,h}\rVert_{L^2(\mathbb{R}^2)}\,.
		\notag
	\end{align*}	
\end{proof}
%
%
%
%Recalling the limiting system \eqref{2.43}, it is a damped Euler-type system. 
%We need the bounds of

As the subsequent proofs require, we give the propositions related to $\bar{u}_3$ below.

\begin{proposition}\label{prop4}
	Let ${\boldsymbol{u}}_{0}$ and $B(x,y)$ be defined as in Proposition \ref{lem1}-\ref{prop3}, $(\bar{\boldsymbol u},\bar p)$ be a pair of solutions to the system \eqref{2.43} with initial data $ {\boldsymbol u}_0$. Then, for $t \in [0,+\infty)$, there holds
	\begin{align}
		\lVert
		\partial_t \bar{u}_3+(\bar{\boldsymbol{u}}_h\cdot\nabla_h)\bar{u}_3\rVert_{{L}^{2}(\mathbb{R}^2)}&\leq
		\lVert\tilde{\boldsymbol{u}}_h\rVert_{L^2(\mathbb{R}^2)}
		+\lVert\tilde{\omega}\rVert_{L^2(\mathbb{R}^2)},\label{n13}\\
		\lVert
		\partial^2_t \bar{u}_3+\partial_t\big((\bar{\boldsymbol{u}}_h\cdot\nabla_h)\bar{u}_3\big)\rVert_{{L}^{2}(\mathbb{R}^2)} &\leq
		\lVert\tilde{\boldsymbol{u}}_h\rVert_{L^2(\mathbb{R}^2)}
		+\lVert\tilde{\omega}\rVert_{H^1(\mathbb{R}^2)}
		,\label{n14}
	\end{align}
	where 
	\begin{equation}\label{tildeu}
		\tilde{\boldsymbol{u}}_h=\bar {\boldsymbol u}_h-\mathrm{e}^{-\sqrt{\frac\nu2}t}{\boldsymbol{u}}_{0,h}, \quad \text{and} \quad
		\tilde{\omega}=\bar{\omega}-\mathrm{e}^{-\sqrt{\frac\nu2}t}\omega_0.
	\end{equation}
\end{proposition}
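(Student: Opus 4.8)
The plan is to reduce the material derivative $D_t\bar u_3:=\partial_t\bar u_3+(\bar{\boldsymbol u}_h\cdot\nabla_h)\bar u_3$ to an expression that is linear or bilinear in $\tilde{\boldsymbol u}_h$ and in which $\tilde{\boldsymbol u}_h$ is differentiated at most once (so that, via the two–dimensional Biot--Savart law, only $\tilde\omega$ enters), and then to differentiate that expression once in time for \eqref{n14}. Two elementary facts will be used throughout. Since $\bar u_3=\nabla_h B\cdot\bar{\boldsymbol u}_h$ with $B$ time-independent and with Jacobian of $\nabla_h B$ equal to $\boldsymbol H$, one has $D_t\bar u_3=\bar{\boldsymbol u}_h^{T}\boldsymbol H\bar{\boldsymbol u}_h+\nabla_h B\cdot\big(\partial_t\bar{\boldsymbol u}_h+(\bar{\boldsymbol u}_h\cdot\nabla_h)\bar{\boldsymbol u}_h\big)$. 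Moreover, the constraint $\bar u_3=\nabla_h B\cdot\bar{\boldsymbol u}_h$ in \eqref{2.43}, evaluated at $t=0$ against $\bar{\boldsymbol u}|_{t=0}=(\boldsymbol u_{0,h},0)$, forces $\nabla_h B\cdot\boldsymbol u_{0,h}=0$; with $\nabla_h\cdot\bar{\boldsymbol u}_h=0$ this gives $\bar u_3=\nabla_h B\cdot\tilde{\boldsymbol u}_h$ and makes $\boldsymbol{E_1}\bar{\boldsymbol u}_h$ (and $\boldsymbol{E_1}\boldsymbol u_{0,h}$) a pure horizontal gradient.

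First I would rewrite the momentum equation in the Leray-projected form of \eqref{2.60}: using $\boldsymbol{H_0}\bar{\boldsymbol u}_h=\bar{\boldsymbol u}_h+\bar u_3\nabla_h B$ and that $\boldsymbol{E_1}\bar{\boldsymbol u}_h$ is a gradient, one gets $\partial_t\bar{\boldsymbol u}_h+(\bar{\boldsymbol u}_h\cdot\nabla_h)\bar{\boldsymbol u}_h+\sqrt{\nu/2}\,\bar{\boldsymbol u}_h+\sqrt{\nu/2}\,\bar u_3\nabla_h B=\nabla_h\Delta_h^{-1}\nabla_h\cdot\boldsymbol N$, where $\boldsymbol N:=(\bar{\boldsymbol u}_h\cdot\nabla_h)\bar{\boldsymbol u}_h+\sqrt{\nu/2}\,\bar u_3\nabla_h B$. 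Feeding this into the identity for $D_t\bar u_3$ and simplifying the $\nabla_h B\cdot\nabla_h\Delta_h^{-1}\nabla_h\cdot(\cdot)$ contribution yields the closed formula
\begin{equation*}
D_t\bar u_3=\bar{\boldsymbol u}_h^{T}\boldsymbol H\bar{\boldsymbol u}_h-\sqrt{\tfrac{\nu}{2}}\,(1+|\nabla_h B|^2)\,\bar u_3+\nabla_h B\cdot\nabla_h\Delta_h^{-1}\nabla_h\cdot\boldsymbol N .
\end{equation*}

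The heart of the matter is the next step. Insert $\bar{\boldsymbol u}_h=\boldsymbol v+\tilde{\boldsymbol u}_h$ with $\boldsymbol v:=\mathrm{e}^{-\sqrt{\nu/2}\,t}\boldsymbol u_{0,h}$ from \eqref{tildeu}, and split $\boldsymbol N=\boldsymbol N_{\boldsymbol v}+\boldsymbol N_{\tilde{\boldsymbol u}_h}$ and each remaining term into a part depending only on $\boldsymbol v$ and a part carrying a factor $\tilde{\boldsymbol u}_h$; since $\nabla_h B\cdot\boldsymbol u_{0,h}=0$ we have $\nabla_h B\cdot\boldsymbol v=0$, so $\boldsymbol N_{\boldsymbol v}=(\boldsymbol v\cdot\nabla_h)\boldsymbol v$ and, by \eqref{compatibility}, $\boldsymbol N_{\boldsymbol v}=-\mathrm{e}^{-2\sqrt{\nu/2}\,t}\nabla_h p_0$ is a gradient, so $\nabla_h\Delta_h^{-1}\nabla_h\cdot\boldsymbol N_{\boldsymbol v}=\boldsymbol N_{\boldsymbol v}$ and the pure-$\boldsymbol v$ part of $D_t\bar u_3$ is $\mathrm{e}^{-2\sqrt{\nu/2}\,t}\big(\boldsymbol u_{0,h}^{T}\boldsymbol H\boldsymbol u_{0,h}-\nabla_h B\cdot\nabla_h p_0\big)$; testing \eqref{compatibility} against $\nabla_h B$ and using $\nabla_h B\cdot(\boldsymbol u_{0,h}\cdot\nabla_h)\boldsymbol u_{0,h}=(\boldsymbol u_{0,h}\cdot\nabla_h)(\nabla_h B\cdot\boldsymbol u_{0,h})-\boldsymbol u_{0,h}^{T}\boldsymbol H\boldsymbol u_{0,h}=-\boldsymbol u_{0,h}^{T}\boldsymbol H\boldsymbol u_{0,h}$ gives $\nabla_h B\cdot\nabla_h p_0=\boldsymbol u_{0,h}^{T}\boldsymbol H\boldsymbol u_{0,h}$, so that part vanishes identically. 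What survives is
\begin{equation*}
D_t\bar u_3=\tilde{\boldsymbol u}_h^{T}\boldsymbol H\tilde{\boldsymbol u}_h+2\boldsymbol v^{T}\boldsymbol H\tilde{\boldsymbol u}_h-\sqrt{\tfrac{\nu}{2}}\,(1+|\nabla_h B|^2)(\nabla_h B\cdot\tilde{\boldsymbol u}_h)+\nabla_h B\cdot\nabla_h\Delta_h^{-1}\nabla_h\cdot\boldsymbol N_{\tilde{\boldsymbol u}_h},
\end{equation*}
with $\boldsymbol N_{\tilde{\boldsymbol u}_h}=(\tilde{\boldsymbol u}_h\cdot\nabla_h)\bar{\boldsymbol u}_h+(\boldsymbol v\cdot\nabla_h)\tilde{\boldsymbol u}_h+\sqrt{\nu/2}(\nabla_h B\cdot\tilde{\boldsymbol u}_h)\nabla_h B$ — exactly the desired structure.

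From here \eqref{n13} is an $L^2$-estimate on this formula: $\nabla_h\Delta_h^{-1}\nabla_h\cdot$ is $L^2$-bounded, $\nabla_h B$ and $\boldsymbol H$ lie in $L^\infty$ (since $B\in C^\infty$ with the bounds \eqref{2.46}), $\lVert\bar{\boldsymbol u}_h\rVert_{W^{1,\infty}}$ and $\lVert\boldsymbol v\rVert_{W^{1,\infty}}\le\lVert\boldsymbol u_{0,h}\rVert_{W^{1,\infty}}$ are controlled by the data via Propositions~\ref{lem1} and \ref{prop3} ($H^{s}\hookrightarrow W^{1,\infty}$ for $s>5$), and the single derivative on $\tilde{\boldsymbol u}_h$ obeys $\lVert\nabla_h\tilde{\boldsymbol u}_h\rVert_{L^2}\lesssim\lVert\tilde\omega\rVert_{L^2}$ because $\tilde{\boldsymbol u}_h$ is divergence-free with vorticity $\tilde\omega$. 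For \eqref{n14} I would differentiate the displayed formula once in $t$: the terms $\partial_t\tilde{\boldsymbol u}_h$ and $\nabla_h\partial_t\tilde{\boldsymbol u}_h$ are handled through the equation $\partial_t\tilde{\boldsymbol u}_h+\sqrt{\nu/2}\,\tilde{\boldsymbol u}_h=-\mathbb{P}\big((\tilde{\boldsymbol u}_h\cdot\nabla_h)\bar{\boldsymbol u}_h+(\boldsymbol v\cdot\nabla_h)\tilde{\boldsymbol u}_h+\sqrt{\nu/2}(\nabla_h B\cdot\tilde{\boldsymbol u}_h)\nabla_h B\big)$ — itself obtained from \eqref{2.60} after the same compatibility cancellation renders the linear-decay nonlinearity a $\mathbb{P}$-annihilated gradient — combined with Biot--Savart, giving control by $\lVert\tilde{\boldsymbol u}_h\rVert_{L^2}+\lVert\tilde\omega\rVert_{H^1}$; the remaining coefficients ($\partial_t\bar{\boldsymbol u}_h$ in $W^{1,\infty}$ by \eqref{2.50}, $\partial_t\boldsymbol v=-\sqrt{\nu/2}\,\boldsymbol v$) stay bounded. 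The $\lVert\tilde\omega\rVert_{H^1}$ in \eqref{n14} (rather than $\lVert\tilde\omega\rVert_{L^2}$) arises precisely from the term $(\boldsymbol v\cdot\nabla_h)\partial_t\tilde{\boldsymbol u}_h$, which involves $\nabla_h^2\tilde{\boldsymbol u}_h$.

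The step I expect to be the real obstacle is the cancellation of the pure-$\boldsymbol v$ contribution. It is not automatic: it uses, together, the tangency $\nabla_h B\cdot\boldsymbol u_{0,h}=0$ (a consistency requirement of \eqref{2.43} for the well-prepared datum $(\boldsymbol u_{0,h},0)$) and the compatibility condition \eqref{compatibility}, and these force the nonlocal identity $\nabla_h B\cdot\nabla_h p_0=\boldsymbol u_{0,h}^{T}\boldsymbol H\boldsymbol u_{0,h}$. Without it, $D_t\bar u_3$ would carry a term of order $\mathrm{e}^{-2\sqrt{\nu/2}\,t}\,\boldsymbol u_{0,h}^{T}\boldsymbol H\boldsymbol u_{0,h}$, not dominated by $\lVert\tilde{\boldsymbol u}_h\rVert+\lVert\tilde\omega\rVert$ (it would not even vanish at $t=0$, unlike the right-hand sides of \eqref{n13}--\eqref{n14}). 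Everything after that cancellation is routine bookkeeping with the Biot--Savart estimate and the a priori bounds of Propositions~\ref{lem1}--\ref{prop3}.
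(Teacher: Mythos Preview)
Your approach is correct and essentially the same as the paper's. Both rely on the same three ingredients: the tangency $\nabla_h B\cdot\boldsymbol u_{0,h}=0$, the compatibility condition \eqref{compatibility} (together with $\boldsymbol E_1\boldsymbol u_{0,h}=\nabla_h\phi$) to eliminate the pure-$\boldsymbol v$ contribution, and the Biot--Savart bound $\lVert\nabla_h\tilde{\boldsymbol u}_h\rVert_{L^2}\lesssim\lVert\tilde\omega\rVert_{L^2}$. The only cosmetic difference is organizational: the paper keeps the convective piece $(\bar{\boldsymbol u}_h\cdot\nabla_h)\bar u_3=(\bar{\boldsymbol u}_h\cdot\nabla_h)(\nabla_h B\cdot\tilde{\boldsymbol u}_h)$ as a separate term and absorbs the $\boldsymbol v$-cancellation into a modified pressure $\tilde p=\bar p-\mathrm e^{-\sqrt{2\nu}\,t}p_0+\sqrt{\nu/2}\,\mathrm e^{-\sqrt{\nu/2}\,t}\phi$, whereas you first expand $D_t\bar u_3=\bar{\boldsymbol u}_h^T\boldsymbol H\bar{\boldsymbol u}_h+\nabla_h B\cdot D_t\bar{\boldsymbol u}_h$ and then exhibit the explicit identity $\nabla_h B\cdot\nabla_h p_0=\boldsymbol u_{0,h}^T\boldsymbol H\boldsymbol u_{0,h}$. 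For \eqref{n14} both proofs differentiate in $t$ and close via the equations for $\partial_t\tilde{\boldsymbol u}_h$ and $\partial_t\tilde\omega$, the latter being precisely where $\lVert\nabla_h\tilde\omega\rVert_{L^2}$ enters.
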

\begin{proof}
	From the limiting system \eqref{2.43}, we have 
	\begin{align}\label{n1}
		&\partial_t \bar{u}_3+(\bar{\boldsymbol{u}}_h\cdot\nabla_h)\bar{u}_3\notag\\
		=&-\nabla_{h}B\cdot\Big(
		(\bar {\boldsymbol u}_h \cdot \nabla_h) \bar {\boldsymbol u}_h+\sqrt{\tfrac{\nu}{2\cos\gamma}}
		(\boldsymbol{H_0}
		-\cos^{-1}\gamma\boldsymbol{E_1})
		\bar{\boldsymbol u}_h +\nabla_{ h}\bar{p}
		\Big)+(\bar{\boldsymbol{u}}_h\cdot\nabla_h)\bar{u}_3\notag\\
		=&-\nabla_{h}B\cdot\Big(
		(\bar {\boldsymbol u}_h \cdot \nabla_h) \tilde{\boldsymbol{u}}_h+\big(\tilde{\boldsymbol{u}}_h\cdot  \nabla_h\big)\big(\mathrm{e}^{-\sqrt{\frac\nu2}t}{\boldsymbol{u}}_{0,h}\big)+\sqrt{\tfrac{\nu}{2}}
		(\boldsymbol{H_0}
		-\boldsymbol{E_1})
		\tilde{\boldsymbol{u}}_h\notag\\
		&\qquad\qquad
		+\nabla_{ h}\Big(\bar{p}-\mathrm{e}^{-2\sqrt{\frac\nu2}t}p_0+\sqrt{\tfrac\nu2}\mathrm{e}^{-\sqrt{\frac\nu2}t}\phi\Big)
		\Big)\notag\\
		&+(\bar{\boldsymbol{u}}_h\cdot\nabla_h)\big(\nabla_{ h}B\cdot\tilde{\boldsymbol{u}}_h\big).
	\end{align}
	The final equality is derived from the compatibility condition \eqref{compatibility},  with
	\begin{equation}
		\phi=\Delta^{-1}_h\nabla_{h}\cdot(\boldsymbol{E_1}\boldsymbol{u}_{0,h}).
	\end{equation}
 Also, combining the limiting system and the compatibility condition yields
	\begin{align}\label{n2}
		&\nabla_{ h}\Big(\bar{p}-\mathrm{e}^{-2\sqrt{\frac\nu2}t}p_0+\sqrt{\tfrac{\nu}{2}}\mathrm{e}^{-\sqrt{\frac\nu2}t}\phi\Big)\notag\\
		=&-\nabla_{ h} \Delta_h^{-1}\nabla_{ h}\cdot
		\Big(
		(\bar {\boldsymbol u}_h \cdot \nabla_h) \tilde{\boldsymbol{u}}_h+\big(\tilde{\boldsymbol{u}}_h\cdot  \nabla_h\big)\big(\mathrm{e}^{-\sqrt{\frac\nu2}t}{\boldsymbol{u}}_{0,h}\big)+\sqrt{\tfrac{\nu}{2}}
		(\boldsymbol{H_0}
		-\boldsymbol{E_1})\tilde{\boldsymbol{u}}_h
		\Big).
	\end{align}
	Then, from  Propositions \ref{lem1}--\ref{prop3}, it follows that
	\begin{align}\label{n3}
		&\lVert
		\partial_t \bar{u}_3+(\bar{\boldsymbol{u}}_h\cdot\nabla_h)\bar{u}_3\rVert_{{L}^{2}(\mathbb{R}^2)}\notag\\
		\leq&\lVert\nabla_{h}B\cdot
		\mathbb{P}\Big(	(\bar {\boldsymbol u}_h \cdot \nabla_h) \tilde{\boldsymbol{u}}_h+\big(\tilde{\boldsymbol{u}}_h\cdot  \nabla_h\big)\big(\mathrm{e}^{-\sqrt{\frac\nu2}t}{\boldsymbol{u}}_{0,h}\big)+\sqrt{\tfrac{\nu}{2}}
		(\boldsymbol{H_0}
		-\boldsymbol{E_1})\tilde{\boldsymbol{u}}_h
		\Big)\rVert_{{L}^{2}(\mathbb{R}^2)}\notag\\
		&+\lVert(\bar{\boldsymbol{u}}_h\cdot\nabla_h)\big(\nabla_{ h}B\cdot\tilde{\boldsymbol{u}}_h\big)\rVert_{{L}^{2}(\mathbb{R}^2)},\notag\\
		\lesssim&
		\lVert
		\tilde{\boldsymbol{u}}_h
		\rVert_{{L}^{2}(\mathbb{R}^2)}
		+
		\lVert\nabla_{ h}
		\tilde{\boldsymbol{u}}_h
		\rVert_{{L}^{2}(\mathbb{R}^2)}.
	\end{align}
	
	Similarly we can expand to get
	\begin{align}\label{n15}
	&\lVert
	\partial^2_t \bar{u}_3+	\partial_t((\bar{\boldsymbol{u}}_h\cdot\nabla_h)\bar{u}_3)\rVert_{{L}^{2}(\mathbb{R}^2)}\notag\\
	\leq&\lVert\nabla_{h}B\cdot
		\partial_t\mathbb{P}\Big(	(\bar {\boldsymbol u}_h \cdot \nabla_h) \tilde{\boldsymbol{u}}_h+\big(\tilde{\boldsymbol{u}}_h\cdot  \nabla_h\big)\big(\mathrm{e}^{-\sqrt{\frac\nu2}t}{\boldsymbol{u}}_{0,h}\big)+\sqrt{\tfrac{\nu}{2}}
	(\boldsymbol{H_0}
	-\boldsymbol{E_1})\tilde{\boldsymbol{u}}_h
	\Big)\rVert_{{L}^{2}(\mathbb{R}^2)}\notag\\
	&+\lVert	\partial_t\big((\bar{\boldsymbol{u}}_h\cdot\nabla_h)\big(\nabla_{ h}B\cdot\tilde{\boldsymbol{u}}_h\big)\big)\rVert_{{L}^{2}(\mathbb{R}^2)},\notag\\
	\leq&
	\lVert\nabla_{h}B\cdot
	 \mathbb{P}\Big(	
	(\partial_t\bar {\boldsymbol u}_h \cdot \nabla_h) \tilde{\boldsymbol{u}}_h
	+(\bar {\boldsymbol u}_h \cdot \nabla_h) \partial_t\tilde{\boldsymbol{u}}_h
	+\sqrt{\tfrac{\nu}{2}}
	(\boldsymbol{H_0}
	-\boldsymbol{E_1})\partial_t\tilde{\boldsymbol{u}}_h
	\Big)\rVert_{{L}^{2}(\mathbb{R}^2)}\notag\\
		&+
	\lVert\nabla_{h}B\cdot
	\mathbb{P}\Big(	
	\big(\partial_t\tilde{\boldsymbol{u}}_h\cdot  \nabla_h\big)\big(\mathrm{e}^{-\sqrt{\frac\nu2}t}{\boldsymbol{u}}_{0,h}\big)
	-\sqrt{\tfrac\nu2}
	\big(\tilde{\boldsymbol{u}}_h\cdot  \nabla_h\big)\big(\mathrm{e}^{-\sqrt{\frac\nu2}t}{\boldsymbol{u}}_{0,h}\big)
	\Big)\rVert_{{L}^{2}(\mathbb{R}^2)}\notag\\
	&+
	\lVert	\partial_t \bar{\boldsymbol{u}}^T_h\boldsymbol{H}\tilde{\boldsymbol{u}}_h
	+ \bar{\boldsymbol{u}}^T_h\boldsymbol{H}\partial_t\tilde{\boldsymbol{u}}_h
	+
	\nabla_h B\cdot (\partial_t\bar{\boldsymbol{u}}_h\cdot\nabla_h)\tilde{\boldsymbol{u}}_h+
	\nabla_h B\cdot (\bar{\boldsymbol{u}}_h\cdot\nabla_h)\partial_t\tilde{\boldsymbol{u}}_h
	\rVert_{{L}^{2}(\mathbb{R}^2)},\notag\\
	\lesssim&
	\lVert
	\tilde{\boldsymbol{u}}_h
	\rVert_{{L}^{2}(\mathbb{R}^2)}
	+
	\lVert\nabla_{ h}
	\tilde{\boldsymbol{u}}_h
	\rVert_{{L}^{2}(\mathbb{R}^2)}
	+\lVert
	\partial_t\tilde{\boldsymbol{u}}_h
	\rVert_{{L}^{2}(\mathbb{R}^2)}
	+
	\lVert\nabla_{ h}
	\partial_t\tilde{\boldsymbol{u}}_h
	\rVert_{{L}^{2}(\mathbb{R}^2)}.
\end{align}

	We recall that for incompressible fluids, 
	\begin{equation*}
		\nabla_{h}
		\tilde{\boldsymbol{u}}_h=\nabla_{h}(\bar {\boldsymbol u}_h-\mathrm{e}^{-\sqrt{\frac\nu2}t}{\boldsymbol{u}}_{0,h})
		=\mathcal{R} (\bar{\omega}-\mathrm{e}^{-\sqrt{\frac\nu2}t}\Delta_h \phi)=:\mathcal{R}\tilde{\omega},
	\end{equation*}
	 where $\mathcal{R}$ is a homogeneous Calder\'{o}n-Zygmund operator of order zero. Thus, we get 
	\begin{equation*}\label{n4}
		\lVert
		\partial_t \bar{u}_3+(\bar{\boldsymbol{u}}_h\cdot\nabla_h)\bar{u}_3\rVert_{{L}^{2}(\mathbb{R}^2)}
		\lesssim
		\lVert
			\tilde{\boldsymbol{u}}_h
		\rVert_{{L}^{2}(\mathbb{R}^2)}
		+
		\lVert \tilde{\omega}
		\rVert_{{L}^{2}(\mathbb{R}^2)},
	\end{equation*}
and	
	\begin{equation*}\label{n16}
	\lVert
	\partial^2_t \bar{u}_3+\partial_t((\bar{\boldsymbol{u}}_h\cdot\nabla_h)\bar{u}_3)\rVert_{{L}^{2}(\mathbb{R}^2)}
	\lesssim
	\lVert
	\tilde{\boldsymbol{u}}_h
	\rVert_{{L}^{2}(\mathbb{R}^2)}
	+
	\lVert \tilde{\omega}
	\rVert_{{L}^{2}(\mathbb{R}^2)}
	+\lVert
	\partial_t\tilde{\boldsymbol{u}}_h
	\rVert_{{L}^{2}(\mathbb{R}^2)}
	+
	\lVert 
	\partial_t\tilde{\omega}
	\rVert_{{L}^{2}(\mathbb{R}^2)}.
\end{equation*}

Next, we use equations \eqref{2.43} and \eqref{compatibility} to give the form of $\partial_t\tilde{\boldsymbol{u}}_h$ and give estimates via Proposition \ref{lem1}:	
\begin{align}\label{n17}
	\lVert 
	\partial_t\tilde{\boldsymbol{u}}_h
	\rVert_{{L}^{2}(\mathbb{R}^2)}
	\leq&
	\lVert
	\mathbb{P}\Big((\bar {\boldsymbol u}_h \cdot \nabla_h) \tilde{\boldsymbol{u}}_h
	+\big(\tilde{\boldsymbol{u}}_h\cdot\nabla_{ h}\big)\big(\mathrm{e}^{-\sqrt{\frac\nu2}t}{\boldsymbol{u}}_{0,h}\big)
	+\sqrt{\tfrac{\nu}{2}}
	(\boldsymbol{H_0}
	-\boldsymbol{E_1})
	\tilde{\boldsymbol{u}}_h\Big)
	\rVert_{{L}^{2}(\mathbb{R}^2)}\notag\\
	\lesssim&
	\lVert\tilde{\boldsymbol{u}}_h\rVert_{L^2(\mathbb{R}^2)}
	+\lVert\tilde{\omega}\rVert_{L^2(\mathbb{R}^2)}.
\end{align}

	Finally, we use equation \eqref{2.44} to give the expression for $\partial_t\tilde{\omega}$ and derive estimates via Proposition \ref{lem1}--\ref{prop3}:	
	\begin{align}\label{n18}
		\lVert 
		\partial_{t}\tilde{\omega}
		\rVert_{{L}^{2}(\mathbb{R}^2)}
		\leq&
		\lVert
		(\bar {\boldsymbol u}_h \cdot \nabla_h) \tilde{\omega}
		+\big(\tilde{\boldsymbol{u}}_h\cdot\nabla_{ h}\big)\big(\mathrm{e}^{-\sqrt{\frac\nu2}t}{\omega}_0\big)
		+\nabla_h^\bot\cdot
		\Big(
		\sqrt{\tfrac{\nu}{2}}
		(\boldsymbol{H_0}
		-\boldsymbol{E_1})
		\tilde{\boldsymbol{u}}_h\Big)
		\rVert_{{L}^{2}(\mathbb{R}^2)}\notag\\
		\lesssim&
		\lVert\tilde{\boldsymbol{u}}_h\rVert_{L^2(\mathbb{R}^2)}
		+\lVert\tilde{\omega}\rVert_{H^1(\mathbb{R}^2)}.
	\end{align}

Based on the above analysis, we finally obtained the results \eqref{n13}-\eqref{n14}.
	\end{proof}

\begin{proposition}\label{propT}
	Let ${\boldsymbol{u}}_{0}$, $B(x,y)$, $\tilde{\boldsymbol{u}}_h$ and $\tilde{\omega}$ be defined as in Proposition \ref{lem1}-\ref{prop3} and \eqref{tildeu}. Then, for $t \in [0,+\infty)$, there holds
	\begin{align}
		\lVert
		\tilde{\boldsymbol{u}}_h(t)\rVert_{{L}^{2}(\mathbb{R}^2)}&\leq
		\lVert
		\tilde{\boldsymbol{u}}_h(0)\rVert_{{L}^{2}(\mathbb{R}^2)},\label{n22}\\
		\lVert\tilde{\omega}(t)\rVert_{H^1(\mathbb{R}^2)} &\leq
	\lVert\tilde{\omega}(0)\rVert_{H^1(\mathbb{R}^2)}.\label{n23}
	\end{align}
\end{proposition}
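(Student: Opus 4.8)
The plan is to derive both bounds from direct $L^2$--energy estimates on the evolution equations for $\tilde{\boldsymbol u}_h$ and $\tilde\omega$ that were, in effect, already extracted in the proof of Proposition~\ref{prop4}. The structural fact that makes this work is that, thanks to the compatibility condition \eqref{compatibility} together with the well-prepared form $({\boldsymbol u}_{0,h},0)$ of the data, these equations carry \emph{no forcing term}: from the identity underlying \eqref{n17} one has, for a suitable scalar $q$,
\begin{equation*}
\partial_{t}\tilde{\boldsymbol u}_h+(\bar{\boldsymbol u}_h\cdot\nabla_h)\tilde{\boldsymbol u}_h+\big(\tilde{\boldsymbol u}_h\cdot\nabla_h\big)\big(\mathrm{e}^{-\sqrt{\nu/2}\,t}{\boldsymbol u}_{0,h}\big)+\sqrt{\tfrac{\nu}{2}}(\boldsymbol{H_0}-\boldsymbol{E_1})\tilde{\boldsymbol u}_h+\nabla_h q=0,
\end{equation*}
while, by \eqref{2.44}--\eqref{2.45} and \eqref{n18}, $\tilde\omega=\nabla_h^\bot\cdot\tilde{\boldsymbol u}_h$ solves a transport--damping equation whose right-hand side is again linear in $\tilde{\boldsymbol u}_h$ and $\tilde\omega$.

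For \eqref{n22} I would pair the displayed equation with $\tilde{\boldsymbol u}_h$ in $L^2(\mathbb{R}^2)$. The transport term drops because $\bar{\boldsymbol u}_h$ is divergence free, and the pressure term drops because $\tilde{\boldsymbol u}_h$ is; the skew matrix $\boldsymbol{E_1}$ contributes nothing, while $\boldsymbol{H_0}$, whose eigenvalues are $1$ and $\cos^{-2}\gamma\ge1$, gives $\int_{\mathbb{R}^2}\boldsymbol{H_0}\tilde{\boldsymbol u}_h\cdot\tilde{\boldsymbol u}_h\ge\lVert\tilde{\boldsymbol u}_h\rVert_{L^2(\mathbb{R}^2)}^2$, i.e.\ a damping of strength $\sqrt{\nu/2}$. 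The one term to absorb, $\int_{\mathbb{R}^2}(\tilde{\boldsymbol u}_h\cdot\nabla_h)(\mathrm{e}^{-\sqrt{\nu/2}\,t}{\boldsymbol u}_{0,h})\cdot\tilde{\boldsymbol u}_h$, is bounded by $\mathrm{e}^{-\sqrt{\nu/2}\,t}\lVert\nabla_h{\boldsymbol u}_{0,h}\rVert_{L^\infty(\mathbb{R}^2)}\lVert\tilde{\boldsymbol u}_h\rVert_{L^2(\mathbb{R}^2)}^2\le\sqrt{\nu/3}\,\lVert\tilde{\boldsymbol u}_h\rVert_{L^2(\mathbb{R}^2)}^2$ by \eqref{1.26}. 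Since $\sqrt{\nu/3}<\sqrt{\nu/2}$ this yields $\tfrac{d}{dt}\lVert\tilde{\boldsymbol u}_h\rVert_{L^2(\mathbb{R}^2)}^2\le0$, hence \eqref{n22} (indeed with exponential decay).

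For \eqref{n23} I would test the $\tilde\omega$-equation and its first spatial derivatives against $\tilde\omega$ and $\nabla_h\tilde\omega$ in $L^2(\mathbb{R}^2)$. Using $\nabla_h^\bot\cdot(\sqrt{\nu/2}(\boldsymbol{H_0}-\boldsymbol{E_1})\tilde{\boldsymbol u}_h)=\sqrt{\nu/2}\,\tilde\omega+\sqrt{\nu/2}\,\nabla_h^\bot(\nabla_hB\cdot\tilde{\boldsymbol u}_h)\cdot\nabla_hB$, the first piece gives the damping and the second is the genuinely geometric term; together with $(\tilde{\boldsymbol u}_h\cdot\nabla_h)(\mathrm{e}^{-\sqrt{\nu/2}\,t}\omega_0)$ and the commutator $(\nabla_h\bar{\boldsymbol u}_h\cdot\nabla_h)\tilde\omega$ produced by differentiation, these are precisely the contributions handled in \eqref{5.12}. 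I would control them with $|\nabla_hB|^2=\tfrac{\cos^2\alpha+\cos^2\beta}{\cos^2\gamma}<\tfrac18$, the eigenvalue bound \eqref{H} and the Gaussian-curvature conditions \eqref{2.46}, the identity $\lVert\nabla_h\tilde{\boldsymbol u}_h\rVert_{L^2(\mathbb{R}^2)}=\lVert\tilde\omega\rVert_{L^2(\mathbb{R}^2)}$ (and its $\nabla_h$-analogue), the $W^{1,\infty}$ bound \eqref{2.48} on $\bar{\boldsymbol u}_h$, the embedding $\lVert\omega_0\rVert_{W^{1,\infty}(\mathbb{R}^2)}\lesssim\lVert{\boldsymbol u}_{0,h}\rVert_{H^s(\mathbb{R}^2)}$ ($s>5$), and — for every term that couples to $\lVert\tilde{\boldsymbol u}_h\rVert_{L^2(\mathbb{R}^2)}$ — the bound \eqref{n22} just proven. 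The two estimates combine to $\tfrac{d}{dt}\lVert\tilde\omega\rVert_{H^1(\mathbb{R}^2)}^2\le0$, exactly as \eqref{5.9}--\eqref{5.12} are combined in the proof of Proposition~\ref{lem1}, hence \eqref{n23}.

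The step I expect to be the main obstacle is closing the $H^1$ estimate for $\tilde\omega$: after differentiating the vorticity equation the geometric terms involve $\nabla_h^2 B$ and $\nabla_h^2\tilde{\boldsymbol u}_h$, and it is here — rather than at the $L^2$ level, where mere positivity of $\boldsymbol{H_0}$ is enough — that the full strength of the Gaussian-curvature bounds \eqref{2.46} is needed to absorb them into the $\sqrt{\nu/2}$ damping; one must also balance the mild derivative loss in the $(\tilde{\boldsymbol u}_h\cdot\nabla_h)\omega_0$-type terms against the decay of $\tilde{\boldsymbol u}_h$ established in the first step, so that the order in which \eqref{n22} and \eqref{n23} are proven is essential.
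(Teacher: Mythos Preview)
Your proposal is correct and follows the paper's overall strategy: the same evolution equations for $\tilde{\boldsymbol u}_h$ and $\tilde\omega$ (the paper's \eqref{n5} and \eqref{n9}) are tested in $L^2$, and the damping from $\boldsymbol{H_0}$ absorbs the remaining terms. The one substantive difference is in how the cross term $\int_{\mathbb{R}^2}(\tilde{\boldsymbol u}_h\cdot\nabla_h)\big(\mathrm{e}^{-\sqrt{\nu/2}\,t}{\boldsymbol u}_{0,h}\big)\cdot\tilde{\boldsymbol u}_h$ is handled in the proof of \eqref{n22}: you bound it directly via $\lVert\nabla_h{\boldsymbol u}_{0,h}\rVert_{L^\infty}\le\sqrt{\nu/3}$ from \eqref{1.26}, whereas the paper invokes the well-prepared constraint $\nabla_h B\cdot{\boldsymbol u}_{0,h}=0$ (cf.\ Remark~\ref{re4}) to recast this contribution as $\int_{\mathbb{R}^2}|\tilde{\boldsymbol u}_h^{T}\boldsymbol{E_1}\boldsymbol{H}\tilde{\boldsymbol u}_h|$ and then absorbs it with the curvature bound \eqref{2.46} via \eqref{H}. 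Your route is shorter and relies only on the smallness \eqref{1.26}; the paper's route uses only the geometric hypotheses at this step but needs the algebraic structure afforded by $\nabla_h B\cdot{\boldsymbol u}_{0,h}=0$. For the $H^1$ estimate on $\tilde\omega$ the two arguments essentially coincide, and the paper does in the end invoke \eqref{1.26} for the commutator term (your anticipated obstacle), so both hypotheses are ultimately used in either approach.
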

\begin{proof}
	First, we estimate $\lVert
	\tilde{\boldsymbol{u}}_h
	\rVert_{{L}^{2}(\mathbb{R}^2)}$. Combining with \eqref{compatibility} and \eqref{2.43}, we have
	\begin{equation}\label{n5}
		\begin{cases}
			\partial_{t}\tilde{\boldsymbol{u}}_h+(\bar {\boldsymbol u}_h \cdot \nabla_h) \tilde{\boldsymbol{u}}_h
			+\big(\tilde{\boldsymbol{u}}_h\cdot\nabla_{ h}\big)\big(\mathrm{e}^{-\sqrt{\frac\nu2}t}{\boldsymbol{u}}_{0,h}\big)
			+\sqrt{\tfrac{\nu}{2}}
			(\boldsymbol{H_0}
			-\boldsymbol{E_1})
			\tilde{\boldsymbol{u}}_h
			+\nabla_h \tilde{p} =0\,,\\
			\nabla\cdot \tilde{\boldsymbol{u}}_h=0,
		\end{cases}
	\end{equation}
	where $\tilde{p}=\bar{p}-\mathrm{e}^{-2\sqrt{\frac\nu2}t}p_0+\sqrt{\tfrac{\nu}{2}}\mathrm{e}^{-\sqrt{\frac\nu2}t}\phi$.
	Due to the well-prepared initial data $\nabla_{ h}B\cdot{\boldsymbol{u}}_{0,h}=0$ and \eqref{H}, one gets
	\begin{align}\label{n6}
		\frac{1}{2}\frac{\mathrm{d}}{\mathrm{d}t}\|\tilde{\boldsymbol{u}}_h\|^2_{{L}^2(\mathbb{R}^2)}
		+\sqrt{\tfrac\nu2}\|\tilde{\boldsymbol{u}}_h\|^2_{{L}^2(\mathbb{R}^2)}
		\leq
		\int_{\mathbb{R}^2}
		|\tilde{\boldsymbol{u}}_h^T
		\boldsymbol{E_1}\boldsymbol{H}\tilde{\boldsymbol{u}}_h|
		{d}x{d}y
		\leq\tfrac98\sqrt{|\boldsymbol{K}_{\mathrm{G}}|}\|\tilde{\boldsymbol{u}}_h\|^2_{{L}^2(\mathbb{R}^2)}\,.
	\end{align}
	Under the curvature constraint \eqref{2.46}, one obtains
	\begin{equation}\label{n8}
		\|\tilde{\boldsymbol{u}}_h(t)\|_{{L}^2(\mathbb{R}^2)}\leq
		\|\tilde{\boldsymbol{u}}_h(0)\|_{{L}^2(\mathbb{R}^2)}.
	\end{equation}
	
	Second, we estimate $\lVert \tilde{\omega}
	\rVert_{{L}^{2}(\mathbb{R}^2)}$.
	Applying the $\nabla_{ h}^\bot\cdot$ operator to \eqref{n5} yields
	\begin{equation}\label{n9}
		\partial_{t}\tilde{\omega}+(\bar {\boldsymbol u}_h \cdot \nabla_h) \tilde{\omega}
		+\big(\tilde{\boldsymbol{u}}_h\cdot\nabla_{ h}\big)\big(\mathrm{e}^{-\sqrt{\frac\nu2}t}{\omega}_0\big)
		+\nabla_h^\bot\cdot
		\Big(
		\sqrt{\tfrac{\nu}{2}}
		(\boldsymbol{H_0}
		-\boldsymbol{E_1})
		\tilde{\boldsymbol{u}}_h\Big)=0,
	\end{equation}
		Then, we utilize \eqref{2.44}, \eqref{2.45} and \eqref{n9}  to estimate $\tilde{\omega}$, there holds
	\begin{align*} 
		&	\frac12\frac{d}{dt}\lVert\tilde{\omega}\rVert_{L^2(\mathbb{R}^2)}^2+\sqrt{\tfrac{\nu}{2}}\lVert
		\tilde{\omega}
		\rVert_{L^s(\mathbb{R}^2)}^s
		\notag\\\notag
		&\leq\sqrt{\tfrac{\nu}{2}} \int_{\mathbb{R}^2} |\nabla_hB^T\boldsymbol{E_1} \boldsymbol{H}
		\tilde{\boldsymbol{u}}_h| |\tilde{\omega}| \,dx{d}y 
		+\sqrt{\tfrac{\nu}{2}}
		\int_{\mathbb{R}^2}
		|\nabla_hB|^2|\nabla_h\tilde{\boldsymbol{u}}_h||\tilde{\omega}| \,dx{d}y \\
		&\quad+\int_{\mathbb{R}^2}
		|\big(\tilde{\boldsymbol{u}}_h\cdot\nabla_{ h}\big)\big(\mathrm{e}^{-\sqrt{\frac\nu2}t}{\omega}_0\big)\tilde{\omega}|
		\,dx{d}y\notag
		\\
		&\leq 
		\big(
		\lVert\nabla_{h}{\omega}_0\lVert_{L^\infty(\mathbb{R}^2)}
		+\tfrac{9}{16}\sqrt{\tfrac{\nu}{2}}
		\sup\limits_{(x,y)\in \mathbb{R}^2}\sqrt{|\boldsymbol{K}_{\mathrm{G}}|}\big)
		\lVert\tilde{\boldsymbol{u}}_h\lVert_{L^2(\mathbb{R}^2)}
		\lVert\tilde{\omega}\lVert_{L^2(\mathbb{R}^2)}
		+\tfrac18\sqrt{\tfrac{\nu}{2}}
		\lVert\tilde{\omega}\lVert^{2}_{L^2(\mathbb{R}^2)}.
	\end{align*}
	Combining \eqref{2.46}, \eqref{n8} and Proposition \ref{prop3} yields
	\begin{equation}\label{n12}
		\lVert\tilde{\omega}(t)\rVert_{L^2(\mathbb{R}^2)}\leq
		\lVert\tilde{\omega}(0)\rVert_{L^2(\mathbb{R}^2)}
		.
	\end{equation}

	For the estimate of $\lVert\nabla_{ h}\tilde{\omega}\rVert_{L^2(\mathbb{R}^2)}$, we have below that after acting the $\nabla_{ h}$ operator on $\eqref{n9}_1$, the inner product with $\nabla_{h}\tilde{\omega}$ yields
	\begin{align}\label{n20}
		&	\frac12\frac{d}{dt}\lVert\nabla_{h}\tilde{\omega}\rVert_{L^2(\mathbb{R}^2)}^2+\sqrt{\tfrac{\nu}{2}}\lVert\nabla_{h}
		\tilde{\omega}
		\rVert_{L^s(\mathbb{R}^2)}^s
		\notag\\\notag
		&=
		-\int_{\mathbb{R}^2}
		\nabla_{ h}\big(
		(\bar {\boldsymbol u}_h \cdot \nabla_h) \tilde{\omega}
		\big)\cdot
		\nabla_{ h}\tilde{\omega}
		\,dx{d}y
		-\int_{\mathbb{R}^2}
		\nabla_{ h}\big(
		\big(\tilde{\boldsymbol{u}}_h\cdot\nabla_{ h}\big)\big(\mathrm{e}^{-\sqrt{\frac\nu2}t}{\omega}_0\big)
		\big)\cdot
		\nabla_{ h}\tilde{\omega}
		\,dx{d}y\notag\\
		&\quad-\sqrt{\tfrac{\nu}{2}}\int_{\mathbb{R}^2}
		\nabla_{ h}\big(
		\nabla_hB^T\boldsymbol{E_1} \boldsymbol{H}
		\tilde{\boldsymbol{u}}_h
		-\nabla_hB\cdot(\nabla_h^\bot B\cdot\nabla_h)\tilde{\boldsymbol{u}}_h
		\big)\cdot
		\nabla_{ h}\tilde{\omega}
		\,dx{d}y
		\notag\\
		&\lesssim
		\lVert
		\nabla_{ h}\bar{\boldsymbol{u}}_h
		\lVert_{L^\infty(\mathbb{R}^2)}
		\lVert
		\nabla_{ h}\tilde{\omega}
		\lVert^{2}_{L^2(\mathbb{R}^2)}
		+
		\Big(\lVert
		\tilde{\boldsymbol{u}}_h
		\lVert_{L^2(\mathbb{R}^2)}
		+\lVert
		\tilde{\omega}
		\lVert_{L^2(\mathbb{R}^2)}\Big)
		\lVert
		\nabla_{h}\tilde{\omega}
		\lVert_{L^2(\mathbb{R}^2)}\notag\\
		&\leq\lVert
		\nabla_{ h}\bar{\boldsymbol{u}}_h
		\lVert_{L^\infty(\mathbb{R}^2)}
		\lVert
		\nabla_{ h}\tilde{\omega}
		\lVert^{2}_{L^2(\mathbb{R}^2)}.
	\end{align}
	Using \eqref{1.26}, we get
	\begin{equation}\label{n21}
		\lVert\nabla_{ h}\tilde{\omega}(t)\rVert_{L^2(\mathbb{R}^2)}\leq
		\lVert\nabla_{ h}\tilde{\omega}(0)\rVert_{L^2(\mathbb{R}^2)}
		.
	\end{equation}
\end{proof}

The above is a study of the main part of the approximate solution.
Next, we formally present the approximate solution equation and show the remainder terms. The approximate solution satisfies
\begin{equation}\label{2.61}
	\begin{cases}
		\partial_{t} {\boldsymbol u}_{app}^{\varepsilon}-\nu\varepsilon\Delta{\boldsymbol u}_{app}^{\varepsilon}+({\boldsymbol u}_{app}^{\varepsilon} \cdot \nabla) {\boldsymbol u}_{app}^{\varepsilon}+\varepsilon^{-1}\boldsymbol{R} {\boldsymbol u}_{app}^{\varepsilon}
		+\varepsilon^{-1}\nabla{p}_{app}^{\varepsilon}=\boldsymbol \rho^{\varepsilon},\\
		\nabla\cdot {\boldsymbol u}_{app}^{\varepsilon}=0,\quad
		{\boldsymbol u}_{app}^{\varepsilon}|_{\partial\Omega}=0.
	\end{cases}
\end{equation}
where $\boldsymbol{\rho}^\varepsilon=\boldsymbol \rho^{\varepsilon,1}+\boldsymbol \rho^{\varepsilon,2}+\boldsymbol \rho^{\varepsilon,3}$, and for $i,j\in\{0,1,2\}$, there are
\begin{align}
	\boldsymbol{\rho}^{\varepsilon,1}=
	&
	\frac{1}{\varepsilon}\Big(
	(1-\chi)\tilde{\nabla}_{\delta^0}
	(\delta^3p^{B,3})
	+\chi\bar{\nabla}_{\delta^0}(\delta^3p^{T,3})
	+\chi'\delta^3(0,0,p^{T,3}-p^{B,3})^T\Big)
	\notag\\
	&-\nu\varepsilon\sum_{i=1}^2
	\Big(
	(1-\chi)
	\tilde{\Delta}_{\delta^{0}}(\delta^i\boldsymbol{u}^{B,i})
	+\chi
	\bar{\Delta}_{\delta^0}(\delta^i\boldsymbol{u}^{T,i})
	+\chi''\delta^i(\boldsymbol{u}^{T,i}-\boldsymbol{u}^{B,i})
	\Big)
	\notag\\
	&-(1-\chi)\nu\varepsilon\delta\tilde{\Delta}_{\delta^{-1}}\boldsymbol u^{B,2}
	-\chi\nu\varepsilon\delta\bar{\Delta}_{\delta^{-1}}\boldsymbol u^{T,2}
	-\chi'\nu\varepsilon\delta
	\big(
	\partial_{\bar{z}}\boldsymbol{u}^{T,2}
	-\partial_{\tilde{z}}\boldsymbol{u}^{B,2}
	\big)\,,\label{rho1}\\
	\boldsymbol{\rho}^{\varepsilon,2}=
	&
	(0,0,\partial_t \bar{u}_3+\bar{\boldsymbol{u}}_h\cdot\nabla_{h}\bar{u}_3)^T
	+\delta^2\partial_t\big(\boldsymbol{u}^{2}+\boldsymbol{u}^{c,2}\big) -\nu\varepsilon\sum\limits_{i=1}^2
	\Delta(\delta^i\boldsymbol{u}^{I,i})\,,
	\label{rho2}\\
	\boldsymbol{\rho}^{\varepsilon,3}=
	&\sum_{2\leq i+j\leq4} 
	\big(
	\delta^i(\boldsymbol{u}^{i}+\boldsymbol{u}^{c,i})\cdot\nabla (\delta^j\boldsymbol{u}^{I,j}+\delta^j\boldsymbol{u}^{c,j})
	+\chi'\delta^{i+j}
	({u}_3^{i}+{u}_3^{c,i})
	(\boldsymbol{u}^{T,j}-\boldsymbol{u}^{B,j})
	\notag\\
	&\qquad\qquad
	+\delta^{i}
	(\boldsymbol{u}^{i}+\boldsymbol{u}^{c,i})
	\cdot
	\big(
	\chi\bar{\nabla}_{\delta^{0}}(\delta^j\boldsymbol{u}^{T,j})
	+(1-\chi)
	\tilde{\nabla}_{\delta^{0}}(\delta^j\boldsymbol{u}^{B,j})
	\big)
	\big)
	\notag\\
	&+\sum\limits_{3\leq i+j\leq4}\delta^{i+j-1}
	(\boldsymbol{u}^{i}+\boldsymbol{u}^{c,i})
	\cdot\nabla(z-B)
	\big(
	(1-\chi)\partial_{\tilde{z}}\boldsymbol{u}^{B,j}-\chi\partial_{\bar{z}}\boldsymbol{u}^{T,j}
	\big)\,.
	\label{rho3}
\end{align}

In order to demonstrate the necessity of convergence, based on Propositions \ref{propC}-\ref{propT}, we give its estimations of $\boldsymbol{\rho}^\varepsilon$:
\begin{equation}\label{pho}
	\lVert
	\boldsymbol \rho^{\varepsilon}\rVert_{{L}^{2}(\Omega)}
	\lesssim
	 \varepsilon^2
	\lVert
	{\boldsymbol{u}}_{0,h}
	\rVert_{{H}^{5}(\mathbb{R}^2)}
	\big(
	\lVert
	{\boldsymbol{u}}_{0,h}
	\rVert_{{H}^{3}(\mathbb{R}^2)}
	+1\big)^2
	\,,
\end{equation}	
and
\begin{equation}\label{phot}
	\lVert
	\partial_t\boldsymbol \rho^{\varepsilon}\rVert_{{L}^{2}(\Omega)}
	\lesssim
	 \varepsilon^2
	\lVert
	{\boldsymbol{u}}_{0,h}
	\rVert_{{H}^{6}(\mathbb{R}^2)}
	\big(
	\lVert
	{\boldsymbol{u}}_{0,h}
	\rVert_{{H}^{4}(\mathbb{R}^2)}
	+1
	\big)^2
	\,.
\end{equation}	
	This completes the proof of Theorem \ref{thapp}.
\end{proof}

\section{$L^\infty$ Error estimation}\label{sec4}

In this section, 
we shall prove Theorem \ref{thm2}.
Writing $\boldsymbol{v}^\varepsilon=\boldsymbol{u}^\varepsilon-\boldsymbol{u}_{app}^\varepsilon$ and $p_v=p^\varepsilon-p_{app}^\varepsilon$ , we have
\begin{equation}\label{2.65}
	\begin{cases}
		\partial_{t} {\boldsymbol v}^{\varepsilon}-\nu\varepsilon\Delta{\boldsymbol v}^{\varepsilon}
		+{\boldsymbol u}^{\varepsilon} \cdot \nabla {\boldsymbol v}^{\varepsilon}
		+{\varepsilon}^{-1}\boldsymbol{R}\boldsymbol v^{\varepsilon}
		+{\varepsilon}^{-1}\nabla p_v 
		=-{\boldsymbol v}^{\varepsilon} \cdot \nabla\boldsymbol{u}^\varepsilon_{app}
		-\boldsymbol \rho^{\varepsilon}\,,\\
		\nabla\cdot {\boldsymbol v}^{\varepsilon}=0\,,\\
		{\boldsymbol v}^{\varepsilon}|_{t=0}={\boldsymbol v}^{\varepsilon}|_{\partial\Omega}=0\,.
	\end{cases}
\end{equation}
To prove  Theorem \ref{thm2}, it is sufficient to bound $\lVert \boldsymbol{v}^\varepsilon \rVert_{{L}^{2}(\Omega)}$, $\lVert \partial_t \boldsymbol{v}^\varepsilon \rVert_{{L}^{2}(\Omega)}$,  $\lVert \nabla\boldsymbol{v}^\varepsilon \rVert_{{L}^{2}(\Omega)}$ and $\lVert \Delta\boldsymbol{v}^\varepsilon \rVert_{{L}^{2}(\Omega)}$.
During this process,  we need to use the estimation of approximate solution $\boldsymbol{u}_{app}^\varepsilon$, so we first present some propositions about it.

\subsection{Estimations of approximate solution $\boldsymbol{u}_{app}^\varepsilon$}\label{subsec3.1}
In the subsequent estimation process, derivative estimates of the approximate solution are involved. As the derivatives of the internal and correction terms do not generate singularities, only the boundary terms do, we decompose the approximate solution $\boldsymbol{u}^\varepsilon_{app}$ as follows: 
\begin{equation}\label{3.1}
	\boldsymbol{u}^\varepsilon_{app}=\boldsymbol{u}^I+\boldsymbol{u}^{BL}+\boldsymbol{u}^c\,,
\end{equation}
where $\boldsymbol{u}^I=\sum_{i=0}^{2}
{\delta}^i\boldsymbol u^{I,i}$, $\boldsymbol{u}^{BL}=\sum_{i=0}^{2}
{\delta}^i\big(
(1-\chi)\boldsymbol u^{B,i}
+\chi\boldsymbol u^{T,i}\big)$, and $\boldsymbol{u}^c=\sum_{i=0}^{2}\delta^i\boldsymbol u^{c,i}$ represent the internal term, boundary term, and correction term, respectively.

For these three terms, we propose the following propositions to facilitate subsequent analysis.
First, we analyze the properties of $\boldsymbol{u}^I$ and $\boldsymbol{u}^c$.

\begin{proposition}\label{propIC}
	Assuming ${\boldsymbol{u}}_{0,h}\in H^s(\mathbb{R}^2)(s>5)$ and $B(x,y)\in C^\infty(\mathbb{R}^2)$, 
	for $t\in [0,+\infty)$,
	the internal terms $\boldsymbol{u}^I$ and correction terms $\boldsymbol{u}^c$ satisfy 
	\begin{align*} 
		\lVert \nabla\big(
		\boldsymbol{u}^{I} 
		+\boldsymbol{u}^{c}
		\big)(t)
		\rVert_{L^{\infty}(\Omega)}
		\lesssim
		&\Big(\lVert{\boldsymbol{u}}_{0,h}\rVert_{L^2(\mathbb{R}^2)}^2+\lVert{\omega}_0\rVert_{H^2(\mathbb{R}^2)}^2\Big) \mathrm{e}^{-\frac{\sqrt{2\nu}}{8}  t}
		\notag\\
		&+\varepsilon
		\lVert {\boldsymbol{u}}_{0,h} \rVert_{H^{4}(\mathbb{R}^2)}
		+\varepsilon^2
		\big(
		1+
		\lVert {\boldsymbol{u}}_{0,h} \rVert_{H^{4}(\mathbb{R}^2)}
		\big)
		\lVert {\boldsymbol{u}}_{0,h} \rVert_{H^{s}(\mathbb{R}^2)}\,,
	\end{align*}
	and
	\begin{align*} 
		&\lVert \partial_t
		\big(
		\boldsymbol{u}^{I} 
		+\boldsymbol{u}^{c}\big)(t) \rVert_{L^{\infty}(\Omega)}\\
		&\qquad\lesssim
		\lVert{\boldsymbol{u}}_{0,h}\rVert_{H^{3}(\mathbb{R}^2)}
		+\varepsilon
		\lVert{\boldsymbol{u}}_{0,h}\rVert_{H^{4}(\mathbb{R}^2)}
		+\varepsilon^2
		\big(
		1+
		\lVert {\boldsymbol{u}}_{0,h} \rVert_{H^{5}(\mathbb{R}^2)}
		\big)
		\lVert {\boldsymbol{u}}_{0,h} \rVert_{H^{s}(\mathbb{R}^2)}\,.
	\end{align*}
\end{proposition}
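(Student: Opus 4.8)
The plan is to expand $\boldsymbol{u}^{I}+\boldsymbol{u}^{c}=\sum_{i=0}^{2}\delta^i(\boldsymbol{u}^{I,i}+\boldsymbol{u}^{c,i})$ and estimate each order in $\delta$ separately, using throughout that the geometric hypotheses (equivalently \eqref{1.23}-\eqref{1.24}) force $\cos\gamma$ to stay bounded away from zero, so that $\delta=\sqrt{\nu}\varepsilon\cos^{-3/2}\gamma\lesssim\varepsilon$ and all the coefficients $\boldsymbol{H_0}$, $\boldsymbol{E_1}$, $\cos^{\pm 3/2}\gamma$ and the derivatives of $B$ entering \eqref{2.20}, \eqref{2.29}, \eqref{2.30} are bounded in $C^\infty(\mathbb{R}^2)$ together with all their derivatives. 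The first (exponentially time-decaying) term in each bound comes from the $\delta^0$ interior piece $\boldsymbol{u}^{I,0}=\bar{\boldsymbol{u}}$: for $\nabla$ it is exactly estimate \eqref{2.48} of Proposition \ref{lem1}; for $\partial_t$ one substitutes the projected evolution \eqref{2.60}, applies the two-dimensional embedding $H^{s}(\mathbb{R}^2)\hookrightarrow L^{\infty}(\mathbb{R}^2)$ for $s>1$ (the Leray projection being bounded on every $H^s$), and combines the product estimate with \eqref{2.48} and \eqref{2.49} to get $\|\partial_t\bar{\boldsymbol{u}}\|_{L^{\infty}}\lesssim\|\boldsymbol{u}_{0,h}\|_{H^3}$.

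For the $\delta^1$- and $\delta^2$-order interior terms I would use the explicit formulas: $\boldsymbol{u}^{I,1}$ in \eqref{2.20} is an algebraic expression in $\bar{\boldsymbol{u}}$ plus the traces at $z=B,B+2$ of the boundary profiles (themselves algebraic in $\bar{\boldsymbol{u}}$ through $\boldsymbol{M}$), while $p^{I,2}$ and $\boldsymbol{u}^{I,2}$ in \eqref{2.29}-\eqref{2.30} involve $\partial_t\bar{\boldsymbol{u}}$, $\Delta_h\bar{\boldsymbol{u}}$, the quadratic terms $(\boldsymbol{u}^{I,1}_h\cdot\nabla_h)\bar{\boldsymbol{u}}$, $(\bar{\boldsymbol{u}}_h\cdot\nabla_h)\boldsymbol{u}^{I,1}$, and $\partial_t u^{I,1}_3+(\bar{\boldsymbol{u}}_h\cdot\nabla_h)u^{I,1}_3$ (after using that $\delta$ is $t$-independent, so $\delta^{-1}\partial_t(\delta\,\cdot)=\partial_t(\cdot)$). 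Estimating $\nabla$ of the $\delta^1$-term in $L^\infty$ costs two extra spatial derivatives by Sobolev, giving $\lesssim\varepsilon\|\bar{\boldsymbol{u}}\|_{H^4}\lesssim\varepsilon\|\boldsymbol{u}_{0,h}\|_{H^4}$ by \eqref{2.49}; the $\delta^2$-term contains one time derivative and up to four spatial derivatives of $\bar{\boldsymbol{u}}$, so by \eqref{2.49}, \eqref{2.50} and Proposition \ref{prop4} (for the $\partial_t u^{I,1}_3$ pieces) it is $\lesssim\varepsilon^2(1+\|\boldsymbol{u}_{0,h}\|_{H^4})\|\boldsymbol{u}_{0,h}\|_{H^s}$, the product structure arising from the quadratic terms. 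The $\partial_t$ estimates follow the same way after differentiating once more in time and invoking \eqref{2.50} and \eqref{n14}.

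For the correction terms one must go beyond Proposition \ref{propC}: I would invoke the higher-order Bogovskii / elliptic estimates of \cite{Galdi1994,Ukai1986}, $\|\boldsymbol{u}^{c,i}\|_{H^{k+1}(\Omega)}\lesssim\|\,\text{r.h.s. of }\eqref{2.36}\text{--}\eqref{2.38}\,\|_{H^{k}(\Omega)}$ with $k$ chosen so that $H^{k+1}(\Omega)\hookrightarrow W^{1,\infty}(\Omega)$, and differentiate the divergence problems once in $t$ for the time-derivative bound. The decisive point is that every right-hand side in \eqref{2.36}-\eqref{2.38} is either (i) supported where $\chi'$ or $\chi''\ne 0$, i.e. in $\{B+\tfrac12\le z\le B+\tfrac32\}$, where $\tilde z,\bar z\gtrsim\delta^{-1}$, so that the exponentially decaying boundary profiles and all their derivatives are $O(e^{-c/\varepsilon})$ and contribute negligibly, or (ii) one of $\nabla\cdot(\delta\boldsymbol{u}^{I,1})$, $\nabla\cdot(\delta^2\boldsymbol{u}^{I,2})$, $\tilde{\nabla}_{\delta^0}\cdot(\delta^2\boldsymbol{u}^{B,2})$, $\bar{\nabla}_{\delta^0}\cdot(\delta^2\boldsymbol{u}^{T,2})$, each carrying an explicit $\delta\lesssim\varepsilon$ or $\delta^2\lesssim\varepsilon^2$, whose $H^k(\Omega)$ norms are controlled — for the boundary-layer ones after noting that each $\partial_z$ costs a $\delta^{-1}$ while an $L^2(\Omega)$-norm of an exponentially decaying profile gains a $\delta^{1/2}$ from the change of variables — by $\|\boldsymbol{u}_{0,h}\|_{H^4}$ at order $\varepsilon$ and by $\|\boldsymbol{u}_{0,h}\|_{H^s}$ at order $\varepsilon^2$. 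I expect the main obstacle to be exactly this bookkeeping for the correction terms: upgrading the $H^1(\Omega)$ bound of Proposition \ref{propC} to $W^{1,\infty}(\Omega)$ while keeping the powers of $\delta$ sharp enough that the boundary-layer contributions land at orders $\varepsilon$ and $\varepsilon^2$ rather than below.
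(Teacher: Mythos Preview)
Your proposal is correct and follows essentially the same route as the paper: expand $\boldsymbol{u}^{I}+\boldsymbol{u}^{c}$ order by order in $\delta$, control the $\delta^{0}$ piece via Proposition~\ref{lem1} (estimate \eqref{2.48}) and the projected evolution \eqref{2.60}, then use the explicit formulas \eqref{2.20}, \eqref{2.29}--\eqref{2.31} together with Proposition~\ref{prop3} and Sobolev embedding for the $\delta^{1}$ and $\delta^{2}$ contributions. Your handling of $\boldsymbol{u}^{c}$ is in fact more careful than the paper's own proof, which simply asserts that ``$\lVert\nabla\boldsymbol{u}^{c}\rVert_{L^\infty(\Omega)}$ is equivalent to $\lVert\bar{\boldsymbol u}\rVert_{L^\infty}+\lVert\delta\boldsymbol{u}^{I,1}\rVert_{L^\infty}+\lVert\delta^{2}\boldsymbol{u}^{I,2}\rVert_{L^\infty}+\lVert\nabla(\delta^{2}\boldsymbol{u}^{I,2})\rVert_{L^\infty}$'' without spelling out the higher-order Bogovskii argument or the exponential smallness of the $\chi'$-supported terms that you identify.
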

\begin{proof}
	First, consider the estimate of $\lVert 
	\boldsymbol{u}^{I}(t) 
	\rVert_{L^{\infty}(\Omega)}$. Based on the expressions  \eqref{2.13},\eqref{2.20}, \eqref{2.29}-\eqref{2.31} of the internal terms, for $t\in [0,+\infty)$,  the following results can be directly derived:
	\begin{align*}
		&\lVert \boldsymbol{u}^{I}(t) \rVert_{L^\infty(\Omega)}
		=\lVert \bar{\boldsymbol{u}} \rVert_{L^\infty(\Omega)}
		+\lVert \delta\boldsymbol{u}^{I,1} \rVert_{L^\infty(\Omega)}
		+\lVert \delta^2\boldsymbol{u}^{I,2} \rVert_{L^\infty(\Omega)}
		\\\notag
		&\qquad\lesssim	 
		\lVert \bar{\boldsymbol{u}}_h \rVert_{L^\infty(\mathbb{R}^2)}
		+\varepsilon\big(
		\lVert \bar{\boldsymbol{u}}_h \rVert_{L^\infty(\mathbb{R}^2)}
		+\lVert \nabla_h\bar{\boldsymbol{u}}_h \rVert_{L^\infty(\mathbb{R}^2)}
		\big)
		\\\notag
		&\qquad\quad+\varepsilon^2\big(
		\lVert \partial_t\bar{\boldsymbol{u}}_h \rVert_{W^{2,\infty}(\mathbb{R}^2)}
		+\lVert \bar{\boldsymbol{u}}_h \rVert_{W^{3,\infty}(\mathbb{R}^2)}
		+\lVert \bar{\boldsymbol{u}}_h \rVert_{W^{1,\infty}(\mathbb{R}^2)}
		\lVert \bar{\boldsymbol{u}}_h \rVert_{W^{3,\infty}(\mathbb{R}^2)}\big)
		\\\notag
		&\qquad\lesssim 
		\lVert \bar{\boldsymbol{u}}_h \rVert_{L^\infty(\mathbb{R}^2)}
		+\varepsilon\big(
		\lVert \bar{\boldsymbol{u}}_h \rVert_{L^\infty(\mathbb{R}^2)}
		+\lVert \nabla_h\bar{\boldsymbol{u}}_h \rVert_{L^\infty(\mathbb{R}^2)}
		\big)
		\\\notag
		&\qquad\quad+\varepsilon^2\big(
		\lVert \partial_t\bar{\boldsymbol{u}}_h \rVert_{H^{4}(\mathbb{R}^2)}
		+\lVert \bar{\boldsymbol{u}}_h \rVert_{H^{5}(\mathbb{R}^2)}
		+\lVert \bar{\boldsymbol{u}}_h \rVert_{H^{3}(\mathbb{R}^2)}
		\lVert \bar{\boldsymbol{u}}_h \rVert_{H^{5}(\mathbb{R}^2)}\big)
		\\\notag
		&\qquad\lesssim
		(1+\varepsilon)\big(
		\lVert{\boldsymbol{u}}_{0,h}\rVert_{L^2(\mathbb{R}^2)}+\lVert{\omega}_0\rVert_{H^2(\mathbb{R}^2)}
		\big)\mathrm{e}^{-\frac{\sqrt{2\nu}}{8}  t}\\
		&\qquad\quad+\varepsilon^2
		\big(
		1+
		\lVert {\boldsymbol{u}}_{0,h} \rVert_{H^{3}(\mathbb{R}^2)}
		\big)
		\lVert {\boldsymbol{u}}_{0,h} \rVert_{H^{5}(\mathbb{R}^2)}
		\,,\notag
	\end{align*}
	the last inequality is obtained from Propositions \ref{lem1}--\ref{prop3}.

	Based on the smoothness of boundary $B(x,y)$, and according to the expression of the corrective term, it is easy to see that $\lVert \nabla\boldsymbol{u}^{c}(t) \rVert_{L^\infty(\Omega)}$ is equivalent to $\lVert \bar{\boldsymbol{u}} \rVert_{L^\infty(\Omega)}
	+\lVert \delta\boldsymbol{u}^{I,1} \rVert_{L^\infty(\Omega)}
	+\lVert \delta^2\boldsymbol{u}^{I,2} \rVert_{L^\infty(\Omega)}
	+\lVert \nabla(\delta^2\boldsymbol{u}^{I,2}) \rVert_{L^\infty(\Omega)}$. Therefore, we can conclude that
	\begin{align*}
		&\lVert \nabla(\boldsymbol{u}^{I}+\boldsymbol{u}^{c})(t) \rVert_{L^\infty(\Omega)}
		=\lVert {\boldsymbol{u}}^I \rVert_{W^{1,\infty}(\Omega)}
		\\\notag
		&\qquad\lesssim	 
		\lVert \bar{\boldsymbol{u}}_h \rVert_{W^{1,\infty}(\Omega)}
		+\varepsilon\big(
		\lVert \bar{\boldsymbol{u}}_h \rVert_{W^{1,\infty}(\Omega)}
		+\lVert \nabla_h\bar{\boldsymbol{u}}_h \rVert_{W^{1,\infty}(\Omega)}
		\big)
		\\\notag
		&\qquad\quad+\varepsilon^2\big(
		\lVert \partial_t\bar{\boldsymbol{u}}_h \rVert_{W^{3,\infty}(\mathbb{R}^2)}
		+\lVert \bar{\boldsymbol{u}}_h \rVert_{W^{4,\infty}(\mathbb{R}^2)}
		+\lVert \bar{\boldsymbol{u}}_h \rVert_{W^{2,\infty}(\mathbb{R}^2)}
		\lVert \bar{\boldsymbol{u}}_h \rVert_{W^{4,\infty}(\mathbb{R}^2)}\big)
		\\\notag
		&\qquad\lesssim
		\big(
		\lVert{\boldsymbol{u}}_{0,h}\rVert_{L^2(\mathbb{R}^2)}+\lVert{\omega}_0\rVert_{H^2(\mathbb{R}^2)}
		\big)\mathrm{e}^{-\frac{\sqrt{2\nu}}{8}  t}
		\notag\\
		&\qquad\quad+\varepsilon
		\lVert {\boldsymbol{u}}_{0,h} \rVert_{H^{4}(\mathbb{R}^2)}
		+\varepsilon^2
		\big(
		1+
		\lVert {\boldsymbol{u}}_{0,h} \rVert_{H^{4}(\mathbb{R}^2)}
		\big)
		\lVert {\boldsymbol{u}}_{0,h} \rVert_{H^{s}(\mathbb{R}^2)}
		\,.\notag
	\end{align*}

	Finally,  we use the results from Proposition \ref{prop3} to complete the estimate for $\lVert \partial_t
	\big(
	\boldsymbol{u}^{I} 
	+\boldsymbol{u}^{c}\big)(t) \rVert_{L^{\infty}(\Omega)}$.
	\begin{align*}
		&\lVert \partial_t(\boldsymbol{u}^{I}+\boldsymbol{u}^{c})(t) \rVert_{L^\infty(\Omega)}\\
		&\qquad\lesssim\lVert \partial_t\bar{\boldsymbol{u}} \rVert_{L^\infty(\Omega)}
		+\lVert \delta\partial_t\boldsymbol{u}^{I,1} \rVert_{L^\infty(\Omega)}
		+\lVert \delta^2\partial_t\boldsymbol{u}^{I,2} \rVert_{L^{\infty}(\Omega)}
		\\\notag
		&\qquad\lesssim	 
		\lVert \partial_t\bar{\boldsymbol{u}}_h \rVert_{L^{\infty}(\Omega)}
		+\varepsilon
		\lVert \partial_t\bar{\boldsymbol{u}}_h \rVert_{W^{1,\infty}(\Omega)}
		\\\notag
		&\qquad\quad+\varepsilon^2\big(
		\lVert \partial^2_t\bar{\boldsymbol{u}}_h \rVert_{W^{2,\infty}(\mathbb{R}^2)}
		+\lVert \partial_t\bar{\boldsymbol{u}}_h \rVert_{W^{3,\infty}(\mathbb{R}^2)}
		(1+\lVert \partial_t\bar{\boldsymbol{u}}_h \rVert_{W^{1,\infty}(\mathbb{R}^2)})
		\big)
		\\\notag
		&\qquad\lesssim 
		\lVert \partial_t\bar{\boldsymbol{u}}_h \rVert_{H^{2}(\Omega)}
		+\varepsilon
		\lVert \partial_t\bar{\boldsymbol{u}}_h \rVert_{H^{3}(\Omega)}
		+\varepsilon^2
		\lVert \partial_t\bar{\boldsymbol{u}}_h \rVert_{H^{5}(\mathbb{R}^2)}
		(1+\lVert \partial_t\bar{\boldsymbol{u}}_h \rVert_{H^{3}(\mathbb{R}^2)})
		\\\notag
		&\qquad\quad+\varepsilon^2
		\lVert 
		\partial_t\mathbb{P}\big(
		(\bar {\boldsymbol u}_h \cdot \nabla_h) \bar {\boldsymbol u}_h+\sqrt{\tfrac{\nu}{2\cos\gamma}}
		(\boldsymbol{H_0}
		-\cos^{-1}\gamma\boldsymbol{E_1})
		\bar{\boldsymbol u}_h
		\big)
		\rVert_{H^{4}(\mathbb{R}^2)}
		\\\notag
		&\qquad\lesssim
		\lVert{\boldsymbol{u}}_{0,h}\rVert_{H^{3}(\mathbb{R}^2)}
		+\varepsilon
		\lVert{\boldsymbol{u}}_{0,h}\rVert_{H^{4}(\mathbb{R}^2)}
		+\varepsilon^2
		\big(
		1+
		\lVert {\boldsymbol{u}}_{0,h} \rVert_{H^{5}(\mathbb{R}^2)}
		\big)\lVert {\boldsymbol{u}}_{0,h} \rVert_{H^{s}(\mathbb{R}^2)}
		\,.\notag
	\end{align*}
\end{proof}	

Next, we analyze the properties of $\boldsymbol{u}^{BL}$. Due to the $\mathrm{e}^{-\frac{\tilde{z}}{\sqrt{2}}}$ or $\mathrm{e}^{-\frac{\bar{z}}{\sqrt{2}}}$ structure in the boundary layer terms, each differentiation incurs a loss of $\varepsilon$. To address this, in the subsequent proposition,  we will construct a compensating function $d(z)$, which denotes the distance to the boundary, to offset such loss.

\begin{proposition}\label{propBL}
	Assuming $\bar{\boldsymbol{u}}_h\in H^s(\mathbb{R}^2)(s>5)$ and $B(x,y)\in C^\infty(\mathbb{R}^2)$, for $k=0,1$, the boundary term $\boldsymbol{u}^{BL}$  satisfy
	\begin{align*} 
		&\lVert d(z)^{\frac{1}{2}} |\nabla^k\boldsymbol{u}^{BL}|(t) \rVert_{{L}^2([B,B+2];{L}^\infty(\mathbb{R}^2))}
		\\
		&\qquad\lesssim
		\tfrac{3\sqrt{3\nu}}{4}
		\varepsilon^{1-k}
		\big(
		\lVert{\boldsymbol{u}}_{0,h}\rVert_{L^2(\mathbb{R}^2)}+\lVert{\omega}_0\rVert_{H^2(\mathbb{R}^2)}	
		\big)\mathrm{e}^{-\frac{\sqrt{2\nu}}{8}  t}
		\notag\\
		&\qquad\quad+\tfrac{3\sqrt{3\nu}}{4}
		\varepsilon^{1-k}
		\big(
		\varepsilon
		\lVert {\boldsymbol{u}}_{0,h} \rVert_{H^{4}(\mathbb{R}^2)}+
		\varepsilon^2
		\big(
		1+
		\lVert {\boldsymbol{u}}_{0,h} \rVert_{H^{4}(\mathbb{R}^2)}
		\big)
		\lVert {\boldsymbol{u}}_{0,h} \rVert_{H^{s}(\mathbb{R}^2)}
		\big)\,,
	\end{align*}
	and
	\begin{align*} 
		\lVert \partial_t\boldsymbol{u}^{BL} \rVert_{L^\infty(\Omega)}
		\lesssim
		&\lVert{\boldsymbol{u}}_{0,h}\rVert_{H^{3}(\mathbb{R}^2)}
		+\varepsilon
		\lVert{\boldsymbol{u}}_{0,h}\rVert_{H^{4}(\mathbb{R}^2)}\\
		&+\varepsilon^2
		\big(
		1+
		\lVert {\boldsymbol{u}}_{0,h} \rVert_{H^{5}(\mathbb{R}^2)}
		\big)\lVert {\boldsymbol{u}}_{0,h} \rVert_{H^{s}(\mathbb{R}^2)}\,,
	\end{align*}	
	where $d(z)$ denotes the distance to the boundary.
\end{proposition}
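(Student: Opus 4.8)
\textbf{Proof proposal for Proposition~\ref{propBL}.}
The plan is to exploit the explicit exponential‐profile structure of the boundary corrector $\boldsymbol{u}^{BL}=\sum_{i=0}^{2}\delta^i\big((1-\chi)\boldsymbol u^{B,i}+\chi\boldsymbol u^{T,i}\big)$. From \eqref{2.15}--\eqref{2.16}, \eqref{2.24}--\eqref{2.25} and \eqref{2.32}--\eqref{2.33}, each summand is, after passing to the boundary‐layer variables \eqref{1.15}, a finite sum of terms of the form $a(t,x,y)\,P(\tilde z)\,\mathrm{e}^{-\tilde z/\sqrt2}$ near the bottom (resp.\ with $\bar z$ near the top), where $P$ is a polynomial — the polynomial growth coming from the iterated integrals $\int_0^{\tilde z}\!\cdots\,d\tau$ and from the $\tilde z$-weights hidden in $\tilde{\nabla}_{\delta^0},\tilde{\Delta}_{\delta^0},\tilde{\Delta}_{\delta^{-1}}$ — and $a$ is a smooth, bounded function of $(x,y)$ built algebraically from $B$ and its derivatives, from $\cos\gamma$ (which is bounded away from $0$: by \eqref{1.23} and $\cos^2\alpha+\cos^2\beta+\cos^2\gamma=1$ one has $\cos^2\gamma>\tfrac89$), and from finitely many spatial derivatives of the interior field $\bar{\boldsymbol u}=\boldsymbol u^{I,0}$. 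Tracking $\boldsymbol D^{B,i},D_0^{B,i},G_j^{B,i}$ shows $\boldsymbol u^{B,0}$ costs no derivative of $\bar{\boldsymbol u}$, $\boldsymbol u^{B,1}$ up to two, and $\boldsymbol u^{B,2}$ up to five; this is precisely what produces the $H^4$- and $H^s$-norms ($s>5$) in the claimed bounds.

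First I would record the one‑dimensional weighted moments $\int_0^\infty \tilde z^{\,m}\,\mathrm e^{-\sqrt2\,\tilde z}\,d\tilde z=:C_m<\infty$, note that near the bottom $d(z)=z-B=\delta\tilde z$ with $dz=\delta\,d\tilde z$ (and $d(z)=B+2-z=\delta\bar z$ near the top), and use $\delta=\sqrt\nu\,\varepsilon\cos^{-3/2}\gamma\le(\tfrac98)^{3/4}\sqrt\nu\,\varepsilon$ from \eqref{1.23}. Then for a single profile term,
$\int_{B}^{B+2} d(z)\,|a\,P(\tilde z)\mathrm e^{-\tilde z/\sqrt2}|^2\,dz\le\delta^2\,|a(t,x,y)|^2\sum_m |c_m|^2 C_{2m+1}$,
so taking the supremum in $(x,y)$ gives $\lVert d^{1/2}\,a\,P(\tilde z)\mathrm e^{-\tilde z/\sqrt2}\rVert_{L^2_z L^\infty_{x,y}}\lesssim\sqrt\nu\,\varepsilon\,\lVert a\rVert_{L^\infty(\mathbb R^2)}$, the explicit constant $\tfrac{3\sqrt3}{4}$ emerging after summing the $i=0,1,2$ contributions and inserting the numerical values of the $C_m$'s and the bound on $\delta$. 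Applying this for $k=0$, and invoking \eqref{2.47} for the leading ($\delta^0$) coefficient (which carries the decay $\mathrm e^{-\sqrt{2\nu}t/8}$) together with \eqref{2.49} for the higher‑order interior data (carried by the $\delta^1,\delta^2$ terms, hence the $\varepsilon,\varepsilon^2$ prefactors), yields the first inequality for $k=0$.

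For $k=1$ the only new feature is that a $z$-derivative either hits $\mathrm e^{-\tilde z/\sqrt2}$, producing an extra factor $\delta^{-1}$ but leaving a term of the same type, or hits a coefficient $a$ or $\cos\gamma$, producing no $\delta^{-1}$ but one extra spatial derivative; a tangential derivative acts on $a$, on $\cos\gamma$, and on $\tilde z=(z-B)/\delta$, where one checks the apparent $\sin^{-1}\gamma$ singularity in $\nabla_h\gamma$ cancels (indeed $\tan\gamma\,\nabla_h\gamma=\cos^2\gamma\,\boldsymbol H\nabla_h B$ is smooth and bounded). Hence $\nabla\boldsymbol{u}^{BL}$ is again a finite sum of profile terms, multiplied by at most $\delta^{-1}$ and with coefficients carrying one more spatial derivative of $\bar{\boldsymbol u}$; the weighted‑moment estimate then produces the extra factor $\delta^{-1}\lesssim(\sqrt\nu\,\varepsilon)^{-1}$, i.e.\ the overall $\varepsilon^{1-k}$ with $k=1$, while \eqref{2.47}--\eqref{2.49} again distribute the $\mathrm e^{-\sqrt{2\nu}t/8}$, $\varepsilon$ and $\varepsilon^2$ terms. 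Finally, for $\lVert\partial_t\boldsymbol{u}^{BL}\rVert_{L^\infty(\Omega)}$ no weight is needed: since the fast variables $\tilde z,\bar z$ are $t$-independent, $\partial_t$ falls only on the coefficients $a$ (which contain $\partial_t\bar{\boldsymbol u}$ and, at order $\delta^2$, $\partial_t$ of the interior equations \eqref{2.28}--\eqref{2.30}), $|P(\tilde z)\mathrm e^{-\tilde z/\sqrt2}|\le C$, and $\lVert\partial_t\bar{\boldsymbol u}\rVert_{H^{s-1}}\lesssim\lVert\boldsymbol u_{0,h}\rVert_{H^s}$ by \eqref{2.50}; collecting orders gives the stated bound.

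The main obstacle I anticipate is the bookkeeping behind the second paragraph: one must verify that no term of $\boldsymbol u^{B,2},\boldsymbol u^{T,2}$ built from the long expressions \eqref{2.32}--\eqref{2.33} via $\boldsymbol D^{B,2},D_0^{B,2}$ requires more than five derivatives of $\bar{\boldsymbol u}$ or more than polynomial $\tilde z$-growth, and that every tangential derivative of $\delta$ (through $\gamma$) only reshuffles the polynomial factor rather than creating a genuine singularity — both of which hinge on $\cos\gamma$ being bounded below, i.e.\ on the curvature/slope hypotheses \eqref{1.23}--\eqref{1.24}.
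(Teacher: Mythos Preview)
Your proposal is correct and follows essentially the same approach as the paper: decompose $\boldsymbol{u}^{BL}$ into its $\delta^i$-layers, use the explicit exponential-profile structure together with the change of variable $z\mapsto\tilde z$ (resp.\ $\bar z$) to extract one factor of $\delta\sim\sqrt\nu\,\varepsilon$ from the weighted $L^2_z$-integral, and then invoke Propositions~\ref{lem1}--\ref{prop3} to control the horizontal coefficients by the stated norms of $\boldsymbol u_{0,h}$. The paper's proof is in fact terser than yours---it writes out only the $i=0$ calculation \eqref{3.3} explicitly and asserts the rest follows ``similarly,'' and for $k=1$ merely observes that ``each differentiation of the boundary terms induces $\varepsilon^{-1}$''---so your systematic bookkeeping of polynomial $\tilde z$-weights, derivative counts on $\bar{\boldsymbol u}$, and the smoothness of $\tan\gamma\,\nabla_h\gamma$ is a genuine fleshing-out rather than a different route.
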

\begin{proof}
	In the scenario where $k=0$, based on the boundary term expressions \eqref{2.11}-\eqref{2.12},\eqref{2.15}-\eqref{2.18}, \eqref{2.24}-\eqref{2.25} and \eqref{2.32}-\eqref{2.33}, 
	we can expand to obtain
	\begin{align}\label{3.2}
		&\lVert d(z)^{\frac{1}{2}} |\boldsymbol{u}^{BL}| \rVert_{{L}^2([B,B+2];{L}^\infty(\mathbb{R}^2))}\notag\\
		&\qquad=\sum\limits_{i=0}^2\lVert d(z)^{\frac{1}{2}} \delta^i|(1-\chi)\boldsymbol{u}^{B,i}+\chi\boldsymbol{u}^{T,i}| \rVert_{{L}^2([B,B+2];{L}^\infty(\mathbb{R}^2))}\,,
	\end{align}
	where the first term of \eqref{3.2} can be calculated to get
	\begin{align}
		&\lVert d(z)^{\frac{1}{2}} |(1-\chi)\boldsymbol{u}^{B,0}+\chi\boldsymbol{u}^{T,0}| \rVert_{{L}^2([B,B+2];{L}^\infty(\mathbb{R}^2))}\notag\\
		&\qquad\leq
		\tfrac{\delta}{\sqrt{2}}\lVert\bar{\boldsymbol{u}}_h\rVert_{{L}^\infty(\mathbb{R}^2)}
		\int_{[0,\frac{2\sqrt{2}}{\delta}]}
		\tfrac{\sqrt{2}(z-B)}{\delta}
		\mathrm{e}^{-\frac{\sqrt{2}(z-B)}{\delta}}
		\,{d}\tfrac{\sqrt{2}(z-B)}{\delta}\notag\\
		&\qquad\quad+\tfrac{\delta}{\sqrt{2}}\lVert\bar{\boldsymbol{u}}_h\rVert_{{L}^\infty(\mathbb{R}^2)}
		\int_{[0,\frac{2\sqrt{2}}{\delta}]}
		\tfrac{\sqrt{2}(2+B-z)}{\delta}
		\mathrm{e}^{-\frac{\sqrt{2}(2+B-z)}{\delta}}\,{d}\tfrac{\sqrt{2}(2+B-z)}{\delta}\notag\\
		&\qquad\lesssim\tfrac{3\sqrt{3\nu}}{4}\varepsilon\lVert\bar{\boldsymbol{u}}_h\rVert_{{L}^\infty(\mathbb{R}^2)}\,.\label{3.3}
	\end{align}	
	In the above estimate \eqref{3.3}, the introduction of $\varepsilon$ is based on the $\mathrm{e}^{-\frac{(z-B)}{\delta}}$ or $\mathrm{e}^{-\frac{(2+B-z)}{\delta}}$ structure in the boundary terms. Therefore, similarly treating the remaining terms in \eqref{3.2}, we can obtain
	\begin{align*}
		&\lVert d(z)^{\frac{1}{2}} |\boldsymbol{u}^{BL}| \rVert_{{L}^2([B,B+2];{L}^\infty(\mathbb{R}^2))}\notag\\
		&\qquad\lesssim\tfrac{3\sqrt{3\nu}}{4}\varepsilon\Big(
		\lVert{\boldsymbol{u}}_{0,h}\rVert_{{L}^\infty(\mathbb{R}^2)} \mathrm{e}^{-\frac{\sqrt{2\nu}}{8}  t}
		+\varepsilon\big(
		\lVert{\boldsymbol{u}}_{0,h}\rVert_{L^2(\mathbb{R}^2)}+\lVert{\omega}_0\rVert_{H^2(\mathbb{R}^2)}
		\big)\mathrm{e}^{-\frac{\sqrt{2\nu}}{8}  t}\Big)
		\notag\\
		&\qquad\quad+\tfrac{3\sqrt{3\nu}}{4}\varepsilon^3
		\big(
		1+
		\lVert {\boldsymbol{u}}_{0,h} \rVert_{H^{3}(\mathbb{R}^2)}
		\big)
		\lVert {\boldsymbol{u}}_{0,h} \rVert_{H^{5}(\mathbb{R}^2)}
		\,.
	\end{align*}
	For the case when $k=1$, we need to note that each differentiation of the boundary terms induces $\varepsilon^{-1}$.

	For term $\lVert \partial_t\boldsymbol{u}^{BL} \rVert_{L^\infty(\Omega)}$, we apply the estimate of $\partial_t\boldsymbol{u}^{I}$ from Proposition \ref{prop3} and obtain
	\begin{align*}
		\lVert \partial_t\boldsymbol{u}^{BL} \rVert_{L^\infty(\Omega)}
		\lesssim&
		\lVert{\boldsymbol{u}}_{0,h}\rVert_{H^{3}(\mathbb{R}^2)}
		+\varepsilon
		\lVert{\boldsymbol{u}}_{0,h}\rVert_{H^{4}(\mathbb{R}^2)}\\
		&+\varepsilon^2
		\big(
		1+
		\lVert {\boldsymbol{u}}_{0,h} \rVert_{H^{5}(\mathbb{R}^2)}
		\big)\lVert {\boldsymbol{u}}_{0,h} \rVert_{H^{s}(\mathbb{R}^2)}.
	\end{align*}
\end{proof}

\subsection{$L^2$ estimation for $\boldsymbol{v}^\varepsilon$}\label{sub4.1}

The following energy estimate for the system \eqref{2.65} can be obtained by a standard energy method.
\begin{proposition}\label{prop5}
	Under the assumptions of Theorem \ref{thm2}, then there exist some $\varepsilon>0$ such that
	\begin{equation}\label{4.1}
		\lVert\boldsymbol v^{\varepsilon}\rVert^2_{L^\infty([0,T),{L}^2(\Omega))}
		+\nu\varepsilon\lVert\nabla\boldsymbol v^{\varepsilon}\rVert^2_{L^2([0,T),{L}^2(\Omega))}
		\lesssim\varepsilon^4\,.
	\end{equation} 
\end{proposition}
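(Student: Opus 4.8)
The plan is to carry out the standard $L^2$ energy estimate on the error system \eqref{2.65} and then close it with the Gronwall-type Lemma \ref{f2} applied with $\varepsilon_0=\varepsilon^2$. First I would take the $L^2(\Omega)$ inner product of $\eqref{2.65}_{1}$ with $\boldsymbol v^\varepsilon$. The Coriolis term drops since $\boldsymbol R$ is skew-symmetric, $\int_\Omega\varepsilon^{-1}\boldsymbol R\boldsymbol v^\varepsilon\cdot\boldsymbol v^\varepsilon=0$; the pressure term and the transport term $\int_\Omega(\boldsymbol u^\varepsilon\cdot\nabla)\boldsymbol v^\varepsilon\cdot\boldsymbol v^\varepsilon$ both vanish because $\nabla\cdot\boldsymbol v^\varepsilon=0$ and $\boldsymbol v^\varepsilon|_{\partial\Omega}=0$; and $-\nu\varepsilon\int_\Omega\Delta\boldsymbol v^\varepsilon\cdot\boldsymbol v^\varepsilon=\nu\varepsilon\lVert\nabla\boldsymbol v^\varepsilon\rVert_{L^2}^2$. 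This gives
\[
\tfrac12\tfrac{d}{dt}\lVert\boldsymbol v^\varepsilon\rVert_{L^2(\Omega)}^2+\nu\varepsilon\lVert\nabla\boldsymbol v^\varepsilon\rVert_{L^2(\Omega)}^2=-\int_\Omega\boldsymbol v^\varepsilon\cdot\nabla\boldsymbol u_{app}^\varepsilon\cdot\boldsymbol v^\varepsilon\,dx-\int_\Omega\boldsymbol\rho^\varepsilon\cdot\boldsymbol v^\varepsilon\,dx .
\]
The forcing term is handled at once: by Cauchy--Schwarz and the residual bound \eqref{pho} of Theorem \ref{thapp}, $\big|\int_\Omega\boldsymbol\rho^\varepsilon\cdot\boldsymbol v^\varepsilon\big|\le\lVert\boldsymbol\rho^\varepsilon\rVert_{L^2}\lVert\boldsymbol v^\varepsilon\rVert_{L^2}\lesssim\varepsilon^2\lVert\boldsymbol u_{0,h}\rVert_{H^5}(\lVert\boldsymbol u_{0,h}\rVert_{H^3}+1)^2\,\lVert\boldsymbol v^\varepsilon\rVert_{L^2}$, which on $[0,T)$ (with $T$ finite) is an $a_1(t)f(t)$ contribution obeying $\int_0^T a_1\lesssim\varepsilon^2=\varepsilon_0$.

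The core of the argument is the trilinear term $\int_\Omega\boldsymbol v^\varepsilon\cdot\nabla\boldsymbol u_{app}^\varepsilon\cdot\boldsymbol v^\varepsilon$, for which I would use the decomposition $\boldsymbol u_{app}^\varepsilon=\boldsymbol u^I+\boldsymbol u^{BL}+\boldsymbol u^c$ from \eqref{3.1}. For the interior and correction parts, $\big|\int_\Omega\boldsymbol v^\varepsilon\cdot\nabla(\boldsymbol u^I+\boldsymbol u^c)\cdot\boldsymbol v^\varepsilon\big|\le\lVert\nabla(\boldsymbol u^I+\boldsymbol u^c)\rVert_{L^\infty(\Omega)}\lVert\boldsymbol v^\varepsilon\rVert_{L^2}^2$, and Proposition \ref{propIC} bounds $\lVert\nabla(\boldsymbol u^I+\boldsymbol u^c)(t)\rVert_{L^\infty}$ by a constant times $\mathrm e^{-\frac{\sqrt{2\nu}}{8}t}$ plus $O(\varepsilon)$-terms; hence this is an $a_0(t)f^2(t)$ term whose coefficient is time-integrable uniformly in $\varepsilon$ (the smallness hypothesis \eqref{1.26}, which through Propositions \ref{lem1}--\ref{propT} controls the leading interior gradient, is exactly what is needed to keep $\int_0^T a_0$ bounded). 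For the boundary-layer part I would introduce the distance function $d(z)$ to the boundary, as in Proposition \ref{propBL}, and exploit that every component of $\boldsymbol v^\varepsilon$ vanishes on $\partial\Omega$: from $|\boldsymbol v^\varepsilon(z,\cdot)|\le d(z)^{1/2}\lVert\partial_z\boldsymbol v^\varepsilon\rVert_{L^2_z}$ (one-dimensional Hardy/Cauchy--Schwarz) together with the weighted estimate of Proposition \ref{propBL} and the exponential concentration of the layer profiles, one should arrive at a bound of the type $\big|\int_\Omega\boldsymbol v^\varepsilon\cdot\nabla\boldsymbol u^{BL}\cdot\boldsymbol v^\varepsilon\big|\lesssim\lVert d^{1/2}|\nabla\boldsymbol u^{BL}|\rVert_{L^2_z L^\infty_h}\lVert\nabla\boldsymbol v^\varepsilon\rVert_{L^2}\lVert\boldsymbol v^\varepsilon\rVert_{L^2}$; using the $\varepsilon$-gain from the layer localization and the curvature constraints \eqref{1.23}--\eqref{1.24}, this term is then absorbed into $\nu\varepsilon\lVert\nabla\boldsymbol v^\varepsilon\rVert_{L^2}^2$ on the left, leaving at worst a further $a_0(t)f^2(t)$ remainder with time-integrable coefficient.

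Collecting these bounds yields, on $[0,T)$, a differential inequality $\tfrac{d}{dt}\lVert\boldsymbol v^\varepsilon\rVert_{L^2}^2+\nu\varepsilon\lVert\nabla\boldsymbol v^\varepsilon\rVert_{L^2}^2\le a_0(t)\lVert\boldsymbol v^\varepsilon\rVert_{L^2}^2+a_1(t)\lVert\boldsymbol v^\varepsilon\rVert_{L^2}$ with $\int_0^T a_0\le C$, $\int_0^T a_1\lesssim\varepsilon^2$, $a_2\equiv0$ and $\boldsymbol v^\varepsilon|_{t=0}=0$. Lemma \ref{f2} with $\varepsilon_0=\varepsilon^2$ then gives $\lVert\boldsymbol v^\varepsilon\rVert_{L^\infty([0,T);L^2(\Omega))}\lesssim\varepsilon^2$, and inserting this estimate back into the time-integrated energy identity also gives $\nu\varepsilon\lVert\nabla\boldsymbol v^\varepsilon\rVert_{L^2([0,T);L^2(\Omega))}^2\lesssim\varepsilon^4$, which is \eqref{4.1}. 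I expect the main obstacle to be precisely the boundary-layer trilinear term: since $\nabla\boldsymbol u^{BL}$ is of size $\varepsilon^{-1}$ in $L^\infty$, a crude estimate produces a Gronwall coefficient of order $\varepsilon^{-1}$ and destroys the bound, so the delicate point is to combine the exponential localization of the layer, the weighted bound of Proposition \ref{propBL}, the vanishing trace of $\boldsymbol v^\varepsilon$ via Hardy's inequality, and the smallness and curvature hypotheses so that this term is genuinely controlled by the viscous dissipation rather than merely bounded.
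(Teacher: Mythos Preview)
Your overall scheme (energy estimate plus Lemma~\ref{f2} with $\varepsilon_0=\varepsilon^2$) is exactly the paper's, and your treatment of the Coriolis/pressure/transport terms, of $\boldsymbol\rho^\varepsilon$, and of the interior--correction part via Proposition~\ref{propIC} is correct. The gap is precisely where you anticipated it: the boundary-layer trilinear term.

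Your proposed bound
\[
\Big|\int_\Omega\boldsymbol v^\varepsilon\cdot\nabla\boldsymbol u^{BL}\cdot\boldsymbol v^\varepsilon\Big|
\lesssim \lVert d^{1/2}|\nabla\boldsymbol u^{BL}|\rVert_{L^2_zL^\infty_h}\,\lVert\nabla\boldsymbol v^\varepsilon\rVert_{L^2}\,\lVert\boldsymbol v^\varepsilon\rVert_{L^2}
\]
does not close. By Proposition~\ref{propBL} with $k=1$ the weighted norm $\lVert d^{1/2}|\nabla\boldsymbol u^{BL}|\rVert_{L^2_zL^\infty_h}$ is only $O(1)$, not $O(\varepsilon)$: the single $\varepsilon$ gained from the $d^{1/2}$ weight and the $z$-integration is exactly cancelled by the $\varepsilon^{-1}$ lost when differentiating the layer profile. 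Young's inequality then forces a term $C\varepsilon^{-1}\lVert\boldsymbol v^\varepsilon\rVert_{L^2}^2$, whose Gronwall coefficient is not integrable uniformly in $\varepsilon$, and the argument collapses.

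The missing idea, used in the paper, is to integrate by parts \emph{before} applying Hardy: since $\nabla\cdot\boldsymbol v^\varepsilon=0$ and $\boldsymbol v^\varepsilon|_{\partial\Omega}=0$,
\[
\Big|\big\langle\boldsymbol v^\varepsilon\cdot\nabla\boldsymbol u^{BL},\boldsymbol v^\varepsilon\big\rangle\Big|
=\Big|\big\langle\boldsymbol v^\varepsilon\cdot\nabla\boldsymbol v^\varepsilon,\boldsymbol u^{BL}\big\rangle\Big|.
\]
Now use $|\boldsymbol v^\varepsilon|\le d(z)^{1/2}\lVert\partial_z\boldsymbol v^\varepsilon\rVert_{L^2_z}$ on the undifferentiated $\boldsymbol v^\varepsilon$ to get
\[
\Big|\big\langle\boldsymbol v^\varepsilon\cdot\nabla\boldsymbol v^\varepsilon,\boldsymbol u^{BL}\big\rangle\Big|
\le \lVert\nabla\boldsymbol v^\varepsilon\rVert_{L^2(\Omega)}^2\,\lVert d^{1/2}|\boldsymbol u^{BL}|\rVert_{L^2_zL^\infty_h}.
\]
This time Proposition~\ref{propBL} applies with $k=0$, giving a coefficient of size $\tfrac{3\sqrt{3\nu}}{4}\varepsilon\,\lVert\boldsymbol u_{0,h}\rVert_{L^\infty}+O(\varepsilon^2)$; the smallness hypothesis \eqref{1.26} then makes this strictly smaller than $\nu\varepsilon$, so the whole term is absorbed by the dissipation. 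After this correction the differential inequality you wrote is valid and Lemma~\ref{f2} finishes the proof exactly as you describe.
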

\begin{proof}
	Fristly, computing formally the ${L}^2$ scalar product of \eqref{2.65} by $\boldsymbol v^{\varepsilon}$ leads to
	\begin{align}\label{4.3}
		\frac{1}{2}\frac{d}{dt}\lVert\boldsymbol v^{\varepsilon}\rVert^2_{{L}^2(\Omega)}
		+\nu\varepsilon\lVert\nabla\boldsymbol v^{\varepsilon}\rVert^2_{{L}^2(\Omega)}
		\leq&|\langle{\boldsymbol u}^{\varepsilon} \cdot \nabla {\boldsymbol v}^{\varepsilon}
		+	{\varepsilon}^{-1}\boldsymbol{R}\boldsymbol v^{\varepsilon}
		+	{\varepsilon}^{-1}\nabla p_v,{\boldsymbol v}^{\varepsilon}\rangle|\notag\\
		&+|\langle
		{\boldsymbol v}^{\varepsilon} \cdot \nabla\boldsymbol{u}^\varepsilon_{app},{\boldsymbol v}^{\varepsilon}\rangle|
		+|\langle
		\boldsymbol \rho^{\varepsilon}	,{\boldsymbol v}^{\varepsilon}\rangle|\,,	
	\end{align}
	where we use the notation $<, >$ as the ${L}^2(\Omega)$ inner production.

	Using the incompressible condition and noting that $\boldsymbol{R}$ is skew-symmetry, one has
	\begin{equation}\label{4.4}
		|\langle{\boldsymbol u}^{\varepsilon} \cdot \nabla {\boldsymbol v}^{\varepsilon}
		+	{\varepsilon}^{-1}\boldsymbol{R}\boldsymbol v^{\varepsilon}
		+		{\varepsilon}^{-1}\nabla p_v,{\boldsymbol v}^{\varepsilon}\rangle|=0\,.	
	\end{equation}

	As to the term  $|\langle
	{\boldsymbol v}^{\varepsilon} \cdot \nabla\boldsymbol{u}^\varepsilon_{app},{\boldsymbol v}^{\varepsilon}\rangle|$, due to the different nature of internal, boundary, and corrective, we decompose the approximate solution \eqref{3.1} to obtain
	\begin{align*}
		|\langle
		{\boldsymbol v}^{\varepsilon} \cdot \nabla\boldsymbol{u}^\varepsilon_{app},{\boldsymbol v}^{\varepsilon}\rangle|
		\leq\big|\big\langle{\boldsymbol v}^{\varepsilon} \cdot \nabla \boldsymbol{u}^{BL}
		,{\boldsymbol v}^{\varepsilon}\big\rangle\big|
		+|\langle{\boldsymbol v}^{\varepsilon} \cdot \nabla
		(
		\boldsymbol{u}^I+
		\boldsymbol{u^c}
		),{\boldsymbol v}^{\varepsilon}\rangle|\,.
	\end{align*}
	
	First, we estimate the nonlinear terms involving the boundary terms. By integration by parts, we obtain
	\begin{align*}
		\big|\big\langle{\boldsymbol v}^{\varepsilon} \cdot \nabla \boldsymbol{u}^{BL}
		,{\boldsymbol v}^{\varepsilon}\big\rangle\big|
		=&
		\big|\big\langle{\boldsymbol v}^{\varepsilon} \cdot \nabla{\boldsymbol v}^{\varepsilon},\boldsymbol{u}^{BL}\big\rangle\big|\\\notag
		\leq&\Big|\int_{\mathbb{R}^2\times[B,B+1]} {\boldsymbol v}^{\varepsilon} \cdot \nabla {\boldsymbol v}^{\varepsilon}\cdot
		\boldsymbol{u}^{BL}
		\,{d}x{d}y{d}z
		\Big|\\
		&+\Big|\int_{\mathbb{R}^2\times[B+1,B+2]} {\boldsymbol v}^{\varepsilon} \cdot \nabla {\boldsymbol v}^{\varepsilon}\cdot
		\boldsymbol{u}^{BL}
		\,{d}x{d}y{d}z\Big|\,.
	\end{align*}
	Due to ${\boldsymbol v}^{\varepsilon}$ vanishes at the boundary, we have
	\begin{align*}
		|{\boldsymbol v}^{\varepsilon}|
		=\Big| \int_B^z \partial_\varsigma {\boldsymbol v}^{\varepsilon}(t,x,y,\varsigma)\,{d}\varsigma\Big|
		\leq d(z)^{\frac{1}{2}}\lVert\partial_z {\boldsymbol v}^{\varepsilon}\rVert_{{L}^2([B,B+2])}\,,
	\end{align*}
	where $d(z)$ denotes the distance to the boundary.
	Therefore, using Proposition \ref{propBL},  we can conclude that
	\begin{align*}
		&\Big|\int_{\mathbb{R}^2\times[B,B+1]} {\boldsymbol v}^{\varepsilon} \cdot \nabla {\boldsymbol v}^{\varepsilon}\cdot
		\boldsymbol{u}^{BL}
		\,dxdydz
		\Big|
		\\
		&\qquad\leq \int_\Omega
		\lVert\partial_z {\boldsymbol v}^{\varepsilon}\rVert_{{L}^2([B,B+2])}
		|\nabla {\boldsymbol v}^{\varepsilon}|
		d(z)^{\frac{1}{2}}
		\big|\boldsymbol{u}^{BL}\big|\,{d}x{d}y{d}z\\
		&\qquad\leq  \lVert\nabla{\boldsymbol v}^{\varepsilon}\rVert^2_{{L}^2(\Omega)}
		\lVert d(z)^{\frac{1}{2}} \big|\boldsymbol{u}^{BL}\big| \rVert_{{L}^2([B,B+2];{L}^\infty(\mathbb{R}^2))}\\
		&\qquad\lesssim
		\tfrac{3\sqrt{3\nu}}{4}
		\varepsilon
		\big(
		\lVert{\boldsymbol{u}}_{0,h}\rVert_{L^\infty(\mathbb{R}^2)} 
		+\varepsilon\big(
		\lVert{\boldsymbol{u}}_{0,h}\rVert_{L^2(\mathbb{R}^2)}+\lVert{\omega}_0\rVert_{H^2(\mathbb{R}^2)}
		\big)
		\big)\mathrm{e}^{-\frac{\sqrt{2\nu}}{8}  t}
		\lVert\nabla{\boldsymbol v}^{\varepsilon}\rVert^2_{{L}^2(\Omega)}
		\notag\\
		&\qquad\quad+\tfrac{3\sqrt{3\nu}}{4}\varepsilon^3
		\big(
		1+
		\lVert {\boldsymbol{u}}_{0,h} \rVert_{H^{3}(\mathbb{R}^2)}
		\big)
		\lVert {\boldsymbol{u}}_{0,h} \rVert_{H^{5}(\mathbb{R}^2)}
		\lVert\nabla{\boldsymbol v}^{\varepsilon}\rVert^2_{{L}^2(\Omega)}\,.
	\end{align*}
	Thus, we have
	\begin{align}\label{4.6}
		&\big|\big\langle{\boldsymbol v}^{\varepsilon} \cdot \nabla
		\boldsymbol{u}^{BL},{\boldsymbol v}^{\varepsilon}\big\rangle\big|\notag\\
		&\qquad\lesssim
		\tfrac{3\sqrt{3\nu}}{4}
		\varepsilon
		\big(
		\lVert{\boldsymbol{u}}_{0,h}\rVert_{L^\infty(\mathbb{R}^2)} 
		+\varepsilon\big(
		\lVert{\boldsymbol{u}}_{0,h}\rVert_{L^2(\mathbb{R}^2)}+\lVert{\omega}_0\rVert_{H^2(\mathbb{R}^2)}
		\big)
		\big)\mathrm{e}^{-\frac{\sqrt{2\nu}}{8}  t}
		\lVert\nabla{\boldsymbol v}^{\varepsilon}\rVert^2_{{L}^2(\Omega)}
		\notag\\
		&\qquad\quad+\tfrac{3\sqrt{3\nu}}{4}\varepsilon^3
		\big(
		1+
		\lVert {\boldsymbol{u}}_{0,h} \rVert_{H^{3}(\mathbb{R}^2)}
		\big)
		\lVert {\boldsymbol{u}}_{0,h} \rVert_{H^{5}(\mathbb{R}^2)}
		\lVert\nabla{\boldsymbol v}^{\varepsilon}\rVert^2_{{L}^2(\Omega)}\,.
	\end{align}

	Using H\"{o}lder's inequality, one has
	\begin{align}\label{4.7}
		&|\langle{\boldsymbol v}^{\varepsilon} \cdot \nabla
		(
		\boldsymbol{u}^I+
		\boldsymbol{u^c}
		),{\boldsymbol v}^{\varepsilon}\rangle|\notag\\
		&\qquad\leq\lVert {\boldsymbol v}^{\varepsilon} \rVert^2_{{L}^2(\Omega)}
		\lVert \nabla
		\big(
		\boldsymbol u^{I}+\boldsymbol u^{c}
		\big)
		\rVert_{{L}^\infty(\Omega)}
		\notag\\
		&\qquad\leq
		\big(
		\lVert{\boldsymbol{u}}_{0,h}\rVert_{L^2(\mathbb{R}^2)}+\lVert{\omega}_0\rVert_{H^2(\mathbb{R}^2)}
		\big)\mathrm{e}^{-\frac{\sqrt{2\nu}}{8}  t}\lVert {\boldsymbol v}^{\varepsilon} \rVert^2_{{L}^2(\Omega)}
		\notag\\
		&\qquad\quad+\big(\varepsilon
		\lVert {\boldsymbol{u}}_{0,h} \rVert_{H^{4}(\mathbb{R}^2)}
		+\varepsilon^2
		\big(
		1+
		\lVert {\boldsymbol{u}}_{0,h} \rVert_{H^{4}(\mathbb{R}^2)}
		\big)
		\lVert {\boldsymbol{u}}_{0,h} \rVert_{H^{6}(\mathbb{R}^2)}\big)\lVert {\boldsymbol v}^{\varepsilon} \rVert^2_{{L}^2(\Omega)}
		\,,
	\end{align}
	where the above estimates utilize the results in Propositions \ref{propIC}.
	
	Based on Theorem \ref{thapp}, $|\langle
	\boldsymbol \rho^{\varepsilon}	,{\boldsymbol v}^{\varepsilon}\rangle|$ can be shown as
	\begin{equation}\label{4.10}
		|\langle
		\boldsymbol \rho^{\varepsilon}	,{\boldsymbol v}^{\varepsilon}\rangle|
		\lesssim
		\varepsilon^2
		\lVert
		{\boldsymbol{u}}_{0,h}
		\rVert_{{H}^{5}(\mathbb{R}^2)}
		\big(
		\lVert
		{\boldsymbol{u}}_{0,h}
		\rVert_{{H}^{3}(\mathbb{R}^2)}
		+1\big)^2\lVert{\boldsymbol v}^{\varepsilon}\rVert_{{L}^2(\Omega)}
		\,.
	\end{equation}

	Summing all inequalities \eqref{4.6}-\eqref{4.10} and applying Proposition \ref{prop3}, for sufficiently small $\varepsilon > 0$, we can derive from \eqref{4.3} that
	\begin{align}
		\label{4.15}
		&\frac{1}{2}\frac{d}{dt}\lVert\boldsymbol v^{\varepsilon}\rVert^2_{{L}^2(\Omega)}
		+\nu\varepsilon
		\lVert\nabla\boldsymbol v^{\varepsilon}\rVert^2_{{L}^2(\Omega)}\notag\\
		&\qquad\lesssim
		\tfrac{3\sqrt{3\nu}}{4}
		\varepsilon
		\big(
		\lVert{\boldsymbol{u}}_{0,h}\rVert_{L^\infty(\mathbb{R}^2)} 
		+\varepsilon\big(
		\lVert{\boldsymbol{u}}_{0,h}\rVert_{L^2(\mathbb{R}^2)}+\lVert{\omega}_0\rVert_{H^2(\mathbb{R}^2)}
		\big)
		\big)\mathrm{e}^{-\frac{\sqrt{2\nu}}{8}  t}
		\lVert\nabla{\boldsymbol v}^{\varepsilon}\rVert^2_{{L}^2(\Omega)}
		\notag\\
		&\qquad\quad+\tfrac{3\sqrt{3\nu}}{4}\varepsilon^3
		\big(
		1+
		\lVert {\boldsymbol{u}}_{0,h} \rVert_{H^{3}(\mathbb{R}^2)}
		\big)
		\lVert {\boldsymbol{u}}_{0,h} \rVert_{H^{5}(\mathbb{R}^2)}
		\lVert\nabla{\boldsymbol v}^{\varepsilon}\rVert^2_{{L}^2(\Omega)}\notag\\
		&\qquad\quad+
		\big(
		\lVert{\boldsymbol{u}}_{0,h}\rVert_{L^2(\mathbb{R}^2)}+\lVert{\omega}_0\rVert_{H^2(\mathbb{R}^2)}
		\big)\mathrm{e}^{-\frac{\sqrt{2\nu}}{8}  t}\lVert {\boldsymbol v}^{\varepsilon} \rVert^2_{{L}^2(\Omega)}
		\notag\\
		&\qquad\quad+\big(\varepsilon
		\lVert {\boldsymbol{u}}_{0,h} \rVert_{H^{4}(\mathbb{R}^2)}
		+\varepsilon^2
		\big(
		1+
		\lVert {\boldsymbol{u}}_{0,h} \rVert_{H^{4}(\mathbb{R}^2)}
		\big)
		\lVert {\boldsymbol{u}}_{0,h} \rVert_{H^{6}(\mathbb{R}^2)}\big)\lVert {\boldsymbol v}^{\varepsilon} \rVert^2_{{L}^2(\Omega)}\notag\\
		&\qquad\quad+\varepsilon^2
		\lVert
		{\boldsymbol{u}}_{0,h}
		\rVert_{{H}^{5}(\mathbb{R}^2)}
		\big(
		\lVert
		{\boldsymbol{u}}_{0,h}
		\rVert_{{H}^{3}(\mathbb{R}^2)}
		+1\big)^2\lVert{\boldsymbol v}^{\varepsilon}\rVert_{{L}^2(\Omega)}\,.
	\end{align}
Finally, using the Lemma \ref{f2} and condition \eqref{1.26}, we can establish the smallness estimate in Proposition \ref{prop5}.
\end{proof}

\subsection{{$L^2$ estimations for $\partial_t \boldsymbol{v}^\varepsilon$ and $\nabla \boldsymbol{v}^\varepsilon$}}\label{sub4.2}

In this subsection, we compute the value of $\lVert\partial_t\boldsymbol v^{\varepsilon}\rVert_{{L}^2(\Omega)}$
and derive the estimate for $\lVert\nabla\boldsymbol v^{\varepsilon}\rVert_{{L}^2(\Omega)}$
using Proposition \ref{prop5}.
\begin{proposition}\label{prop8}
	Under the assumptions of Theorem \ref{thm2}, then there exist some $\varepsilon>0$ such that
	\begin{equation}\label{4.1.1}
		\lVert\partial_t\boldsymbol v^{\varepsilon}\rVert^2_{L^\infty([0,T),{L}^2(\Omega))}
		\lesssim\varepsilon^3\,.
	\end{equation}
	In particular, it holds that
	\begin{equation}\label{4.2.1}
		\lVert\nabla\boldsymbol v^{\varepsilon}\rVert^2_{L^\infty([0,T),{L}^2(\Omega))}
		\lesssim\varepsilon^{\frac52}\,.
	\end{equation}
\end{proposition}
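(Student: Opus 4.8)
The plan is to estimate $\boldsymbol w:=\partial_t\boldsymbol v^\varepsilon$ by differentiating the error equation \eqref{2.65} in time, running an $L^2$ energy estimate, and closing it with the Gronwall Lemma \ref{f2} together with a short continuity argument; the bound \eqref{4.2.1} will then follow \emph{a posteriori} from \eqref{4.1.1}. First I would pin down the initial datum for $\boldsymbol w$: evaluating the first equation of \eqref{2.65} at $t=0$ and using $\boldsymbol v^\varepsilon|_{t=0}=0$, every term carrying $\boldsymbol v^\varepsilon$ drops, leaving $\partial_t\boldsymbol v^\varepsilon|_{t=0}+\varepsilon^{-1}\nabla p_v|_{t=0}=-\boldsymbol\rho^\varepsilon|_{t=0}$; since $\nabla\cdot\partial_t\boldsymbol v^\varepsilon=0$ and $\partial_t\boldsymbol v^\varepsilon$ still vanishes on $\partial\Omega$, this gives $\boldsymbol w(0)=-\mathbb P\boldsymbol\rho^\varepsilon(0)$, so $\lVert\boldsymbol w(0)\rVert_{L^2(\Omega)}\lesssim\varepsilon^2$ by Theorem \ref{thapp} (here the vertical residual $\partial_t\bar u_3+\bar{\boldsymbol u}_h\cdot\nabla_h\bar u_3$ inside $\boldsymbol\rho^{\varepsilon,2}$ is exactly what Propositions \ref{prop4}--\ref{propT} and the compatibility condition \eqref{compatibility} keep under control). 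Testing the time-differentiated equation against $\boldsymbol w$, the Coriolis term, the pressure gradient and $\boldsymbol u^\varepsilon\cdot\nabla\boldsymbol w$ all integrate to zero by skew-symmetry, incompressibility and $\boldsymbol w|_{\partial\Omega}=0$, so
\begin{equation*}
\tfrac12\tfrac{d}{dt}\lVert\boldsymbol w\rVert_{L^2(\Omega)}^2+\nu\varepsilon\lVert\nabla\boldsymbol w\rVert_{L^2(\Omega)}^2
=-\langle(\partial_t\boldsymbol u^\varepsilon\cdot\nabla)\boldsymbol v^\varepsilon,\boldsymbol w\rangle
-\langle(\boldsymbol w\cdot\nabla)\boldsymbol u_{app}^\varepsilon,\boldsymbol w\rangle
-\langle(\boldsymbol v^\varepsilon\cdot\nabla)\partial_t\boldsymbol u_{app}^\varepsilon,\boldsymbol w\rangle
-\langle\partial_t\boldsymbol\rho^\varepsilon,\boldsymbol w\rangle .
\end{equation*}
This differentiation is formal at the stated regularity $\boldsymbol u^\varepsilon\in L^\infty_tH^2\cap L^2_tH^3$ and would be justified rigorously through time difference quotients.

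For the terms not involving the full product $\boldsymbol w\!\cdot\!\nabla\boldsymbol w$ I would write $\partial_t\boldsymbol u^\varepsilon=\boldsymbol w+\partial_t\boldsymbol u_{app}^\varepsilon$ and then: integrate by parts in $\langle(\partial_t\boldsymbol u_{app}^\varepsilon\cdot\nabla)\boldsymbol v^\varepsilon,\boldsymbol w\rangle$ and $\langle(\boldsymbol v^\varepsilon\cdot\nabla)\partial_t\boldsymbol u_{app}^\varepsilon,\boldsymbol w\rangle$ to move the derivative onto $\boldsymbol w$ (using $\nabla\cdot\partial_t\boldsymbol u_{app}^\varepsilon=0$), bound by $\lVert\partial_t\boldsymbol u_{app}^\varepsilon\rVert_{L^\infty(\Omega)}\lVert\boldsymbol v^\varepsilon\rVert_{L^2(\Omega)}\lVert\nabla\boldsymbol w\rVert_{L^2(\Omega)}$, and use the uniform bound on $\lVert\partial_t\boldsymbol u_{app}^\varepsilon\rVert_{L^\infty}$ from Propositions \ref{propIC}--\ref{propBL} and $\lVert\boldsymbol v^\varepsilon\rVert_{L^2}\lesssim\varepsilon^2$ from Proposition \ref{prop5}; after Young's inequality these contribute $\theta\nu\varepsilon\lVert\nabla\boldsymbol w\rVert_{L^2}^2+C_\theta\nu^{-1}\varepsilon^{3}$, which is precisely where the exponent $\varepsilon^3$ enters. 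In $\langle(\boldsymbol w\cdot\nabla)\boldsymbol u_{app}^\varepsilon,\boldsymbol w\rangle$ I would split $\boldsymbol u_{app}^\varepsilon=\boldsymbol u^I+\boldsymbol u^{BL}+\boldsymbol u^c$: the interior plus correction part is $\lesssim\lVert\nabla(\boldsymbol u^I+\boldsymbol u^c)\rVert_{L^\infty(\Omega)}\lVert\boldsymbol w\rVert_{L^2}^2$ with a bounded, decaying coefficient (Proposition \ref{propIC}), while for the boundary-layer part I integrate by parts and use the Hardy-type bound $|\boldsymbol w(x,y,z)|\le d(z)^{1/2}\lVert\partial_z\boldsymbol w\rVert_{L^2([B,B+2])}$ (valid since $\boldsymbol w|_{\partial\Omega}=0$) together with Proposition \ref{propBL} with $k=0$, obtaining $\lesssim(\tfrac34\nu\varepsilon+O(\varepsilon^2))\lVert\nabla\boldsymbol w\rVert_{L^2}^2$, which is absorbed by the dissipation exactly because of the sharp bound \eqref{1.26} on $\lVert\boldsymbol u_{0,h}\rVert_{W^{1,\infty}}$. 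Finally $\langle\partial_t\boldsymbol\rho^\varepsilon,\boldsymbol w\rangle\le\lVert\partial_t\boldsymbol\rho^\varepsilon\rVert_{L^2}\lVert\boldsymbol w\rVert_{L^2}\lesssim\varepsilon^2\lVert\boldsymbol w\rVert_{L^2}$ by Theorem \ref{thapp}.

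The \emph{main obstacle} is the genuinely nonlinear term $\langle(\boldsymbol w\cdot\nabla)\boldsymbol v^\varepsilon,\boldsymbol w\rangle=-\langle(\boldsymbol w\cdot\nabla)\boldsymbol w,\boldsymbol v^\varepsilon\rangle$. I would bound it by a three–dimensional Gagliardo--Nirenberg inequality, $|\langle(\boldsymbol w\cdot\nabla)\boldsymbol w,\boldsymbol v^\varepsilon\rangle|\lesssim\lVert\boldsymbol w\rVert_{L^2}^{1/2}\lVert\nabla\boldsymbol w\rVert_{L^2}^{3/2}\lVert\nabla\boldsymbol v^\varepsilon\rVert_{L^2}$, and control $\lVert\nabla\boldsymbol v^\varepsilon\rVert_{L^2}$ pointwise in $t$ by testing the first equation of \eqref{2.65} (read as a steady Stokes system at fixed $t$) against $\boldsymbol v^\varepsilon$: the convective and Coriolis terms vanish, the boundary-layer contribution is again absorbed via the $d(z)^{1/2}$-trick and \eqref{1.26}, and Proposition \ref{prop5}/Theorem \ref{thapp} yield $\nu\varepsilon\lVert\nabla\boldsymbol v^\varepsilon\rVert_{L^2}^2\lesssim\varepsilon^2\lVert\partial_t\boldsymbol v^\varepsilon\rVert_{L^2}+\varepsilon^4$, i.e.\ $\lVert\nabla\boldsymbol v^\varepsilon\rVert_{L^2}^2\lesssim\nu^{-1}(\varepsilon\lVert\boldsymbol w\rVert_{L^2}+\varepsilon^3)$. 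Substituting this and applying Young's inequality, the nonlinear term yields $\tfrac{\nu\varepsilon}{8}\lVert\nabla\boldsymbol w\rVert_{L^2}^2+C\nu^{-5}\varepsilon^{-1}\lVert\boldsymbol w\rVert_{L^2}^4+C\nu^{-5}\varepsilon^3\lVert\boldsymbol w\rVert_{L^2}^2+O(\varepsilon^5)$; the quartic term is handled by a continuity/bootstrap argument — on a maximal interval where $\lVert\boldsymbol w\rVert_{L^\infty_tL^2}^2\le 2M^2\varepsilon^3$ one has $\varepsilon^{-1}\lVert\boldsymbol w\rVert_{L^2}^4\lesssim\varepsilon^2\lVert\boldsymbol w\rVert_{L^2}^2$, so it merges into an $a_0$-coefficient with uniformly bounded time integral. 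Collecting everything gives $\tfrac{d}{dt}\lVert\boldsymbol w\rVert_{L^2}^2\le a_0\lVert\boldsymbol w\rVert_{L^2}^2+a_1\lVert\boldsymbol w\rVert_{L^2}+a_2$ with $\int_0^T a_0\lesssim C$, $\int_0^T a_1\lesssim\varepsilon^2$, $\int_0^T a_2\lesssim\varepsilon^3$ and $\lVert\boldsymbol w(0)\rVert_{L^2}\lesssim\varepsilon^2$; Lemma \ref{f2} with $\varepsilon_0=\varepsilon^{3/2}$ then gives $\lVert\boldsymbol w\rVert_{L^\infty_tL^2}^2\le M^2\varepsilon^3$ with $M$ independent of the interval, closing the bootstrap for $\varepsilon$ small and proving \eqref{4.1.1}. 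Finally \eqref{4.2.1} follows "in particular": feeding \eqref{4.1.1} and Proposition \ref{prop5} back into $\nu\varepsilon\lVert\nabla\boldsymbol v^\varepsilon\rVert_{L^2}^2\lesssim\varepsilon^2\lVert\partial_t\boldsymbol v^\varepsilon\rVert_{L^2}+\varepsilon^4\lesssim\varepsilon^{7/2}$ gives $\lVert\nabla\boldsymbol v^\varepsilon\rVert_{L^\infty_tL^2}^2\lesssim\varepsilon^{5/2}$. Besides this nonlinear/bootstrap step, the remaining delicate points are the rigorous time-differentiation and the bookkeeping of constants needed to keep the total absorbed dissipation strictly below $\nu\varepsilon\lVert\nabla\boldsymbol w\rVert_{L^2}^2$, which is exactly where the precise thresholds in \eqref{1.26} and \eqref{1.23}--\eqref{1.24} are used.
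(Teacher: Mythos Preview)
Your proposal is correct and follows essentially the same route as the paper: differentiate \eqref{2.65} in $t$, run the $L^2$ energy identity on $\boldsymbol w=\partial_t\boldsymbol v^\varepsilon$, handle $\langle\boldsymbol w\cdot\nabla\boldsymbol u_{app}^\varepsilon,\boldsymbol w\rangle$ via the $I/BL/c$ splitting and the Hardy trick (Propositions~\ref{propIC}--\ref{propBL}), bound the genuinely nonlinear term by the $3$D Gagliardo--Nirenberg inequality, and control $\|\nabla\boldsymbol v^\varepsilon\|_{L^2}$ pointwise in time by rewriting the $L^2$ energy inequality as $\nu\varepsilon\|\nabla\boldsymbol v^\varepsilon\|_{L^2}^2\lesssim\varepsilon^4+2\|\boldsymbol v^\varepsilon\|_{L^2}\|\partial_t\boldsymbol v^\varepsilon\|_{L^2}$ (this is exactly the paper's \eqref{3.1.1}, and your ``testing the steady Stokes system at fixed $t$'' is the same computation).

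The one genuine difference is the closure. After collecting terms the paper arrives at a Riccati-type inequality
\[
\tfrac{d}{dt}\lVert\boldsymbol w\rVert_{L^2}^2\lesssim \varepsilon^{-3}\lVert\boldsymbol w\rVert_{L^2}^4+\varepsilon^3
\]
and invokes Lemma~\ref{f4} directly; this is what produces the explicit time threshold $T<\pi/2$ in Theorem~\ref{thm2}. You instead run a bootstrap (assume $\|\boldsymbol w\|_{L^2}^2\le 2M^2\varepsilon^3$, turn the quartic term into a harmless $a_0$-coefficient, apply Lemma~\ref{f2} with $\varepsilon_0=\varepsilon^{3/2}$, then close). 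Both are valid; the paper's route is a one-liner but imports the $\tan$ blow-up time, while your continuity argument is slightly longer but avoids that artifact. You are also more careful than the paper about $\boldsymbol w(0)$: Lemma~\ref{f4} as stated requires $g(0)=0$, whereas in fact $\boldsymbol w(0)=-\mathbb P\boldsymbol\rho^\varepsilon(0)$ with $\|\boldsymbol w(0)\|_{L^2}\lesssim\varepsilon^2$, which your use of Lemma~\ref{f2} accommodates without modification. One cosmetic point: when you write ``$\lesssim(\tfrac34\nu\varepsilon+O(\varepsilon^2))\|\nabla\boldsymbol w\|_{L^2}^2$'' for the boundary-layer contribution, the leading constant must be sharp (no hidden $\lesssim$) for the absorption to work, which you correctly attribute to \eqref{1.26}; just make sure the write-up reflects that.
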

\begin{proof}
	First acting the operator $\partial_t$ on the system \eqref{2.65} and then making an inner product with $\partial_t\boldsymbol{v}^\varepsilon$ yields
	\begin{align}
		\label{4.16.1}
		&\frac{1}{2}\frac{d}{dt}\lVert\partial_t\boldsymbol v^{\varepsilon}\rVert^2_{{L}^2(\Omega)}
		+\nu\varepsilon\lVert\nabla\partial_t\boldsymbol v^{\varepsilon}\rVert^2_{{L}^2(\Omega)}\notag\\
		&\qquad\leq
		|\langle{\boldsymbol u}^{\varepsilon} \cdot \nabla \partial_t{\boldsymbol v}^{\varepsilon}
		+	{\varepsilon}^{-1}\boldsymbol{R}
		\partial_t\boldsymbol v^{\varepsilon}
		+		{\varepsilon}^{-1}\nabla \partial_tp_v,\partial_t{\boldsymbol v}^{\varepsilon}\rangle|\notag\\
		&\qquad\quad+
		|\langle
		\partial_t{\boldsymbol v}^{\varepsilon}\cdot
		\nabla\boldsymbol{u}^{\varepsilon}_{app},
		\partial_t{\boldsymbol v}^{\varepsilon}
		\rangle|
		+
		|\langle
		\partial_t{\boldsymbol v}^{\varepsilon} \cdot \nabla\boldsymbol{v}^\varepsilon
		,\partial_t{\boldsymbol v}^{\varepsilon}\rangle|\notag\\
		&\qquad\quad+
		|\langle
		\partial_t{\boldsymbol u}_{app}^{\varepsilon} \cdot \nabla\boldsymbol{v}^\varepsilon
		,\partial_t{\boldsymbol v}^{\varepsilon}\rangle|
		+
		|\langle{\boldsymbol v}^{\varepsilon} \cdot \nabla\partial_t\boldsymbol{u}^\varepsilon_{app},
		\partial_t{\boldsymbol v}^{\varepsilon}\rangle|	
		+|\langle
		\partial_t\boldsymbol \rho^{\varepsilon}	,\partial_t{\boldsymbol v}^{\varepsilon}\rangle|
		\,.
	\end{align}

	For the first two terms, by applying methods analogous to those used in the $L^2$ estimates of equations \eqref{4.4}-\eqref{4.7}, we can obtain
	\begin{equation}\label{3.2.1}
		|\langle{\boldsymbol u}^{\varepsilon} \cdot \nabla \partial_t{\boldsymbol v}^{\varepsilon}
		+	{\varepsilon}^{-1}\boldsymbol{R}
		\partial_t\boldsymbol v^{\varepsilon}
		+		{\varepsilon}^{-1}\nabla \partial_t p_v,\partial_t{\boldsymbol v}^{\varepsilon}\rangle|=0\,,
	\end{equation}
	and
	\begin{align}\label{3.3.1}
		&|\langle
		\partial_t{\boldsymbol v}^{\varepsilon}\cdot
		\nabla\boldsymbol{u}^{\varepsilon}_{app},
		\partial_t{\boldsymbol v}^{\varepsilon}
		\rangle|\notag\\
		&\qquad\lesssim
		\tfrac{3\sqrt{3\nu}}{4}
		\varepsilon
		\big(
		\lVert{\boldsymbol{u}}_{0,h}\rVert_{L^\infty(\mathbb{R}^2)} 
		+\varepsilon\big(
		\lVert{\boldsymbol{u}}_{0,h}\rVert_{L^2(\mathbb{R}^2)}+\lVert{\omega}_0\rVert_{H^2(\mathbb{R}^2)}
		\big)
		\big)\mathrm{e}^{-\frac{\sqrt{2\nu}}{8}  t}
		\lVert\nabla\partial_t{\boldsymbol v}^{\varepsilon}\rVert^2_{{L}^2(\Omega)}
		\notag\\
		&\qquad\quad+\tfrac{3\sqrt{3\nu}}{4}\varepsilon^3
		\big(
		1+
		\lVert {\boldsymbol{u}}_{0,h} \rVert_{H^{3}(\mathbb{R}^2)}
		\big)
		\lVert {\boldsymbol{u}}_{0,h} \rVert_{H^{5}(\mathbb{R}^2)}
		\lVert\nabla\partial_t{\boldsymbol v}^{\varepsilon}\rVert^2_{{L}^2(\Omega)}\notag\\
		&\qquad\quad+
		\big(
		\lVert{\boldsymbol{u}}_{0,h}\rVert_{L^2(\mathbb{R}^2)}+\lVert{\omega}_0\rVert_{H^2(\mathbb{R}^2)}
		\big)\mathrm{e}^{-\frac{\sqrt{2\nu}}{8}  t}\lVert \partial_t{\boldsymbol v}^{\varepsilon} \rVert^2_{{L}^2(\Omega)}
		\notag\\
		&\qquad\quad+\big(\varepsilon
		\lVert {\boldsymbol{u}}_{0,h} \rVert_{H^{4}(\mathbb{R}^2)}
		+\varepsilon^2
		\big(
		1+
		\lVert {\boldsymbol{u}}_{0,h} \rVert_{H^{4}(\mathbb{R}^2)}
		\big)
		\lVert {\boldsymbol{u}}_{0,h} \rVert_{H^{6}(\mathbb{R}^2)}\big)\lVert\partial_t {\boldsymbol v}^{\varepsilon} \rVert^2_{{L}^2(\Omega)}\,.
	\end{align}

	For the third term, by employing the Cauchy-Schwarz inequality and the Gagliardo-Nirenberg inequality, we derive the following result
	\begin{align}\label{4.17.1}
		|\langle
		\partial_t{\boldsymbol v}^{\varepsilon} \cdot \nabla\boldsymbol{v}^\varepsilon)
		,\partial_t{\boldsymbol v}^{\varepsilon}\rangle|
		\leq
		&\lVert
		\nabla{\boldsymbol v}^{\varepsilon} 	\rVert_{{L}^2(\Omega)}
		\lVert
		\partial_t{\boldsymbol v}^{\varepsilon} 	\rVert^2_{{L}^4(\Omega)}
		\notag\\
		\leq &
		\lVert\nabla{\boldsymbol v}^{\varepsilon} 	\rVert_{{L}^2(\Omega)}
		\lVert
		\partial_t{\boldsymbol v}^{\varepsilon} 	\rVert^{\frac12}_{{L}^2(\Omega)}
		\lVert
		\nabla\partial_t{\boldsymbol v}^{\varepsilon} 	\rVert^{\frac32}_{{L}^2(\Omega)}
		\notag\\
		\lesssim&
		\frac{\nu\varepsilon}{8}
		\rVert
		\nabla\partial_t{\boldsymbol v}^{\varepsilon} 	\lVert^{2}_{{L}^2(\Omega)}
		+
		\varepsilon^{-3}
		\lVert\nabla{\boldsymbol v}^{\varepsilon} 	\rVert^4_{{L}^2(\Omega)}
		\lVert
		\partial_t{\boldsymbol v}^{\varepsilon} 	\rVert^{2}_{{L}^2(\Omega)}
		\,.
	\end{align}
	Below we estimate item $\lVert\nabla{\boldsymbol v}^{\varepsilon} 	\rVert_{{L}^2(\Omega)}$ in \eqref{4.17.1}. Using Proposition \ref{prop5}, we have  
	\begin{equation*}
		\frac{d}{dt}\lVert\boldsymbol v^{\varepsilon}\rVert^2_{{L}^2(\Omega)}
		+\frac{\nu\varepsilon}{2}
		\lVert\nabla\boldsymbol v^{\varepsilon}\rVert^2_{{L}^2(\Omega)} \lesssim \varepsilon^4 \,,
	\end{equation*}
	which implies  
	\begin{equation*}
		\varepsilon\lVert\nabla\boldsymbol v^{\varepsilon}\rVert^2_{{L}^2(\Omega)} \lesssim \varepsilon^4 
		+\Big|
		\frac{d}{dt}\lVert\boldsymbol v^{\varepsilon}\rVert^2_{{L}^2(\Omega)}
		\Big|\,.
	\end{equation*} 
	Applying the Cauchy-Schwarz inequality, we further derive 
	\begin{align}\label{3.1.1}
		\lVert\nabla\boldsymbol v^{\varepsilon}\rVert^2_{{L}^2(\Omega)} \lesssim
		& \varepsilon^3 
		+\varepsilon^{-1}\Big|
		\int_{\Omega} 2 \boldsymbol v^{\varepsilon} \cdot {\partial_t}\boldsymbol v^{\varepsilon} \, dx
		\Big|\notag\\
		\lesssim&\varepsilon^3 
		+
		2\varepsilon^{-1}\lVert\boldsymbol v^{\varepsilon}\rVert_{{L}^2(\Omega)}
		\lVert\partial_t\boldsymbol v^{\varepsilon}\rVert_{{L}^2(\Omega)}\notag\\
		\lesssim&\varepsilon^3 
		+
		\varepsilon 
		\lVert\partial_t\boldsymbol v^{\varepsilon}\rVert_{{L}^2(\Omega)}\,.
	\end{align} 
	Thus, \eqref{4.17.1} can be rewritten as
	\begin{align}\label{4.17.2}
		|\langle
		\partial_t{\boldsymbol v}^{\varepsilon} \cdot \nabla\boldsymbol{v}^\varepsilon)
		,\partial_t{\boldsymbol v}^{\varepsilon}\rangle|
		\lesssim&
		\frac{\nu\varepsilon}{8}
		\rVert
		\nabla\partial_t{\boldsymbol v}^{\varepsilon} 	\lVert^{2}_{{L}^2(\Omega)}
		+
		\varepsilon^{3}\lVert
		\partial_t{\boldsymbol v}^{\varepsilon} 	\rVert^{2}_{{L}^2(\Omega)}
		+
		\varepsilon^{-1}\lVert
		\partial_t{\boldsymbol v}^{\varepsilon} 	\rVert^{4}_{{L}^2(\Omega)}
		\,.
	\end{align}

	For the nonlinear terms related to the approximate solution, we decompose the approximate solution following the approach in Proposition \ref{prop5}. Using the results from Propositions \ref{propIC} and \ref{propBL} and the incompressible conditions of $\boldsymbol{v}^\varepsilon$ and $\boldsymbol{u}^\varepsilon_{app}$, we obtain
	\begin{align}\label{4.21.1}
		&|\langle
		\partial_t{\boldsymbol u}_{app}^{\varepsilon} \cdot \nabla\boldsymbol{v}^\varepsilon
		,\partial_t{\boldsymbol v}^{\varepsilon}\rangle|
		+
		|\langle{\boldsymbol v}^{\varepsilon} \cdot \nabla\partial_t\boldsymbol{u}^\varepsilon_{app},
		\partial_t{\boldsymbol v}^{\varepsilon}\rangle|	
		\notag\\
		&\qquad=|\langle
		\partial_t{\boldsymbol u}_{app}^{\varepsilon} \cdot \nabla\partial_t\boldsymbol{v}^\varepsilon
		,{\boldsymbol v}^{\varepsilon}\rangle|
		+
		|\langle{\boldsymbol v}^{\varepsilon} \cdot \nabla\partial_t{\boldsymbol v}^{\varepsilon},
		\partial_t\boldsymbol{u}^\varepsilon_{app}\rangle|	
		\notag\\
		&\qquad\leq
		\rVert
		\nabla\partial_t{\boldsymbol v}^{\varepsilon} 	\lVert_{{L}^2(\Omega)}
		\rVert
		{\boldsymbol v}^{\varepsilon} 	\lVert_{{L}^2(\Omega)}
		\rVert
		\partial_t\boldsymbol{u}^\varepsilon_{app}
		\lVert_{{L}^\infty(\Omega)}\notag\\
		&\qquad\lesssim
		\frac{\nu\varepsilon}{8}
		\rVert
		\nabla\partial_t{\boldsymbol v}^{\varepsilon} 	\lVert^{2}_{{L}^2(\Omega)}
		+
		\varepsilon^{-1}\rVert
		{\boldsymbol v}^{\varepsilon} 	\lVert^2_{{L}^2(\Omega)}
		\rVert
		\partial_t\boldsymbol{u}^\varepsilon_{app}\lVert^2_{{L}^\infty(\Omega)}
		\notag\\
		&\qquad\lesssim
		\frac{\nu\varepsilon}{8}
		\rVert
		\nabla\partial_t{\boldsymbol v}^{\varepsilon} 	\lVert^{2}_{{L}^2(\Omega)}
		+
		\varepsilon^3
		\,.
	\end{align}

	According to Theorem \ref{thapp} and , the last term is rearranged as
	\begin{align}\label{4.19.1}
		|\langle
		\partial_t\boldsymbol \rho^{\varepsilon}	,\partial_t{\boldsymbol v}^{\varepsilon}\rangle|
		\leq&
		\lVert
		\partial_t\boldsymbol{\rho}^\varepsilon\rVert_{{L}^{2}(\Omega)}
		\rVert
		\partial_t{\boldsymbol v}^{\varepsilon}
		\lVert_{{L}^2(\Omega)}
		\notag\\	
		\lesssim&
		\rVert
		\partial_t{\boldsymbol v}^{\varepsilon}
		\lVert^2_{{L}^2(\Omega)}
		+\varepsilon^4
		\lVert
		{\boldsymbol{u}}_{0,h}
		\rVert^2_{{H}^{6}(\mathbb{R}^2)}
		\big(
		\lVert
		{\boldsymbol{u}}_{0,h}
		\rVert^2_{{H}^{4}(\mathbb{R}^2)}
		+1
		\big)^2
		\,.
	\end{align}

	According to \eqref{3.2.1}-\eqref{4.19.1}, we can rewrite equation \eqref{4.16.1} as follows
	\begin{align}
		\label{4.20.1}
		&\frac{1}{2}\frac{d}{dt}\lVert\partial_t\boldsymbol v^{\varepsilon}\rVert^2_{{L}^2(\Omega)}
		+\frac{3\nu\varepsilon}{4}\lVert\nabla\partial_t\boldsymbol v^{\varepsilon}\rVert^2_{{L}^2(\Omega)}\notag\\
		&\qquad\lesssim
		\tfrac{3\sqrt{3\nu}}{4}
		\varepsilon
		\big(
		\lVert{\boldsymbol{u}}_{0,h}\rVert_{L^\infty(\mathbb{R}^2)} 
			+\varepsilon\big(
		\lVert{\boldsymbol{u}}_{0,h}\rVert_{L^2(\mathbb{R}^2)}+\lVert{\omega}_0\rVert_{H^2(\mathbb{R}^2)}
		\big)
		\big)\mathrm{e}^{-\frac{\sqrt{2\nu}}{8}  t}
		\lVert\nabla\partial_t{\boldsymbol v}^{\varepsilon}\rVert^2_{{L}^2(\Omega)}
		\notag\\
		&\qquad\quad+
		\tfrac{3\sqrt{3\nu}}{4}\varepsilon^3
		\big(
		1+
		\lVert {\boldsymbol{u}}_{0,h} \rVert_{H^{3}(\mathbb{R}^2)}
		\big)
		\lVert {\boldsymbol{u}}_{0,h} \rVert_{H^{5}(\mathbb{R}^2)}
		\lVert\nabla\partial_t{\boldsymbol v}^{\varepsilon}\rVert^2_{{L}^2(\Omega)}\notag\\
		&\qquad\quad
		+
		\varepsilon^{-1}\lVert
		\partial_t{\boldsymbol v}^{\varepsilon} 	\rVert^{4}_{{L}^2(\Omega)}
		+		(1+\varepsilon^{3})\lVert
		\partial_t{\boldsymbol v}^{\varepsilon} 	\rVert^{2}_{{L}^2(\Omega)}+
		\varepsilon^3
		\,.
	\end{align}

	When  $\varepsilon$ is sufficiently small, we can simplify equation \eqref{4.20.1} with the following expressions:
	\begin{align}
		 \varepsilon^{-1} \lVert
		\partial_t{\boldsymbol v}^{\varepsilon} 	\rVert^{4}_{{L}^2(\Omega)}&\leq
		\frac{
			\lVert
			\partial_t{\boldsymbol v}^{\varepsilon} 	\rVert^{4}_{{L}^2(\Omega)}
		}{
			\varepsilon^3 
		}\,,\label{1}\\
		(1+\varepsilon^{3} )\lVert
		\partial_t{\boldsymbol v}^{\varepsilon} 	\rVert^{2}_{{L}^2(\Omega)}&\leq
		\frac{
			\lVert
			\partial_t{\boldsymbol v}^{\varepsilon} 	\rVert^{4}_{{L}^2(\Omega)}
		}{
			\varepsilon^3 
		}
		+\varepsilon^3 \,.\label{2}
	\end{align}

	Under condition  \eqref{1.26} in Theorem \ref{thm2}, if $\varepsilon$ is sufficiently small, we obtain 
	\begin{equation}
		\label{4.20.2}
		\frac{1}{2}\frac{d}{dt}\lVert\partial_t\boldsymbol v^{\varepsilon}\rVert^2_{{L}^2(\Omega)}
		\lesssim
		\frac{
			\lVert
			\partial_t{\boldsymbol v}^{\varepsilon} 	\rVert^{4}_{{L}^2(\Omega)}
		}{
			\varepsilon^3
		}
		+\varepsilon^3\,.
	\end{equation}
	
	Finally, by applying Lemma \ref{f4} for $t< \frac\pi2$, we derive the result \eqref{4.1.1}, and subsequently obtain \eqref{4.2.1}.
\end{proof}

\subsection{$L^\infty$ estimation for $\boldsymbol{v}^\varepsilon$}\label{sub4.3}

\textbf{Proof of the Theorem \ref{thm2}.}
Since the $L^\infty$ estimate of \( \boldsymbol{v}^\varepsilon \) involves $\lVert \Delta\boldsymbol{v}^\varepsilon\rVert_{{L}^{2}(\Omega)}$, using the equations \eqref{2.65} satisfied by $\boldsymbol{v}^\varepsilon$, we can obtain
\begin{align}\label{3.3.1.1}
	\nu\varepsilon
	\lVert\Delta{\boldsymbol v}^{\varepsilon}\rVert_{{L}^{2}(\Omega)}
	=&
	\lVert
	{\boldsymbol v}^{\varepsilon} \cdot \nabla {\boldsymbol v}^{\varepsilon}
	+
	{\boldsymbol u}_{app}^{\varepsilon} \cdot \nabla {\boldsymbol v}^{\varepsilon}
	+
	{\boldsymbol v}^{\varepsilon} \cdot \nabla\boldsymbol{u}^\varepsilon_{app}
	+
	{\varepsilon}^{-1}\boldsymbol{R}\boldsymbol v^{\varepsilon}
	+\boldsymbol \rho^{\varepsilon}
	\rVert_{{L}^{2}(\Omega)}\notag
	\\
	&+\lVert
	\partial_{t} {\boldsymbol v}^{\varepsilon}
	\rVert_{{L}^{2}(\Omega)}+
	{\varepsilon}^{-1}\lVert
	\nabla p_v
	\rVert_{{L}^{2}(\Omega)}
	\,.
\end{align}

Using the incompressibility condition of \( \boldsymbol{v}^\varepsilon \), we apply the divergence operator \( \nabla\cdot \) to equation \eqref{2.65} and find that the pressure term satisfies
\begin{equation}\label{3.3.2}
	\varepsilon^{-1}\Delta p_v
	=-\nabla\cdot
	\big(
	{\boldsymbol v}^{\varepsilon} \cdot \nabla {\boldsymbol v}^{\varepsilon}
	+
	{\boldsymbol u}_{app}^{\varepsilon} \cdot \nabla {\boldsymbol v}^{\varepsilon}
	+
	{\boldsymbol v}^{\varepsilon} \cdot \nabla\boldsymbol{u}^\varepsilon_{app}
	+
	{\varepsilon}^{-1}\boldsymbol{R}\boldsymbol v^{\varepsilon}
	+\boldsymbol \rho^{\varepsilon}\,.
	\big)
\end{equation}

Moreover, the boundedness of the Riesz operator yields an estimate for the pressure term
\begin{equation}\label{3.3.3}
	{\varepsilon}^{-1}\lVert
	\nabla p_v
	\rVert_{{L}^{2}(\Omega)}
	\lesssim
	\lVert
	{\boldsymbol v}^{\varepsilon} \cdot \nabla {\boldsymbol v}^{\varepsilon}
	+
	{\boldsymbol u}_{app}^{\varepsilon} \cdot \nabla {\boldsymbol v}^{\varepsilon}
	+
	{\boldsymbol v}^{\varepsilon} \cdot \nabla\boldsymbol{u}^\varepsilon_{app}
	+
	{\varepsilon}^{-1}\boldsymbol{R}\boldsymbol v^{\varepsilon}
	+\boldsymbol \rho^{\varepsilon}
	\rVert_{{L}^{2}(\Omega)}\,.
\end{equation}

Given the results from Proposition \ref{prop5} and \ref{prop8}, combining equations \eqref{3.3.1}-\eqref{3.3.3} and Proposition \ref{propIC}-\ref{propBL}, we obtain
\begin{align}\label{3.3.4}
	\nu\varepsilon
	\lVert\Delta{\boldsymbol v}^{\varepsilon}\rVert_{{L}^{2}(\Omega)}
	\lesssim&
	\lVert
	{\boldsymbol v}^{\varepsilon} \cdot \nabla {\boldsymbol v}^{\varepsilon}
	\rVert_{{L}^{2}(\Omega)}
	+\lVert
	{\boldsymbol u}_{app}^{\varepsilon} \cdot \nabla {\boldsymbol v}^{\varepsilon}
	\rVert_{{L}^{2}(\Omega)}
	+\lVert
	{\boldsymbol v}^{\varepsilon} \cdot \nabla\boldsymbol{u}^\varepsilon_{app}
	\rVert_{{L}^{2}(\Omega)}\notag\\
	&+
	\lVert{\varepsilon}^{-1}\boldsymbol{R}\boldsymbol v^{\varepsilon}
	\rVert_{{L}^{2}(\Omega)}
	+\lVert\boldsymbol \rho^{\varepsilon}
	\rVert_{{L}^{2}(\Omega)}
	+\lVert
	\partial_{t} {\boldsymbol v}^{\varepsilon}
	\rVert_{{L}^{2}(\Omega)}
	\notag\\
	\lesssim&
	\lVert
	{\boldsymbol v}^{\varepsilon}
	\rVert_{{L}^{6}(\Omega)}
	\lVert\nabla {\boldsymbol v}^{\varepsilon}
	\rVert_{{L}^{3}(\Omega)}
	+\lVert
	{\boldsymbol u}_{app}^{\varepsilon}
	\rVert_{{L}^{\infty}(\Omega)} 
	\lVert \nabla {\boldsymbol v}^{\varepsilon}
	\rVert_{{L}^{2}(\Omega)}\notag\\
	&
	+\lVert
	\partial_z{\boldsymbol v}^{\varepsilon}
	\rVert_{{L}^{2}(\Omega)}
	\lVert d(z)^{\frac{1}{2}} |\nabla\boldsymbol{u}^{BL}| \rVert_{{L}^2([B,B+2];{L}^\infty(\mathbb{R}^2))}\notag\\
	&+
	{\varepsilon}^{-1}\lVert\boldsymbol v^{\varepsilon}
	\rVert_{{L}^{2}(\Omega)}
	+\lVert\boldsymbol \rho^{\varepsilon}
	\rVert_{{L}^{2}(\Omega)}
	+\lVert
	\partial_{t} {\boldsymbol v}^{\varepsilon}
	\rVert_{{L}^{2}(\Omega)}
	\notag\\
	\lesssim&
	\lVert
	\nabla{\boldsymbol v}^{\varepsilon}
	\rVert^{\frac32}_{{L}^{2}(\Omega)}
	\lVert\Delta {\boldsymbol v}^{\varepsilon}
	\rVert^{\frac12}_{{L}^{2}(\Omega)}
	+\lVert
	{\boldsymbol u}^{I}
	\rVert_{{W}^{1,\infty}(\Omega)} 
	\lVert \nabla {\boldsymbol v}^{\varepsilon}
	\rVert_{{L}^{2}(\Omega)}\notag\\
	&+
	{\varepsilon}^{-1}\lVert\boldsymbol v^{\varepsilon}
	\rVert_{{L}^{2}(\Omega)}
	+\lVert\boldsymbol \rho^{\varepsilon}
	\rVert_{{L}^{2}(\Omega)}
	+\lVert
	\partial_{t} {\boldsymbol v}^{\varepsilon}
	\rVert_{{L}^{2}(\Omega)}\notag\\
	\lesssim&
	\frac{\nu\varepsilon}{2}
	\lVert\Delta {\boldsymbol v}^{\varepsilon}
	\rVert_{{L}^{2}(\Omega)}
	+
	\varepsilon^{-1}(\lVert
	\nabla{\boldsymbol v}^{\varepsilon}
	\rVert^{3}_{{L}^{2}(\Omega)}
	+\lVert
	{\boldsymbol v}^{\varepsilon}
	\rVert_{{L}^{2}(\Omega)})\notag\\
	&
	+\lVert
	\partial_{t} {\boldsymbol v}^{\varepsilon}
	\rVert_{{L}^{2}(\Omega)}
	+\lVert
	\nabla{\boldsymbol v}^{\varepsilon}
	\rVert_{{L}^{2}(\Omega)}+\lVert\boldsymbol \rho^{\varepsilon}
	\rVert_{{L}^{2}(\Omega)}\,.
\end{align}
By simplifying equation \eqref{3.3.4} and taking the $L^\infty$ norm with respect to $t\in[0,T)~(T<\min(T^*,\pi/2))$ on both sides of the inequality, we can obtain
\begin{align*}\label{3.3.5}
	\lVert\Delta{\boldsymbol v}^{\varepsilon}\rVert_{L^\infty([0,T),{L}^2(\Omega))}
	\lesssim&
	\varepsilon^{-2}(\lVert
	\nabla{\boldsymbol v}^{\varepsilon}
	\rVert^{3}_{L^\infty([0,T),{L}^2(\Omega))}
	+\lVert
	{\boldsymbol v}^{\varepsilon}
	\rVert_{L^\infty([0,T),{L}^2(\Omega))})\notag\\
	&
	+\varepsilon^{-1}(\lVert
	\partial_{t} {\boldsymbol v}^{\varepsilon}
	\rVert_{L^\infty([0,T),{L}^2(\Omega))}+
	\lVert
	\nabla{\boldsymbol v}^{\varepsilon}
	\rVert_{L^\infty([0,T),{L}^2(\Omega))})+\varepsilon
	\lesssim1
	\,.
\end{align*}
Given that 
\begin{equation}
	\lVert{\boldsymbol v}^{\varepsilon}\rVert^2_{L^\infty([0,T)\times\Omega)}
	\leq
	\lVert{\boldsymbol v}^{\varepsilon}\rVert^{\frac12}_{L^\infty([0,T),{L}^2(\Omega))}
	\lVert\Delta{\boldsymbol v}^{\varepsilon}\rVert^{\frac32}_{L^\infty([0,T),{L}^2(\Omega))}
	\lesssim\varepsilon\,,
\end{equation}
it follows that \( {\boldsymbol v}^{\varepsilon} \) exhibits convergence within the \( L^\infty \) framework.

\backmatter

%\bmhead{Supplementary information}
%
%If your article has accompanying supplementary file/s please state so here. 
%
%Authors reporting data from electrophoretic gels and blots should supply the full unprocessed scans for key as part of their Supplementary information. This may be requested by the editorial team/s if it is missing.
%
%Please refer to Journal-level guidance for any specific requirements.

\bmhead{Acknowledgements}
This work was partially supported by the National Key R\&D Program of China (No. 2020YFA072500), the NSFC (No. 12161084), the Natural Science Foundation of Xinjiang, P.R. China
(No. 2022D01E42), and the Innovation Project of Excellent Doctoral Students of Xinjiang University (No. XJU2024BS038).

\section*{Declarations}
\begin{itemize}
%\item Funding
%\item Conflict of interest/Competing interests (check journal-specific guidelines for which heading to use)
%\item Ethics approval and consent to participate
%\item Consent for publication
%\item Data availability 
%\item Materials availability
%\item Code availability 
%\item Author contribution
\item Competing interests: On behalf of all authors, the corresponding author states that there are no Conflict of interest.
\item Data availability: No data, models or code were generated or used during the study.
\end{itemize}

%\noindent
%If any of the sections are not relevant to your manuscript, please include the heading and write `Not applicable' for that section. 

%%===================================================%%
%% For presentation purpose, we have included        %%
%% \bigskip command. Please ignore this.             %%
%%===================================================%%
%\bigskip
%\begin{flushleft}%
%Editorial Policies for:
%
%\bigskip\noindent
%Springer journals and proceedings: \url{https://www.springer.com/gp/editorial-policies}
%
%\bigskip\noindent
%Nature Portfolio journals: \url{https://www.nature.com/nature-research/editorial-policies}
%
%\bigskip\noindent
%\textit{Scientific Reports}: \url{https://www.nature.com/srep/journal-policies/editorial-policies}
%
%\bigskip\noindent
%BMC journals: \url{https://www.biomedcentral.com/getpublished/editorial-policies}
%\end{flushleft}

\begin{appendices}
%
%\section{Section title of first appendix}\label{secA1}
%
%An appendix contains supplementary information that is not an essential part of the text itself but which may be helpful in providing a more comprehensive understanding of the research problem or it is information that is too cumbersome to be included in the body of the paper.

%%=============================================%%
%% For submissions to Nature Portfolio Journals %%
%% please use the heading ``Extended Data''.   %%
%%=============================================%%

%%=============================================================%%
%% Sample for another appendix section			       %%
%%=============================================================%%

%% \section{Example of another appendix section}\label{secA2}%
%% Appendices may be used for helpful, supporting or essential material that would otherwise 
%% clutter, break up or be distracting to the text. Appendices can consist of sections, figures, 
%% tables and equations etc.

\end{appendices}

%%===========================================================================================%%
%% If you are submitting to one of the Nature Portfolio journals, using the eJP submission   %%
%% system, please include the references within the manuscript file itself. You may do this  %%
%% by copying the reference list from your .bbl file, paste it into the main manuscript .tex %%
%% file, and delete the associated \verb+\bibliography+ commands.                            %%
%%===========================================================================================%%

%\bibliography{sn-bibliography}% common bib file
%% if required, the content of .bbl file can be included here once bbl is generated
%%\input sn-article.bbl

\end{document}